\documentclass[a4paper]{amsart}

\usepackage[latin1]{inputenc}  
\usepackage[T1]{fontenc}       

\usepackage{amsmath}
\usepackage{amsthm}
\usepackage{amssymb}
\usepackage{amscd}
\usepackage{amsfonts}
\usepackage{textcomp}
\usepackage{tikz, subfigure, xcolor} 
\usepackage{dsfont}

\usepackage{dsfont}

\newtheorem{theorem}{Theorem}[section]
\newtheorem{lemma}[theorem]{Lemma}
\newtheorem{proposition}[theorem]{Proposition}
\newtheorem{corollary}[theorem]{Corollary}
\newtheorem{remark}[theorem]{Remark}

\newtheorem{example}[theorem]{Example}
\newtheorem{definition}[theorem]{Definition}

\providecommand{\abs}[1]{\lvert#1\rvert}

\providecommand{\norm}[1]{\lVert#1\rVert}

\providecommand{\Id}[1]{\mbox{Id}}
\providecommand{\deg}[1]{\mbox{deg}}
\providecommand{\diag}[1]{\mbox{diag}}
\providecommand{\Ran}{\mbox{Ran}}

\providecommand{\det}[1]{\mbox{det}}
\providecommand{\R}{\mathds{R}}
\providecommand{\C}{\mathds{C}}
\providecommand{\Z}[1]{\mathds{Z}}
\providecommand{\K}[1]{\mathds{K}}
\DeclareMathOperator{\Ie}{\mathcal{I}} 
\DeclareMathOperator{\Ee}{\mathcal{E}} 
\DeclareMathOperator{\Qe}{\mathcal{Q}} 
\DeclareMathOperator{\We}{\mathcal{W}} 
\DeclareMathOperator{\He}{\mathcal{H}} 
\DeclareMathOperator{\Ke}{\mathcal{K}}
\DeclareMathOperator{\De}{\mathcal{D}} 
\DeclareMathOperator{\Ge}{\mathcal{G}} 
\DeclareMathOperator{\Pe}{\mathcal{P}} 
\DeclareMathOperator{\Me}{\mathcal{M}} 
\DeclareMathOperator{\Be}{\mathcal{B}} 
\DeclareMathOperator{\cL}{\mathcal{L}} 
\DeclareMathOperator{\N}{\mathds{N}} 
\DeclareMathOperator{\grad}{grad}

\DeclareMathOperator{\Ker}{Ker}

\renewcommand{\Im}{{\ensuremath{\mathrm{Im\,}}}} 
\renewcommand{\Re}{{\ensuremath{\mathrm{Re\,}}}} 
\renewcommand{\div}{\mathrm{div}\,}

\DeclareMathOperator{\Cm}{\mathfrak{C}}
\DeclareMathOperator{\Jm}{\mathfrak{J}} 
\DeclareMathOperator{\jm}{\mathfrak{j}} 

\DeclareMathOperator{\au}{\underline{a}}
\DeclareMathOperator{\bu}{\underline{b}}
\DeclareMathOperator{\sgn}{\mbox{sgn}}
\DeclareMathOperator{\Dom}{\mbox{Dom}}

\author{Amru Hussein}
\address{A.~Hussein, FB 08 - Institut f\"{u}r Mathematik,
Johannes Gutenberg-Universit\"{a}t Mainz,
Staudinger Weg 9,
D-55099 Mainz,
Germany}
\email{hussein@mathematik.uni-mainz.de}

\title[Sign--indefinite operators on metric graphs]{Sign--indefinite second order differential operators on finite metric graphs.}
\subjclass[2010]{Primary 34B45, Secondary 47B25, 34L05, 35P20, 35P25, 81U15}
\keywords{Indefinite differential operators on metric graphs, self--adjointness, spectral theory, scattering theory}

\begin{document}

\begin{abstract}
The question of self--adjoint realizations of sign--indefinite second order differential operators is discussed in terms of a model problem. Operators of the type $-\frac{d}{dx} \sgn  (x) \frac{d}{dx}$ are generalized to finite, not necessarily compact, metric graphs. All self--adjoint realizations are parametrized using methods from extension theory. The spectral and scattering theory of the self--adjoint realizations are studied in detail. 
\end{abstract}
  \maketitle
\section{Introduction}

Differential expressions of the form 
\begin{eqnarray}\label{eq1IndefQG}
 -\div  A(\cdot) \grad
& \mbox{with}  & A(x)=\begin{cases} +1, & x\in \Omega_+,\\ 
											 -1, & x\in \Omega_-,
                    \end{cases} 
\end{eqnarray} 
for domains $\Omega=\Omega_+ \cup \Omega_-\subset \R^d$ appear in different contexts. The Poisson problem related to this operator is of physical relevance. It appears in the mathematical description of light propagation through regions with thresholds of  materials with negative and  positive refraction indices, see \cite{Bouchitte} and the references therein. The quasi--static limit of the Maxwell equation yields the mentioned Poisson problem. Another field of application is solid-state physics. In the effective mass approximation the effective mass tensor can be negative as well, see for example \cite{Allaire2005} and the references therein. This yields indefinite differential operators, which involve expressions of the form \eqref{eq1IndefQG} and in this context the Schr\"{o}dinger equation is considered.

An approach that has turned out to be very fruitful for the study of differential equations which involve the expression $-\div  A(\cdot) \grad$ with strictly positive coefficient matrices $A(\cdot)$, is to consider self--adjoint operators associated with these differential expression and their spectral resolution. The question is, if it is possible to find self--adjoint realizations also for sign--indefinite coefficients $A(\cdot)$. Unbounded operators with sign--changing coefficients in the highest order terms are not well studied. To make a starting point a new model problem is introduced: Laplace operators on finite metric graphs with changing sign. The aim of this work is to present generalizations of the operator
\begin{eqnarray}\label{eq2IndefQG}
-\frac{d}{dx} \sgn  (x) \frac{d}{dx} & \mbox{with}& \sgn (x)=\begin{cases} +1, & x \geq 0,\\ 
							-1, & x<0	
                    \end{cases}
\end{eqnarray}
to finite metric graphs. This initial problem can be tackled using extension theory. This approach allows straight forward generalizations from intervals to finite metric graphs. It yields a formal operator which admits the block operator matrix representation
\begin{eqnarray*}
\tau=\begin{bmatrix}
-\Delta & 0 \\ 0 & +\Delta
\end{bmatrix},
\end{eqnarray*}
where $-\Delta$ denotes the one dimensional Laplace operator, and all self--adjoint realisations are completely characterized. On finite metric graphs this gives a model, which on the one hand is still explicitly solvable, but on the other hand allows to describe more complicated geometries than the one dimensional case does, which deals only with intervals. A similar procedure has been performed already very successfully for the sign--definite one dimensional Laplacian. This has become famous under the name ``quantum graphs''. For further references on this topic see for example \cite{QG} and the references given therein. It turns out that there is a one--to--one correspondence between the self--adjoint realizations of the Laplacian on graphs and the self--adjoint realizations for the model--operator discussed here. 

The aim is to construct self--adjoint realizations of expressions of the form \eqref{eq1IndefQG} and \eqref{eq2IndefQG} in terms of extension theory and to develop a better understanding of the spectral and scattering properties of sign--indefinite differential operators. Some aspects of the problem are included in the general setting of Sturm-Liouville theory, but they have been considered to have no practical relevance, as remarked in \cite[Section 17.E]{WeidmannODE}. 

The motivation to consider expressions of the form \eqref{eq1IndefQG} in the present paper is caused mainly by the fast development in the field of so called metamaterials -- materials with negative optical refraction index. This has been initiated by  V.~G.~Veselago, see \cite{Veselago1968}, who was the first to consider -- hypothetically -- the effect of a negative refraction index. He had predicted that this feature would yield new effects including for example backward waves, an inverse law of refraction and an inverse Doppler effect. Today metamaterials can be constructed as artificial materials, whose optical properties are made useful for applications. An overview of recent developments is given in \cite{Pendry}, recall also the article \cite{Bouchitte} and the references given therein. 

Note that the model problem \eqref{eq2IndefQG} is part of a range of problems related to the ordinary differential expressions 
\begin{eqnarray*}
-\frac{d^2}{dx^2}, &   & -\frac{d}{dx} \sgn  (x) \frac{d}{dx}, \\ -\sgn (x) \frac{d^2}{dx^2} & \mbox{and} & -\sgn (x) \frac{d}{dx} \sgn  (x) \frac{d}{dx}.
\end{eqnarray*}
Each is exemplary for certain difficulties and methods used. Usually the operators written in the first line are considered in Hilbert spaces, whereas the operators in the second line are considered in Krein spaces, see for example \cite{Najman1995,Zettl}. The operator $-\sgn (x) \tfrac{d^2}{dx^2}$ has applications in the effective mass approximation too, compare \cite{Znojil}. 

This note is organized as follows: first the basic concepts are introduced and the notation is fixed. The subsequent section is devoted to the characterization of all self--adjoint boundary conditions. In short the corresponding question for Krein--space--self--adjointness is discussed as well. Section~\ref{secform} puts the question of self-adjoint extensions into the context of indefinite quadratic forms, and in Section~\ref{secext} the operators discussed are put into the general context of extension theory. Section \ref{secres} gives explicit formulae for eigenvalues, resonances and resolvents. In Section \ref{secscat} the scattering properties of the system are discussed. The wave operators are computed as well as the scattering matrix. The scattering matrix can be computed in certain cases in terms of a generalized star product.

The content of this note has developed in parallel to the work on the article \cite{VK12} which is in preparation, and it is part of the author's PhD thesis, see \cite[Chapter 4]{Ich}. Indefinite operators on graphs provide an illustration and can serve as a source of examples for indefinite operators of the form \eqref{eq1IndefQG} on manifolds and intervals involving more general coefficients $A(\cdot)$. 

\subsection*{Acknowledgements}
I would like to thank David Krej\v{c}i\v{r}\'{i}k for communicating the problem of sign--indefinite differential operators, its physical background, helpful discussions as well as remarks and especially for kind hospitality during my visit to the Doppler Institute in Prague in February 2012. I am grateful as well to Vadim Kostrykin and to Stephan Schmitz from whom I benefited in many ways during the still ongoing work on the article \cite{VK12}.

\section{Basic concepts}\label{sec2}

Instead of considering the operator given in \eqref{eq2IndefQG} in the Hilbert space $L^2(\R, dx)$, one considers in the equivalent space 
$$L^2([0,\infty), dx_1) \oplus L^2([0,\infty), dx_2) \quad (\cong L^2(\R, dx))$$ 
the formal differential operator 
$$\tau u= (+u_1^{\prime \prime}, -u_2^{\prime \prime}).$$ 
This allows to define a closed symmetric operator with equal deficiency indices $(2,2)$. Hence there exist self-adjoint extensions. The operator $\tau$ is generalized from intervals to finite metric graphs which delivers indefinite symmetric second order differential operators with finite deficiency indices. In the following the basic concepts are presented. The notation is mainly borrowed from the works of V.~Kostrykin and R.~Schrader on Laplacians on metric graphs, see for example \cite{VKRS1999}.

\subsection{Finite metric graphs} Here and in the following a graph is a $4$-tuple $$\Ge = \left( V, \Ie,\Ee, \partial \right),$$ where $V$ denotes the set of \textit{vertices}, $\Ie$ the set of \textit{internal edges} and $\Ee$ the set of \textit{external edges}, where the set $\Ee \cup \Ie$ is summed up in the notion \textit{edges}. The \textit{boundary map} $\partial$ assigns to each internal edge $i\in \Ie$ an ordered pair of vertices $\partial (i)= (v_1,v_2)\in V\times V$, where $v_1$ is called its \textit{initial vertex} and $v_2$ its \textit{terminal vertex}. Each external edge $e\in \Ee$ is mapped by $\partial$ onto a single, its initial, vertex. A graph is called finite if $\abs{V}+\abs{\Ie}+\abs{\Ee}<\infty$ and a finite graph is called \textit{compact} if $\Ee=\emptyset$ holds.

The graph $\Ge$ is endowed with the following metric structure. Each internal edge $i\in \Ie$ is associated with an interval $[0,a_i]$ with $a_i>0$, such that its initial vertex corresponds to $0$ and its terminal vertex to $a_i$. Each external edge $e\in \Ee$ is associated to the half line $[0,\infty)$, such that $\partial(e)$ corresponds to $0$. The numbers $a_i$ are called \textit{lengths} of the internal edges $i\in \Ie$ and they are summed up into the vector $\au=\{a_i\}_{i\in \Ie}\in \R_+^{\abs{\Ie}}$. The 2-tuple  $(\Ge,\au)$ consisting of a finite graph endowed with a metric structure defines a metric space called \textit{metric graph}. The metric on $(\Ge,\au)$ is defined via minimal path lengths on it. 

One distinguishes two types of edges
\begin{eqnarray*}\index{$\Ee_{\pm}$}\index{$\Ie_{\pm}$}
\mathcal{E}=\mathcal{E}_+ \dot\cup \mathcal{E}_- & \mbox{and} & \mathcal{I}=\mathcal{I}_+ \dot\cup \mathcal{I}_- ,
\end{eqnarray*}
where $\Ee_+$ ($\Ee_-$) and $\Ie_+$ ($\Ie_-$)  are called the \textit{positive (negative) external edges} and \textit{positive (negative) internal edges}, respectively. The set $\Ee_+ \cup \Ie_+$ ($\Ee_- \cup \Ie_-$) denotes the \textit{positive (negative) edges}. This partition defines two sub--graphs 
\begin{eqnarray*}\index{$\Ge_{\pm}$}
\Ge_-, \Ge_+ \subset \Ge, &\mbox{where}& \Ge_{\pm}=(V_{\pm} ,\Ie_{\pm} ,\Ee_{\pm} ,\partial\mid_{(\Ee_{\pm}\cup \Ie_{\pm})}) 
\end{eqnarray*}\index{$\au_{\pm}$}
with $V_{\pm}=\partial (\Ee_{\pm}\cup \Ie_{\pm})$. Let be $\au_{\pm}=\{a_i\}_{i\in \Ie_{\pm}}$, then $(\Ge_{\pm},\au_{\pm})$ are metric graphs. For brevity set 
\begin{eqnarray*}
n=2\abs{\Ie_+} + \abs{\Ee_+}&  \mbox{and} & m=2\abs{\Ie_-} + \abs{\Ee_-}.
\end{eqnarray*}
\subsection{Function spaces} 
Given a finite metric graph $(\Ge,\au)$ one considers the Hilbert space
\begin{eqnarray*}
\He \equiv \He(\Ee,\Ie,\au)= \He_{\Ee} \oplus \He_{\Ie}, & \He_{\Ee}= \displaystyle{\bigoplus_{e\in\Ee} \He_e,} & \He_{\Ie}= \bigoplus_{i\in\Ie} \He_i,
\end{eqnarray*}     
where $\He_j= L^2(I_j)$ with 
$$I_j= \begin{cases} [0,a_j], & \mbox{if} \ j\in \Ie, \\ [0,\infty), &\mbox{if} \ j\in \Ee.  \end{cases}$$
Again one has a decomposition into two subspaces
$$\He(\Ee,\Ie,\au) = \He(\Ee_+,\Ie_+,\au_+) \oplus \He(\Ee_-,\Ie_-,\au_-).$$

Any function $\psi \colon (\Ge,\au) \rightarrow \C$ can be written as 
\begin{eqnarray*}
\psi(x_j)= \psi_j(x), & \mbox{where} & \psi_j \colon I_j \rightarrow \C
\end{eqnarray*}
with
\begin{equation*}
I_j= \begin{cases} [0,a_j], & \mbox{if} \ j\in \Ie, \\ [0,\infty), &\mbox{if} \ j\in \Ee.  \end{cases}
\end{equation*}
One defines 
\begin{equation*}
\int_{\Ge} \psi := \sum_{i\in \Ie}  \int_{0}^{a_i}  \psi(x_i) \, dx_i +  \sum_{e\in \Ee}  \int_{0}^{\infty}  \psi(x_e)\, dx_e, 
\end{equation*}
where $dx_{i}$ and $dx_e$ refers to integration with respect to the Lebesgue measure on the intervals $[0,a_i]$ and $[0,\infty)$, respectively. 

By $\We_j$ with $j\in \Ee \cup \Ie$ one denotes the set of all $\psi_j\in \He_j$ such that its derivative $\psi_j$ is absolutely continuous and $\psi_j^{\prime}$ is square integrable. On the whole metric graph one considers the direct sum 
\begin{eqnarray*}
\We\equiv\We(\Ee,\Ie)= \bigoplus_{j\in \Ee \cup \Ie} \We_j.
\end{eqnarray*}

By $\De_j$ with $j\in \Ee \cup \Ie$ denote the set of all $\psi_j\in \He_j$ such that $\psi_j$ and its derivative $\psi_j^{\prime}$ are absolutely continuous and $\psi_j^{\prime\prime}$ is square integrable. Let $\De_j^0$ denote the set of all elements $\psi_j\in \De_j$ with
\begin{eqnarray*}
\psi_j(0)=0, &  \psi^{\prime}(0)=0, & \mbox{for}  \ j\in \Ee,    \\
\psi_j(0)=0, &  \psi^{\prime}(0)=0, & \psi_j(a_j)=0, \  \psi^{\prime}(a_j)=0, \ \mbox{for} \ j\in \Ie. 
\end{eqnarray*}
In the space $\He$ one defines the subspaces 
\begin{eqnarray*}
\De= \bigoplus_{j\in \Ee \cup \Ie} \De_j, & \mbox{and} &  \De^0= \bigoplus_{j\in \Ee \cup \Ie} \De_j^0.
\end{eqnarray*}

Note that these spaces clearly decouple the edges of the graph. The coupling is going to be implemented in terms of boundary conditions. With the scalar products $\langle \cdot,\cdot \rangle_{\We}$ defined by 
\begin{eqnarray*}
\langle \varphi,\psi \rangle_{\We}= \langle \varphi,\psi \rangle_{\He} + \langle \varphi^{\prime},\psi^{\prime} \rangle_{\He} 
\end{eqnarray*}
and $\langle \cdot,\cdot \rangle_{\De}$ given by 
\begin{eqnarray*}
\langle \varphi,\psi \rangle_{\De}= \langle \varphi,\psi \rangle_{\He} + \langle \varphi^{\prime},\psi^{\prime} \rangle_{\He} + \langle \varphi^{\prime\prime},\psi^{\prime\prime} \rangle_{\He}
\end{eqnarray*}
the spaces $\We$ and $\De$, respectively become themselves Hilbert spaces.

\subsection{Differential operator} 
in this article the formal differential operator  
\begin{align}\label{tau}
\left( \tau \psi\right)_j (x) = \begin{cases} -\frac{d^2}{dx}\psi_j(x), & j\in \Ee_+\cup \Ie_+, \  x\in I_j, \\ 
                                              +\frac{d^2}{dx}\psi_j(x), & j\in \Ee_-\cup \Ie_-, \  x\in I_j
                                \end{cases}
\end{align}
is considered. That is $\tau$ acts as $-\Delta$ on $(\Ge_+,\au_+)$ and as $+\Delta$ on $(\Ge_-,\au_-)$, where  
\begin{align}\label{Delta}
\left( \Delta \psi\right)_j (x) = \frac{d^2}{dx}\psi_j(x), && j\in \Ee\cup \Ie, \  x\in I_j. 
\end{align}
To emphasise the importance of the domains and to make the choice of the domains transparent, the notation is used to write a differential operator as 2-tuple, consisting of a differential expression and a domain. The operator 
$$T^{\min}=(\tau, \De_0)$$ 
is closed, densely defined and symmetric. Its adjoint in the Hilbert space $\He$ is the operator $T^{\max}=\left(T^{\min}\right)^*$,
$$T^{\max}=(\tau, \De).$$ 
Since the operator $T^{\min}$ has equal deficiency indices 
\begin{eqnarray*}
(d_+,d_-)=(d,d), & \mbox{where }  d=n+m,
\end{eqnarray*}\index{$T^{\min}$, $T^{\max}$}
there are self-adjoint realizations $T=T^*$ of $\tau$ with 
$$T^{\min} \subset T \subset T^{\max},$$ 
and all these realizations can be parametrized using boundary conditions.

\subsection{Space of boundary values}
On the lines of \cite{VKRS1999} one introduces the auxiliary Hilbert spaces 
\begin{equation*}\index{$\Ke_{\pm}$}
\Ke_{\pm} \equiv \Ke(\Ee_{\pm}, \Ie_{\pm}) = \Ke_{\Ee_{\pm}}  \oplus \Ke_{\Ie_{\pm}}^- \oplus \Ke_{\Ie_{\pm}}^+
\end{equation*}
with $\Ke_{\Ee_{\pm}} \cong \C^{\abs{\Ee_{\pm}}}$ and $\Ke_{\Ie_{\pm}}^{(\pm)} \cong \C^{\abs{\Ie_{\pm}}}$. Eventually one defines  
\begin{eqnarray*}
\Ke= \Ke_+ \oplus \Ke_-&  \mbox{and} & \Ke^2=\Ke \oplus \Ke.
\end{eqnarray*}
For $\psi\in \De$ one defines the vectors of boundary values
\begin{eqnarray*}\index{$\underline{\psi}_{\pm}$, $\underline{\psi}_{\pm}^{\prime}$}
\underline{\psi}_+=\begin{bmatrix} \{\psi_j(0)\}_{j\in \Ee_+} \\ \{\psi_j(0)\}_{j\in \Ie_+}\\ \{\psi_j(a_j)\}_{j\in \Ie_+}\end{bmatrix}, &
\underline{\psi}_-=\begin{bmatrix} \{\psi_j(0)\}_{j\in \Ee_-} \\ \{\psi_j(0)\}_{j\in \Ie_-}\\ \{\psi_j(a_j)\}_{j\in \Ie_-}\end{bmatrix}, &
\underline{\psi}=\begin{bmatrix} \underline{\psi}_+ \\ \underline{\psi}_-\end{bmatrix}, \\
\underline{\psi^{\prime}}_+= \begin{bmatrix} \{\psi_j^{\prime}(0)\}_{j\in \Ee_+} \\ \{\psi_j^{\prime}(0)\}_{j\in \Ie_+}\\ \{-\psi_j^{\prime}(a_j)\}_{j\in \Ie_+} \end{bmatrix}, &
\underline{\psi^{\prime}}_-= \begin{bmatrix} \{\psi_j^{\prime}(0)\}_{j\in \Ee_-} \\ \{\psi_j^{\prime}(0)\}_{j\in \Ie_-}\\ \{-\psi_j^{\prime}(a_j)\}_{j\in \Ie_-} \end{bmatrix}, &
\underline{\psi^{\prime}}= \begin{bmatrix} \underline{\psi^{\prime}}_+ \\ \underline{\psi^{\prime}}_- \end{bmatrix}.
\end{eqnarray*}
The vector $[\psi]= \underline{\psi} \oplus \underline{\psi^{\prime}}$ is an element of $\Ke^2 = \Ke\oplus \Ke$.

\section{Self--adjoint realizations}\label{secsa}

To measure how far the maximal operator $T^{\max}$ is away from being self-adjoint one evaluates the form $\jm_{(\Ge_+,\Ge_-)}$ defined by 
\begin{eqnarray*}
 \jm_{(\Ge_+,\Ge_-)}(\psi,\varphi):= \langle T^{\max} \psi ,\varphi \rangle - \langle  \psi , T^{\max} \varphi\rangle, & \mbox{for } \psi,\varphi\in \De,
\end{eqnarray*}
compare for example \cite[Section 3]{Harmer}. Note that this form can be represented by a matrix in the finite dimensional Hilbert space $\Ke^2$, that is 
$$\jm_{(\Ge_+,\Ge_-)}(\psi,\varphi)= \langle [\psi],\Jm_{(\Ge_+,\Ge_-)}[\varphi]\rangle_{\Ke^2},$$ where the matrix
\begin{align*}
\Jm_{(\Ge_+,\Ge_-)}= \begin{bmatrix} 0 &  0 &  -\mathds{1}_n & 0\\ 0 & 0 &  0 & \mathds{1}_{m}  \\ \mathds{1}_n & 0 & 0 & 0  \\ 0 & -\mathds{1}_{m} & 0 & 0 \end{bmatrix}, 
\end{align*}
is written with respect to the decomposition of $\Ke^2=\Ke_+\oplus \Ke_- \oplus \Ke_+ \oplus \Ke_-$; $\mathds{1}_n$ denotes the identity operator in the $n$-dimensional space $\Ke_+$ and $\mathds{1}_m$ the identity in the $m$-dimensional space $\Ke_-$, respectively. By an abuse of notation one denotes by $\jm_{(\Ge_+,\Ge_-)}$ also the sesquilinear form defined by $\Jm_{(\Ge_+,\Ge_-)}$ on $\Ke^2$. Notice that 
\begin{eqnarray*}
\Jm_{(\Ge_+,\Ge_-)}^*=-\Jm_{(\Ge_+,\Ge_-)} & \mbox{and} & \Jm_{(\Ge_+,\Ge_-)}^2= -\mathds{1}_{\Ke^2}
\end{eqnarray*}
holds and therefore $\Jm_{(\Ge_+,\Ge_-)}$ is a Hermitian symplectic matrix in $\Ke^2$. The relation of $\Jm_{(\Ge_+,\Ge_-)}$ to the standard Hermitian symplectic matrix 
\begin{align*}
\Jm_{\Ge}= \begin{bmatrix} 0 &  0 &  -\mathds{1}_n & 0\\ 0 & 0 &  0 & -\mathds{1}_{m}  \\ \mathds{1}_n & 0 & 0 & 0  \\ 0 & \mathds{1}_{m} & 0 & 0 \end{bmatrix}, 
\end{align*}
is provided by the similarity relation
\begin{eqnarray}\label{lemmasabc}
\Jm_{\Ge}=H_{n,m}\Jm_{(\Ge_+,\Ge_-)}H_{n,m}, & \mbox{where } H_{n,m}=\begin{bmatrix} \mathds{1}_n &  0 &  0 & 0\\ 0 & \mathds{1}_{m} & 0 & 0  \\ 0 & 0 & \mathds{1}_n & 0  \\ 0 & 0 & 0 & -\mathds{1}_{m}   \end{bmatrix}.
\end{eqnarray}
A subspace $\Me\subset\Ke^2$ is called \textit{maximal isotropic with respect to $\jm_{(\Ge_+,\Ge_-)}$} if 
\begin{eqnarray*}
\jm_{(\Ge_+,\Ge_-)}(\xi_1,\xi_2) = \left\langle \xi_1 , \Jm_{\Ge_+,\Ge_-} \xi_2 \right\rangle=0 & \mbox{for all} & \xi_1,\xi_2\in \Me
\end{eqnarray*}
and if there are no proper subspaces $\Me\subsetneq\Me^{\prime}$ having this property. Any subspace $\Me\subset \Ke^2$ with $\dim \Me\geq d$, can be parametrized as follows. Let $A$ and $B$ be linear maps in $\Ke$. By $(A, \, B)$ one denotes the linear map from $\Ke^2=\Ke \oplus \Ke$ to $\Ke$ defined by $(A, \, B) (\eta_1 \oplus \eta_2) = A\eta_1 + B \eta_2,$ where $\eta_1,\eta_2\in\Ke$. One sets 
\begin{eqnarray*}
\Me=\Me(A,B), & \mbox{where} & \Me(A,B):=\Ker (A, \, B).
\end{eqnarray*}
From \eqref{lemmasabc} one deduces a one--to--one correspondence between subspaces in $\Ke^2$ which are maximal isotropic with respect to the standard Hermitian symplectic form, which is well studied, and subspaces in $\Ke^2$ that are maximal isotropic with respect to the Hermitian symplectic form $\jm_{(\Ge_+,\Ge_-)}$ occurring here.

\begin{theorem}\label{sabc}
All self-adjoint extensions of $T^{\min}$ are uniquely determined by subspaces $\Me\subset \Ke^2$, which are maximal isotropic with respect to the Hermitian symplectic form $\jm_{(\Ge_+,\Ge_-)}$. All such maximal isotropic subspaces are given by $\Me=\Me(A,B)$, where $A$ and $B$ are linear maps in $\Ke$, which satisfy the two conditions
\begin{enumerate}\index{$J_{n,m}$}
\item $\mbox{Rank} \, (A, \, B)= n+m$,  that is the rank is maximal and 
\item $B J_{n,m} A^{\ast}=A J_{n,m} B^{\ast}$, where $J_{n,m}:=\begin{bmatrix}  \mathds{1}_n &  0 \\ 0 & -\mathds{1}_{m} \end{bmatrix} $ defines a map in \\ $\Ke=\Ke_+\oplus \Ke_-$.
\end{enumerate}          
Any self--adjoint extension of $T^{\min}$ is given by $T(A,B)=(\tau,\Dom (T(A,B)))$ with
\index{$T(A,B)$}
$$\Dom (T(A,B))=\left\{ \psi\in \De \mid A\underline{\psi}+ B\underline{\psi^{\prime}} =0 \right\},$$ 
where $A,B$ satisfy both conditions (1) and (2). 
\end{theorem}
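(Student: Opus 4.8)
The plan is to split the proof into an abstract part and a computational reduction. The abstract part identifies the self-adjoint extensions of $T^{\min}$ with the subspaces of $\Ke^2$ that are maximal isotropic with respect to $\jm_{(\Ge_+,\Ge_-)}$. The reduction then transports, via the similarity \eqref{lemmasabc}, the classical parametrization of maximal isotropic subspaces for the standard Hermitian symplectic form $\jm_{\Ge}$ (Kostrykin--Schrader, \cite{VKRS1999}) back to $\jm_{(\Ge_+,\Ge_-)}$, where conditions (1) and (2) appear.

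For the abstract part I would first note that the boundary map $[\,\cdot\,]\colon\De\to\Ke^2$, $\psi\mapsto\underline{\psi}\oplus\underline{\psi^{\prime}}$, is surjective with kernel $\De^0$: surjectivity is routine, since on each edge the value and first derivative at an endpoint can be prescribed by a smooth, compactly supported function having the desired $1$-jet there, and the kernel description is immediate from the definition of $\De^0$. Consequently $[\,\cdot\,]$ sets up a bijection between the operators $T$ with $T^{\min}\subset T\subset T^{\max}$ and the subspaces $\Me\subset\Ke^2$, where $\Me$ is the image under $[\,\cdot\,]$ of the domain of $T$. The Green-type identity $\jm_{(\Ge_+,\Ge_-)}(\psi,\varphi)=\langle[\psi],\Jm_{(\Ge_+,\Ge_-)}[\varphi]\rangle_{\Ke^2}$ recorded above then shows that $T\subset T^{*}$ precisely when $\Me$ is isotropic; and, since $T^{\max}=(T^{\min})^{*}$, the domain of $T^{*}$ corresponds to the $\jm_{(\Ge_+,\Ge_-)}$-orthogonal complement $\Me^{\perp}$ of $\Me$ in $\Ke^2$. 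Hence $T=T^{*}$ iff $\Me=\Me^{\perp}$, that is iff $\Me$ is maximal isotropic (nondegeneracy $\Jm_{(\Ge_+,\Ge_-)}^{2}=-\mathds{1}_{\Ke^2}$ forces $\dim\Me=\tfrac12\dim\Ke^2=n+m$). Because every self-adjoint extension of $T^{\min}$ automatically satisfies $T^{\min}\subset T=T^{*}\subset(T^{\min})^{*}=T^{\max}$, this settles the first assertion; compare \cite{Harmer,VKRS1999}.

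For the reduction, $H_{n,m}$ is a self-adjoint involution on $\Ke^2$, so from $\Jm_{\Ge}=H_{n,m}\Jm_{(\Ge_+,\Ge_-)}H_{n,m}$ one obtains $\langle\xi_1,\Jm_{(\Ge_+,\Ge_-)}\xi_2\rangle=\langle H_{n,m}\xi_1,\Jm_{\Ge}H_{n,m}\xi_2\rangle$ for all $\xi_1,\xi_2\in\Ke^2$; hence $\Me$ is maximal isotropic for $\jm_{(\Ge_+,\Ge_-)}$ iff $H_{n,m}\Me$ is maximal isotropic for $\jm_{\Ge}$. For the standard form I would invoke the well-known description: a subspace of $\Ke^2$ is maximal isotropic for $\jm_{\Ge}$ iff it equals $\Ker(A^{\prime},\,B^{\prime})$ for linear maps $A^{\prime},B^{\prime}$ in $\Ke$ with $\Rk(A^{\prime},\,B^{\prime})=n+m$ and $A^{\prime}(B^{\prime})^{*}=B^{\prime}(A^{\prime})^{*}$. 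With respect to $\Ke^2=\Ke\oplus\Ke$ the operator $H_{n,m}$ is the identity on the first copy of $\Ke$ and equals $J_{n,m}$ on the second; writing $\zeta=\zeta_1\oplus\zeta_2=H_{n,m}(\eta_1\oplus\eta_2)$ gives $\eta_1=\zeta_1$ and $\eta_2=J_{n,m}\zeta_2$ (since $J_{n,m}^{2}=\mathds{1}$), so $A\eta_1+B\eta_2=A\zeta_1+BJ_{n,m}\zeta_2$ and therefore $H_{n,m}\Me(A,B)=\Me(A,\,BJ_{n,m})$. As $J_{n,m}$ is invertible and self-adjoint, $\Rk(A,\,BJ_{n,m})=\Rk(A,\,B)$, and $A(BJ_{n,m})^{*}=(BJ_{n,m})A^{*}$ is the same as $AJ_{n,m}B^{*}=BJ_{n,m}A^{*}$; thus the two conditions characterizing $H_{n,m}\Me(A,B)$ as maximal isotropic for $\jm_{\Ge}$ are exactly (1) and (2). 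Finally, any maximal isotropic $\Me$, being an $(n+m)$-dimensional subspace of the $2(n+m)$-dimensional space $\Ke^2$, is the kernel of a surjective linear map $\Ke^2\to\Ke$, which one names $(A,\,B)$; so every such $\Me$ is $\Me(A,B)$ with $(A,B)$ satisfying (1) and (2). Combining this with the abstract part produces the family $T(A,B)=(\tau,\{\psi\in\De:A\underline{\psi}+B\underline{\psi^{\prime}}=0\})$ and shows it exhausts all self-adjoint extensions.

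The steps are not deep once the Green identity and the similarity \eqref{lemmasabc} are in hand --- both already available --- but the point requiring care is the bookkeeping with the block decomposition, so that $J_{n,m}$ lands on the correct side and with the correct sign in condition (2); this is precisely where the involutivity and self-adjointness of $J_{n,m}$, equivalently of $H_{n,m}$, are used. The one genuinely structural point that should not be rushed is the identification of the domain of $T^{*}$ with $\Me^{\perp}$ in the abstract part, since that is where the passage from ``isotropic'' to ``maximal isotropic'' is decided; this is the standard symplectic / boundary-triple mechanism and uses only $T^{\max}=(T^{\min})^{*}$ and the Green identity.
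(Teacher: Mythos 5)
Your proposal is correct and follows essentially the same route as the paper: the identification of self--adjoint extensions of $T^{\min}$ with maximal isotropic subspaces for $\jm_{(\Ge_+,\Ge_-)}$ (which the paper simply delegates to \cite{Harmer}, while you sketch the standard Green--identity/boundary--map argument), followed by transporting the Kostrykin--Schrader parametrization for the standard form through the involution $H_{n,m}$ of \eqref{lemmasabc}, which yields $A=A_0$, $B=B_0J_{n,m}$ and hence conditions (1) and (2). The only difference is cosmetic: you spell out both directions of the equivalence and the rank/adjoint bookkeeping explicitly, whereas the paper records only the passage from a given maximal isotropic $\Me$ to the pair $(A,B)$.
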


\begin{remark}\ \ \ \ \ \ \ \  \
\begin{enumerate}
\item The condition $A\underline{\psi}+ B\underline{\psi^{\prime}} =0$ is equivalent to $[\psi]\in \Me(A,B)$.
\item The operator $-T(A,B)$ is the operator with positive and negative edges interchanged, but the coupling is implemented by the same boundary conditions at the vertices. 
\item For $\Ge_{\mp}=\emptyset$ the operator $T(A,B)$ is the self-adjoint operator plus or minus Laplace $\mp\Delta(A,B)$ on $\Ge_{\pm}$, where $\Delta(A,B)$ denotes the realization of $\Delta$ with domain $\Dom(\Delta(A,B))=\{\psi\in \De \mid A\underline{\psi} + B \underline{\psi}^{\prime}=0 \}$, compare for example \cite{VKRS1999}.
\item Observe that the second condition in Theorem~\ref{sabc} can be re--formulated in terms of relations in finite dimensional Krein spaces. Consider the space $\Ke$ equipped with the indefinite inner product $[\cdot,\cdot]=\langle \cdot ,J_{n,m}\cdot \rangle_{\Ke}$. Then $B J_{n,m} A^{\ast}=B A^{[*]}$, where $\cdot^{[*]}$ denotes the Krein space adjoint.
\end{enumerate}
\end{remark}

\begin{proof}[Proof of Theorem~\ref{sabc}]
It is well known that all self-adjoint extensions of $T^{\min}$, which is a closed symmetric operator with finite and equal deficency indices, is determined by subspaces $\Me$ of $\Ke^2$, which are maximal isotropic with respect to the Hermitian symplectic form $\jm_{(\Ge_+,\Ge_-)}$, see for example \cite{Harmer} and the references therein.  

To prove the parametrization let now $\Me$ be a maximal isotropic subspace of $\Ke^2$ with respect to the  Hermitian symplectic form defined by $\Jm_{(\Ge_+,\Ge_-)}$. Then by \eqref{lemmasabc} the space $H_{n,m}\mathcal{M}$ is a maximal isotropic subspace of $\Ke^2$ with respect to the standard Hermitian symplectic form defined by $\Jm_{\Ge}$. The space $H_{n,m}\mathcal{M}$ can be parametrized by matrices $A_0,B_0$, see for example \cite{VKRS1999}, which satisfy the two conditions
\begin{enumerate}
\item $\mbox{Rank} (A_{0}, \, B_{0})=n+m$ and 
\item $A_{0}B_{0}^{*}=B_{0}A_{0}^{*}$.
\end{enumerate}  
Hence the original space $\Me$ can be represented as  
$$\mathcal{M}=H_{n,m} \Ker \left(A_{0}, \, B_{0} \right) = \Ker \left(A_0, \, B_0 \begin{bmatrix} \mathds{1}_n &  0 \\ 0 & -\mathds{1}_{m} \end{bmatrix} \right).$$ 
The matrices 
\begin{eqnarray*}
A= A_0 & \mbox{and} & B= B_0 \begin{bmatrix} \mathds{1}_n &  0 \\ 0 & -\mathds{1}_{m} \end{bmatrix}
\end{eqnarray*}
satisfy the two conditions formulated in Theorem~\ref{sabc}. 
\end{proof}

\begin{remark}\label{UVK}
The proof of Theorem~\ref{sabc} shows that if one has a self--adjoint Laplace operator $-\Delta(A_0,B_0)$ parametrized by matrices $A_0$ and $B_0$, then $T(A,B)$ is self--adjoint with $A=A_0$ and $B=B_0 J_{n,m}$. In particular there is a unique parametrization in terms of unitary operators in $\Ke$. For Laplace operators it is known that for self--adjoint boundary conditions defined by operators $A_0$ and $B_0$ in $\Ke$ there exists a unique unitary map $U$ in $\Ke$ such that equivalent boundary conditions  are defined by 
\begin{eqnarray*}
A^{\prime}_0 = -\frac{1}{2}\left(U- \mathds{1} \right) &\mbox{and} & B^{\prime}_0 = \frac{1}{2i}\left(U+ \mathds{1} \right), 
\end{eqnarray*} 
see for example \cite[Section 3]{Harmer}. Hence for $T(A,B)$ self--adjoint one can give an equivalent parametrization in terms of the same unitary matrix $U$ with
\begin{eqnarray*}
A^{\prime} = -\frac{1}{2}\left(U- \mathds{1} \right) &\mbox{and} & B^{\prime} = \frac{1}{2i}\left(U+ \mathds{1} \right)J_{n,m}. 
\end{eqnarray*} 
\end{remark}

The parametrization by matrices $A$ and $B$ in Theorem~\ref{sabc} is not unique, since two operators $T(A^{\prime},B^{\prime})$ and $T(A,B)$ are equal if and only if the corresponding maximal isotropic subspaces $\Me(A^{\prime},B^{\prime})$ and $\Me(A,B)$ agree. Using the result from \cite[Theorem 6]{PKQG1}, for self-adjoint Laplacians on finite metric graphs one obtains a unique parametrization.

\begin{corollary}\label{PL}
Let $\Me(A,B)\subset \Ke^2$ be maximal isotropic with respect to the Hermitian symplectic form $\jm_{(\Ge_+,\Ge_-)}$. Then there exists a unique orthogonal projection $P$ and a unique Hermitian operator $L$ with $P^{\perp}L P^{\perp}=L$, where $P^{\perp}=\mathds{1}-P$,  such that with
\begin{align*}
A^{\prime}  = L + P, && B^{\prime} = P^{\perp} \begin{bmatrix}  \mathds{1}_n &  0 \\ 0 & -\mathds{1}_{m} \end{bmatrix}
\end{align*}
one has $\Me(A,B)= \Me(A^{\prime},B^{\prime})$, where $P$ is the orthogonal projector onto \\ $\Ker B J_{n,m}\subset \Ke$. 
\end{corollary}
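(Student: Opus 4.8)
The plan is to transport the unique parametrization of self-adjoint Laplacians on finite metric graphs from \cite[Theorem 6]{PKQG1} through the similarity \eqref{lemmasabc}, in the same way that the proof of Theorem~\ref{sabc} transported the (non-unique) parametrization by $A$ and $B$. The underlying elementary fact is that, under the identification $\Ke^2=\Ke\oplus\Ke$ with $\Ke=\Ke_+\oplus\Ke_-$, the matrix $H_{n,m}$ acts as $\mathds{1}_{\Ke}\oplus J_{n,m}$, so that, since $J_{n,m}^*=J_{n,m}=J_{n,m}^{-1}$, for arbitrary linear maps $X,Y$ in $\Ke$ one has
\[
H_{n,m}\,\Ker(X, \, Y)=\Ker\bigl(X, \, Y J_{n,m}\bigr),
\]
i.e.\ $H_{n,m}$ is an involution of $\Ke^2$ carrying $\Me(X,Y)$ onto $\Me(X,Y J_{n,m})$. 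Applying this with $X=A$, $Y=B$ shows $H_{n,m}\Me(A,B)=\Me(A_0,B_0)$ with $A_0=A$ and $B_0=B J_{n,m}$, and by \eqref{lemmasabc} and Theorem~\ref{sabc} this subspace is maximal isotropic for the \emph{standard} Hermitian symplectic form; equivalently $(A_0,B_0)$ satisfies $\mbox{Rank}(A_0, \, B_0)=n+m$ and $A_0 B_0^*=B_0 A_0^*$, since right multiplication of $B$ by the invertible $J_{n,m}$ affects neither the rank of $(A, \, B)$ nor — because $J_{n,m}^*=J_{n,m}$ — condition~(2) of Theorem~\ref{sabc}.

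Now \cite[Theorem 6]{PKQG1} applies to $\Me(A_0,B_0)$: it provides a unique orthogonal projection $P$, namely the orthogonal projector onto $\Ker B_0=\Ker(B J_{n,m})$, and a unique Hermitian $L$ with $P^{\perp}LP^{\perp}=L$, such that $\Me(A_0,B_0)=\Me(L+P, \, P^{\perp})$. Transporting back through the involution and the displayed identity, now with $X=L+P$ and $Y=P^{\perp}$, yields
\[
\Me(A,B)=H_{n,m}\Me(A_0,B_0)=H_{n,m}\Me(L+P, \, P^{\perp})=\Me\bigl(L+P, \ P^{\perp}J_{n,m}\bigr),
\]
which is precisely the claim with $A'=L+P$ and $B'=P^{\perp}J_{n,m}=P^{\perp}\begin{bmatrix}\mathds{1}_n & 0\\ 0 & -\mathds{1}_m\end{bmatrix}$; the relation $P^{\perp}LP^{\perp}=L$ carries over verbatim, and $P$ is, as required, the orthogonal projector onto $\Ker B J_{n,m}$.

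For uniqueness, if two pairs $(P,L)$ and $(\widetilde P,\widetilde L)$ both realize $\Me(A,B)$ in the stated form, then applying $H_{n,m}$ and the intertwining identity once more gives $\Me(L+P, \, P^{\perp})=\Me(\widetilde L+\widetilde P, \, \widetilde P^{\perp})$, so the uniqueness part of \cite[Theorem 6]{PKQG1} forces $(P,L)=(\widetilde P,\widetilde L)$. I do not expect a substantial obstacle: the only point that needs care is the bookkeeping around the factor $J_{n,m}$ — checking that it commutes with forming kernels exactly as in the displayed identity and that it preserves the rank and the symmetry condition of Theorem~\ref{sabc}, which is immediate from $J_{n,m}=J_{n,m}^*=J_{n,m}^{-1}$ — together with verifying that \cite[Theorem 6]{PKQG1} is genuinely being invoked for a pair $(A_0,B_0)$ meeting its hypotheses.
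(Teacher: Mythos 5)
Your proof is correct and takes essentially the same route the paper intends: Corollary~\ref{PL} is stated there without a written proof precisely because it follows by transporting the unique parametrization of \cite[Theorem 6]{PKQG1} through the correspondence $B_0\mapsto B_0J_{n,m}$ already used in the proof of Theorem~\ref{sabc} and Remark~\ref{UVK}. Your explicit verification of the identity $H_{n,m}\Ker(X,\,Y)=\Ker(X,\,YJ_{n,m})$, of the preservation of the rank and symmetry conditions, and of the uniqueness transfer simply makes that omitted argument precise.
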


\begin{example}\label{ex1IndQG}
Let $\Ge$ be the star graph consisting of two external edges, $\Ee_+=\{1\}$ and $\Ee_-=\{2\}$, glued together at one single vertex $\partial(1)=\partial(2)$, and consider 
\begin{eqnarray*}
A=\begin{bmatrix} -1 & 1 \\ 0 & 0\end{bmatrix}&\mbox{and} & B=\begin{bmatrix} 0 & 0 \\ -1 & 1\end{bmatrix}.
\end{eqnarray*}
Since these matrices satisfy the two conditions formulated in Theorem~\ref{sabc} the operator $T(A,B)$ defined on $\Ge$ is self-adjoint. One can identify the metric graph $\Ge$ with the real line, and under this identification the operator $T(A,B)$ corresponds to the operator $-\tfrac{d}{dx} \sgn  (x) \tfrac{d}{dx}$ defined in $L^2(\R)$ on its natural domain. 
\end{example}

From the self--adjointness of $T(A,B)$ one deduces that the eigenvalue equation  can have square integrable solutions only for real spectral parameter. The self-adjoint operator $T(A,B)$ can be interpreted as a realization of a system of the type 
\begin{eqnarray*}
\tau=\begin{bmatrix}
-\Delta & 0 \\ 0 & +\Delta
\end{bmatrix}
\end{eqnarray*}
which is given with respect to the decomposition $\Ee \cup \Ie=\left(\Ee_+ \cup \Ie_+ \right)\cup \left(\Ee_- \cup \Ie_- \right)$. The operator $T(A,B)$ generates a unitary group $U(t)=e^{-it T(A,B)}$ which provides solutions for the initial value problem 
\begin{equation*}
\left\{ \begin{array}{ll}
         \left( i \frac{\partial}{\partial t } - T(A,B) \right) u(x,t)=0, \qquad t\in \mathbb R,\\ 
        u(\cdot,0)=u_0. \end{array} \right. 
\end{equation*}
Compared to the time--dependent behaviour of groups generated by (positive) self--adjoint Laplacians $-\Delta$, the group generated by $T(A,B)$ describes a time--reversed behaviour on the negative edges and the forward--directed behaviour on the positive edges. 

\section{Self-adjointness in Krein spaces}
One can consider the same graph described in Example~\ref{ex1IndQG}, but one imposes boundary conditions defined by  
\begin{eqnarray*}
A=\begin{bmatrix} -1 & 1 \\ 0 & 0\end{bmatrix} &\mbox{and} & B=\begin{bmatrix} 0 & 0 \\ 1 & 1\end{bmatrix}.
\end{eqnarray*}
These define the non--self--adjoint operator $T(A,B)$ and the self--adjoint operator $-\Delta(A,B)$, compare for example \cite{VKRS1999}. Here $-\Delta(A,B)$ denotes the Laplace operator defined in \eqref{Delta} with domain $\Dom(\Delta(A,B))=\{\psi \in \De \mid A \underline{\psi} + B\underline{\psi}^{\prime}=0\}$. The operator $T(A,B)$ corresponds, again by identifying this graph with the real line, to the non--self--adjoint operator $-\sgn  (x)\tfrac{d^2}{dx^2}$ with domain $H^2(\R)\subset L^2(\R)$, where $H^2(\R)$ denotes the Sobolev space of order two. A generalization of $-\sgn  (x)\tfrac{d^2}{dx^2}$ from intervals to finite metric graphs can be done on the same lines as for $-\tfrac{d}{dx}\sgn(x)\tfrac{d}{dx}$. Define the Krein space
\begin{eqnarray*}
\mathfrak{K}= \bigoplus_{j\in \Ee \cup \Ie} L^2(I_j), & \mbox{with indefinite inner product}  & [\cdot,\cdot]_{\mathfrak{K}}= \langle \cdot, J_{n,m}\cdot \rangle_{\He}, 
\end{eqnarray*}
where with a slight abuse of notation one sets
\begin{equation*}
(J_{n,m} \psi)_j = \begin{cases} +\psi_j, & j\in \Ee_+ \cup \Ie_+, \\ -\psi_j, & j\in \Ee_- \cup \Ie_-. \end{cases}
\end{equation*}
The multiplication with $J_{n,m}$ is the fundamental symmetry of the Krein space $\mathfrak{K}$. Since 
\begin{eqnarray*}
J_{n,m} T^{\max}= -\Delta & \mbox{and} & J_{n,m} T^{\min}= -\Delta^0
\end{eqnarray*}
holds, where $\Delta^0$ is the restriction of $\Delta$ to $\De^0$, one obtains
\begin{proposition}\label{krein}
The extensions of $T^{\min}$ in $\He$ with $J_{n,m}T(A,B)$ self-adjoint are defined exactly by those $A,B$ for which $-\Delta(A,B)$ is self-adjoint in $\He$. 
\end{proposition}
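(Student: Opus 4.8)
The plan is to reduce the statement to the observation that, for every pair of linear maps $A,B$ in $\Ke$ of maximal rank, the operator $J_{n,m}T(A,B)$ is \emph{literally the same operator} in $\He$ as $-\Delta(A,B)$ — same domain and same action. Once this is in place, the asserted equivalence of self-adjointness is immediate, and one merely invokes the known characterization of self-adjoint Laplacians on finite metric graphs.

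First I would note that multiplication by $J_{n,m}$ is a bounded, everywhere defined, self-adjoint involution on $\He$ (the fundamental symmetry of $\mathfrak K$), so that pre-composition with it leaves the domain of any operator unchanged. Hence $\Dom(J_{n,m}T(A,B))=\Dom(T(A,B))=\{\psi\in\De\mid A\underline{\psi}+B\underline{\psi^{\prime}}=0\}$, which by definition equals $\Dom(-\Delta(A,B))$. Next I would compare the actions through the identity $J_{n,m}T^{\max}=-\Delta$ already recorded on all of $\De$: this is just the sign bookkeeping that $J_{n,m}$, acting as $+1$ on the components over $\Ee_+\cup\Ie_+$ and as $-1$ on those over $\Ee_-\cup\Ie_-$, turns $(\tau\psi)_j$, which equals $-\psi_j^{\prime\prime}$ on positive edges and $+\psi_j^{\prime\prime}$ on negative edges, into $-\psi_j^{\prime\prime}$ on every edge, i.e.\ into $(-\Delta\psi)_j$ as in \eqref{Delta}. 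Restricting this identity to the subspace $\Dom(T(A,B))\subset\De$ gives $J_{n,m}T(A,B)=-\Delta(A,B)$ as operators in $\He$. Moreover, since $\underline{\psi}=\underline{\psi^{\prime}}=0$ for $\psi\in\De^0$, every domain of the form $\{\psi\in\De\mid A\underline{\psi}+B\underline{\psi^{\prime}}=0\}$ lies between $\De^0$ and $\De$; thus as $(A,B)$ ranges over the admissible pairs the operators $T(A,B)$ run exactly through the restrictions of $T^{\max}$ extending $T^{\min}$, and the $-\Delta(A,B)$ run exactly through the restrictions of $-\Delta$ extending $-\Delta^0$, the bijection between the two families being multiplication by $J_{n,m}$ (using also $J_{n,m}T^{\min}=-\Delta^0$). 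Therefore $J_{n,m}T(A,B)$ is self-adjoint in $\He$ if and only if $-\Delta(A,B)$ is, and by the characterization of self-adjoint Laplace operators on metric graphs in \cite{VKRS1999} the latter holds precisely for those $A,B$ with $(A,\,B)$ of maximal rank $n+m$ and $AB^{\ast}=BA^{\ast}$.

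I do not expect a real obstacle: the only point needing care is the verification that $J_{n,m}\tau=-\Delta$ on $\De$ and that it survives the restriction to the boundary-condition subspace, both of which are elementary because $J_{n,m}$ acts edgewise as multiplication by a constant (hence commutes with $\tfrac{d^2}{dx^2}$) while pre-composition with the bounded operator $J_{n,m}$ does not change the domain $\Dom(T(A,B))$. If one wishes, the conclusion can be recast in Krein-space language: an operator $S$ in $\He$ is self-adjoint with respect to the indefinite inner product $[\cdot,\cdot]_{\mathfrak K}=\langle\cdot,J_{n,m}\cdot\rangle_{\He}$ exactly when $J_{n,m}S$ is self-adjoint in the Hilbert space $\He$, so Proposition~\ref{krein} says: $T(A,B)$ is self-adjoint in the Krein space $\mathfrak K$ if and only if $-\Delta(A,B)$ is self-adjoint in the Hilbert space $\He$.
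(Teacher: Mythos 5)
Your proposal is correct and follows the same route as the paper: the paper's argument is precisely the observation that $J_{n,m}T^{\max}=-\Delta$ and $J_{n,m}T^{\min}=-\Delta^{0}$, so that $J_{n,m}T(A,B)$ and $-\Delta(A,B)$ are the same operator on the common domain $\{\psi\in\De\mid A\underline{\psi}+B\underline{\psi^{\prime}}=0\}$, which makes the equivalence of self-adjointness immediate and reduces the characterization to the known one for $-\Delta(A,B)$ from \cite{VKRS1999}. Your write-up merely spells out the sign bookkeeping and the domain-invariance under the bounded symmetry $J_{n,m}$ in more detail than the paper does.
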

As already mentioned the self-adjoint realizations $-\Delta(A,B)$ of the maximal Laplacian $-\Delta$ are characterized completely, see for example \cite{VKRS1999}. The eigenvalue problem for operators of the type $-\sgn  (x)\tfrac{d^2}{dx^2}$ on compact finite metric graphs, which have been described here in Proposition~\ref{krein}, is considered in \cite{Currie2011}.

\begin{remark}\label{sgndxsgndx}
Analogue to the case treated in Proposition~\ref{krein} generalizations of the operator $$-\sgn  (x)\frac{d}{dx}\sgn(x)\frac{d}{dx}$$ from intervals to metric graphs can be given. One can search for those extensions $-\Delta(A,B)$ of the minimal Laplacian $-\Delta^0$ with $-J_{n,m}\Delta(A,B)$ self-adjoint in $\He$ (or equivalently $-\Delta(A,B)$ Krein-space-self-adjoint in $\mathfrak{K}$). Analogue to Proposition~\ref{krein} one obtains that the operator $-J_{n,m}\Delta(A,B)$ is self-adjoint in $\He$ if and only if $T(A,B)$ is self-adjoint in $\He$. Consequently these extensions can be parametrized using Theorem~\ref{sabc}. 
\end{remark}

\section{Indefinite form methods}\label{secform}
An approach that has turned out to be very fruitful for sign--definite operators in Hilbert spaces is the one using quadratic forms. Representation theorems give a one to one correspondence between closed semi--bounded forms and semi--bounded self-adjoint operators, see for example \cite[Chapter VI]{Kato}. The quadratic form defined by $-\frac{d}{dx}\sgn(x)\frac{d}{dx}$ is symmetric, but not semi--bounded. However there are generalizations of the first representation theorem to indefinite quadratic forms. One formulation is given in \cite{GKMV}, for further information on indefinite quadratic forms see also the references given therein, in particular the works \cite{deSnoo} and \cite{McIntosh}. The application of results about indefinite quadratic forms to certain indefinite second order differential operators on bounded domains is going to be discussed in \cite{VK12}. An analogue construction can be given for certain compact metric graphs. Here the result is stated and compared to the previously obtained results, where methods from extension theory have been used. 

Assume that $(\Ge,\au)$ is a \textit{compact star graph} that is one has finitely many internal edges $\Ie$ and the initial vertices of all edges are unified in one vertex and all terminal vertices are vertices of degree one. This means $\partial_-(i_p)= \partial_-(i_q)$ for $i_p,i_q\in \Ie$ and $\partial_+(i_p)\neq\partial_+(i_q)$, for $i_p\neq i_q$. On the compact star graph $(\Ge,\au)$ one considers the spaces $H^1(\Ge)\subset \We$ and $H_0^1(\Ge)\subset \We$, where
$$H^1(\Ge):=\left\{ \psi \in \We \mid \psi_j(0)=\psi_i(0), \ i,j \in \Ie  \right\} $$ and $$ H_0^1(\Ge):=\left\{ \psi \in H^1(\Ge) \mid \psi_j(a_j)=0, \ j \in \Ie  \right\}.$$
Define now the gradient operator in $\He$ by
\begin{eqnarray*}
D\colon H_0^1(\Ge) \rightarrow \He, &\psi \mapsto \psi^{\prime}, 
\end{eqnarray*}
whose adjoint operator is 
\begin{eqnarray*}
D^{\ast}\colon H^1(\Ge) \rightarrow  \He, & \psi \mapsto -\psi^{\prime}. 
\end{eqnarray*}
The operator $D$ and the space $\Ran D\subset \He$ are closed. Since $\Ran D$ is closed, it is itself a Hilbert space with the Hilbert space structure inherited from $\He$. Hence   
$$Q\colon \He \rightarrow \Ran D,  \ \ Qu=\begin{cases}  u, & u\in \Ran D, \\ 0, & u\perp \Ran D \end{cases}$$
is a partial isometry. The adjoint $Q^*$ is the embedding from $\Ran D$ to $\He$. Note that $(\Ran D)^{\perp}$ is the space of constant functions and therefore the operator $Q$ becomes more precisely
\begin{eqnarray*}
Qu =  u - \frac{1}{\sum_{i\in\Ie}a_i} \int_{\Ge} u,
\end{eqnarray*}
since $Q$ maps onto the orthogonal complement of the constant functions.

The operators $D$ and $D^{\ast}$ were used implicitly in \cite{PKQG1} to apply form methods to Laplace operators on finite metric graphs. The following theorem is a graph version of the result that is going to be presented in \cite{VK12}.

\begin{theorem}\label{introthm:1IndQG}
Let $(\Ge,\au)$ be a compact star graph and 
\begin{itemize}
\item[(a)] $A\in L^\infty(\Ge;\R)$, that is $A_j\in L^\infty(I_j;\R)$, for each $j\in \Ie$, be such that
\item[(b)] the operator $Q M_A Q^\ast:\Ran D\to\Ran D$ is boundedly invertible, where $M_A$ is the multiplication operator 
$$M_A\colon \He \rightarrow \He, \quad \phi \mapsto A(\cdot)\phi.$$
\end{itemize}
Then
\begin{itemize}
\item[(i)] there exists a unique self-adjoint operator $\cL$ with $\Dom(\cL)\subset H_0^1(\Ge)$ such that
\begin{equation*}
 \langle \varphi, \cL \psi\rangle_{\He} = \langle \varphi^{\prime}, A(\cdot) \psi^{\prime}\rangle_{\He}
\end{equation*}
holds for all $\varphi\in H_0^1(\Ge)$ and all $\psi\in\Dom(\cL)$, where
\begin{equation*}
\Dom(\cL)=\{\psi\in H_0^1(\Ge)\, |\, M_A D \psi\in H^1(\Ge)\}.
\end{equation*}
For any $\psi\in\Dom(\cL)$ one has $\cL \psi = D^\ast M_A D\psi$, the domain $\Dom(\cL)$ is a core for the gradient operator $D$;
\item[(ii)] the operator $\cL$ is boundedly invertible and its inverse $\cL^{-1}$ is compact. In particular, the spectrum of $\cL$ is purely discrete.
\end{itemize}
\end{theorem}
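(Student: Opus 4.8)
The plan is to realize $\cL$ as $D^{\ast}M_AD$ by passing to the symmetric form $\mathfrak{t}[\varphi,\psi]:=\langle D\varphi,M_AD\psi\rangle_{\He}=\langle\varphi^{\prime},A(\cdot)\psi^{\prime}\rangle_{\He}$ with form domain $H_0^1(\Ge)$, and then applying the first representation theorem for indefinite quadratic forms from \cite{GKMV}. Two elementary facts make that machinery available. First, on the compact star graph there is a Poincar\'{e} inequality $\norm{\psi}_{\He}\le(\max_{j\in\Ie}a_j)\norm{\psi^{\prime}}_{\He}$ for $\psi\in H_0^1(\Ge)$, obtained edgewise from $\psi_j(a_j)=0$; hence $D$ is injective on $H_0^1(\Ge)$, bounded below with respect to the $H^1$-norm, and (as already recorded) has closed range $\Ran D=\{u\in\He\mid\int_{\Ge}u=0\}$, so $D$ is a topological isomorphism of $(H_0^1(\Ge),\norm{\cdot}_{H^1})$ onto the Hilbert space $\Ran D$. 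Second, since $A$ is real-valued $\mathfrak{t}$ is symmetric, and for $\varphi,\psi\in H_0^1(\Ge)$ one has $\mathfrak{t}[\varphi,\psi]=\langle D\varphi,SD\psi\rangle_{\Ran D}$ with $S:=QM_AQ^{\ast}$, because $D\varphi,D\psi\in\Ran D$ and $Q$ is the orthogonal projection onto $\Ran D$; by hypothesis (b), $S$ is a bounded self-adjoint operator on $\Ran D$ with $\abs{S}\ge c>0$.

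First I would transport the spectral decomposition of $S$. Let $\Ran D=R_+\oplus R_-$ be its decomposition into the spectral subspaces for $(0,\infty)$ and $(-\infty,0)$, and put $\mathcal{G}_{\pm}:=D^{-1}R_{\pm}\subset H_0^1(\Ge)$; these are closed in $H_0^1(\Ge)$ since $R_{\pm}$ are closed in $\Ran D$ and $D$ is an isomorphism. As $S$ leaves $R_{\pm}$ invariant and $R_+\perp R_-$, the form is block diagonal along $\mathcal{G}_+\oplus\mathcal{G}_-$, and on $\mathcal{G}_{\pm}$ one has $\pm\mathfrak{t}[\psi,\psi]=\langle D\psi,\abs{S}D\psi\rangle_{\Ran D}\asymp\norm{D\psi}_{\He}^{2}\asymp\norm{\psi}_{H^1}^{2}$, the last equivalence by Poincar\'{e}. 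Thus $\pm\mathfrak{t}|_{\mathcal{G}_{\pm}}$ is a closed, strictly positive form, and $\mathfrak{t}$ satisfies the hypotheses of the indefinite first representation theorem of \cite{GKMV}, which yields a unique self-adjoint operator $\cL$ in $\He$ represented by $\mathfrak{t}$ with form domain $H_0^1(\Ge)$; along $\mathcal{G}_+\oplus\mathcal{G}_-$ it decomposes into the positive self-adjoint operators $\cL_{\pm}$ associated with $\pm\mathfrak{t}|_{\mathcal{G}_{\pm}}$.

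Next I would identify $\cL$ concretely. By the definition of the operator associated with a form, $\psi\in\Dom(\cL)$ with $\cL\psi=\eta$ holds iff $\psi\in H_0^1(\Ge)$ and $\langle D\varphi,M_AD\psi\rangle_{\He}=\langle\varphi,\eta\rangle_{\He}$ for all $\varphi\in H_0^1(\Ge)=\Dom(D)$; since $\Dom(D^{\ast})=H^1(\Ge)$ and $D^{\ast}\phi=-\phi^{\prime}$, this is precisely the requirement $M_AD\psi\in H^1(\Ge)$ together with $D^{\ast}M_AD\psi=\eta$. Hence $\Dom(\cL)=\{\psi\in H_0^1(\Ge)\mid M_AD\psi\in H^1(\Ge)\}$ and $\cL\psi=D^{\ast}M_AD\psi$. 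For the core assertion, note that the graph norm of $D$ on $H_0^1(\Ge)$ equals $\norm{\cdot}_{H^1}$, which by the estimates above is equivalent to the form norm $\psi\mapsto(\abs{\mathfrak{t}}[\psi]+\norm{\psi}_{\He}^{2})^{1/2}$ attached to the splitting; since $\Dom(\cL_{\pm})$ is a form core for $\pm\mathfrak{t}|_{\mathcal{G}_{\pm}}$, the set $\Dom(\cL)=\Dom(\cL_+)\oplus\Dom(\cL_-)$ is dense in $H_0^1(\Ge)$ in the $D$-graph norm, i.e.\ a core for $D$.

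It remains to prove (ii). Define $R\colon\He\to\He$ by $Rf:=D^{-1}S^{-1}g_f$, where $g_f\in\Ran D$ is the Riesz representative of the bounded functional $v\mapsto\langle D^{-1}v,f\rangle_{\He}$ on $\Ran D$ (bounded because $\norm{D^{-1}v}_{\He}\le C\norm{v}_{\He}$ by Poincar\'{e}). A direct check gives $Rf\in\Dom(\cL)$ with $\cL Rf=f$, and $R\cL\psi=\psi$ for $\psi\in\Dom(\cL)$ (here $f=\cL\psi$ forces $g_f=SD\psi$); hence $\cL$ is boundedly invertible with $\cL^{-1}=R$, and $\norm{\cL^{-1}f}_{H^1}\asymp\norm{D\cL^{-1}f}_{\He}=\norm{S^{-1}g_f}\le\norm{S^{-1}}\,\norm{g_f}\le C\norm{f}_{\He}$, so $\cL^{-1}$ maps $\He$ boundedly into $H_0^1(\Ge)$. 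Composing with the embedding $H_0^1(\Ge)\hookrightarrow\He$, which is compact because each $H^1([0,a_j])\hookrightarrow L^2([0,a_j])$ is compact and $\abs{\Ie}<\infty$, shows that $\cL^{-1}$ is compact on $\He$; being self-adjoint with compact resolvent, $\cL$ then has purely discrete spectrum. The step I expect to be the main obstacle is the second paragraph: one must verify that $\mathfrak{t}$ together with the splitting $\mathcal{G}_+\oplus\mathcal{G}_-$ meets the precise hypotheses of the indefinite representation theorem of \cite{GKMV}, in particular that the notion of closedness used there reduces here to completeness of $\mathcal{G}_{\pm}$ in the form norm; hypothesis (b) is exactly what prevents a null spectral subspace of $S$ from appearing and destroying strict positivity of $\pm\mathfrak{t}|_{\mathcal{G}_{\pm}}$, while the Poincar\'{e} inequality is what ties the form norm to the $H^1$-norm.
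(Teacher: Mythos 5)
Your proposal follows essentially the same route that the paper indicates for this theorem (whose detailed proof it omits): you reduce the form $\langle\varphi^{\prime},A\psi^{\prime}\rangle_{\He}$ to $\langle D\varphi,(QM_AQ^{\ast})D\psi\rangle$ on $\Ran D$ and invoke the first representation theorem for indefinite forms of \cite{GKMV}, exactly the tool the paper cites, supplemented by the explicit inverse $D^{-1}(QM_AQ^{\ast})^{-1}(D^{-1})^{\ast}$ for part (ii). The substance is correct; only routine details remain to be written out (verifying the precise hypotheses of \cite{GKMV}, as you note, and the core property, which also follows directly from $\Dom(\cL)=\Ran(\cL^{-1})$ being dense in $H_0^1(\Ge)$ in the $H^1$--norm), while the parenthetical identification of $\Ran D$ with the zero--mean functions is inherited from the paper and plays no role in your argument.
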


Roughly speaking, Theorem~\ref{introthm:1IndQG} states that the operator $\cL$ is a self--adjoint realization of the formal differential expression $-\tfrac{d}{dx} A(\cdot)\tfrac{d}{dx}$ with Dirichlet boundary conditions. The proof of Theorem~\ref{introthm:1IndQG} relies on the representation theorem for indefinite quadratic forms \cite[Theorem 2.3 and Lemma 2.2]{GKMV}. It is completely analogue to the one that is going to be given in \cite{VK12} and omitted here. The condition (b) of Theorem~\ref{introthm:1IndQG} can be verified by
\begin{proposition}\label{QAQIndefQG}
Let $(\Ge,\au)$ be a compact star graph and $A(\cdot), A(\cdot)^{-1}\in L^\infty(\Ge;\R)$. Then $QAQ^*$ is boundedly invertible if and only if
$$\displaystyle{\int_{\Ge} \frac{1}{A} \neq 0.}$$
\end{proposition}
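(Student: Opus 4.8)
The plan is to work in the Hilbert space $\Ran D$, using the explicit description of $Q$ and $Q^*$ given above. Write $B = QAQ^*$ (more precisely $QM_AQ^*$) acting on $\Ran D$. Since $A,A^{-1}\in L^\infty$, the operator $M_A$ is a bounded, boundedly invertible self-adjoint operator on $\He$, but $B$ is its compression to the subspace $\Ran D = \{u\in\He : \int_{\Ge} u = 0\}$, which need not be invertible. I would first record that $B$ is self-adjoint and bounded, so bounded invertibility is equivalent to $\Ker B = \{0\}$ together with $\Ran B$ closed; and since $B = Q M_A Q^*$ with $M_A$ invertible, a Schur-complement / bordered-matrix argument will show $\Ran B$ is automatically closed once $\Ker B=\{0\}$. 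So the whole statement reduces to: $B$ is injective $\iff \int_{\Ge} 1/A \neq 0$.

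For the kernel computation I would compute $\Ker B$ explicitly. An element $u\in\Ran D$ lies in $\Ker(QM_AQ^*)$ iff $Au \perp \Ran D$, i.e.\ iff $Au$ is a constant function $c$ on $\Ge$ (using that $(\Ran D)^\perp$ is the constants, as stated in the excerpt). Hence $u = c/A$ for some constant $c\in\C$, and the constraint $u\in\Ran D$ forces $\int_{\Ge} u = c\int_{\Ge} 1/A = 0$. Now split into two cases. If $\int_{\Ge} 1/A \neq 0$, then $c=0$, so $u=0$ and $B$ is injective; combined with the closed-range remark above, $B$ is boundedly invertible. If $\int_{\Ge} 1/A = 0$, then $u = 1/A$ (take $c=1$) is a nonzero element of $\Ran D$ — it is not the zero function since $A^{-1}\in L^\infty$ and $A$ is real-valued and a.e.\ nonzero — lying in $\Ker B$, so $B$ is not injective, hence not boundedly invertible. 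This dichotomy is exactly the claim.

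The one point that needs a little care — and the main (minor) obstacle — is the closed-range implication: showing that injectivity of the compression $QM_AQ^*$ already yields bounded invertibility, rather than just a dense range. The clean way is to use that $M_A \geq \delta > 0$ is impossible in general (since $A$ is sign-indefinite), so one cannot argue by positivity; instead one uses the decomposition $\He = \Ran D \oplus (\Ran D)^\perp$ with $(\Ran D)^\perp$ one-dimensional, writes $M_A$ as a $2\times2$ block operator with respect to this splitting, and observes that $M_A$ invertible together with the (at most one-dimensional) Schur complement of the $(\Ran D)^\perp$-corner forces $B = QM_AQ^*$ to be Fredholm of index zero on $\Ran D$; then injective $\Rightarrow$ invertible. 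Alternatively, since the obstruction space is one-dimensional, one can argue directly: $B$ is a finite-rank perturbation (rank $\le 1$) of $M_A|_{\Ran D}$ composed with a projection, and a short computation identifies $\dim\Ker B$ with the vanishing or not of $\int_{\Ge}1/A$, while simultaneously giving the cokernel. Everything else is the routine kernel computation above.
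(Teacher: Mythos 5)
Your proposal is correct and takes exactly the route the paper indicates: the paper omits the detailed proof but states that it rests on the fact that $(\Ran D)^{\perp}=\Ker D^{\ast}$ is the one--dimensional space of globally constant functions, which is precisely your kernel computation ($M_A u$ constant, so $u=c/A$ with $c\int_{\Ge}1/A=0$, giving injectivity iff $\int_{\Ge}1/A\neq 0$ and exhibiting $1/A\in\Ran D$ as a kernel element otherwise). Your completion of the bounded--invertibility step --- closed range of the compression $QM_AQ^{\ast}$ via the finite--rank (Fredholm index zero) comparison with the invertible $M_A$, combined with self--adjointness so that injectivity yields invertibility --- is sound and supplies the details the paper leaves to the forthcoming reference.
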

The proof is based on the fact that the space $(\Ran D)^{\perp}= \Ker D^{\ast}$ is the one dimensional space of constant functions in $H^1(\Ge)$. It is analogue to the one that is going to be presented in \cite{VK12} and omitted here. Applying Proposition~\ref{QAQIndefQG} to the function 
$$J_{n,m}\colon \Ge \rightarrow \C, \ x_j \mapsto \begin{cases} +1, & j\in \Ie_+, \\ -1, & j\in \Ie_-,  \end{cases}$$ 
where $\Ie= \Ie_+ \dot\cup \Ie_-$ one obtains that $QM_{J_{n,m}}Q^*$ is boundedly invertible if and only if
$$\displaystyle{\sum_{i\in \Ie_+}a_i - \sum_{i\in \Ie_-}a_i \neq 0}.$$ 
Then by applying Theorem~\ref{introthm:1IndQG} one obtains that the form $\mathfrak{l}_{n,m}$ given by 
\begin{eqnarray*}
\mathfrak{l}_{n,m}[\varphi,\psi]=  \langle \varphi^{\prime}, J_{n,m}\psi^{\prime} \rangle_{\He}, & & \varphi,\psi\in H^1_0(\Ge)\subset \He
\end{eqnarray*}
defines uniquely the operator $\cL_{n,m}=D^* J_{n,m} D$ in $\He$ with natural domain \\ $\Dom(\cL_{n,m})\subset H_0^1(\Ge)$. Note that the operator $\cL_{n,m}$ is a self--adjoint extension of $T^{\min}$ and therefore there are operators $A$ and $B$ in $\Ke$ such that $\cL_{n,m}=T(A,B)$. More precisely, the operator $T(A,B)$ is defined by Dirichlet boundary conditions on the vertices of degree one and at the central vertex $\nu=\partial_-(i_p)$, $i_p\in \Ie$ by the local boundary conditions that are given by the matrices
\begin{align*}
A_{\nu}= \left[
   \begin{array}{cccccc}
     1 & -1 & 0 &\cdots & 0 & 0 \\
     0 & 1 & -1 &\cdots & 0 & 0  \\
     0 & 0 & 1 &\cdots & 0 & 0  \\
     \vdots &\vdots  & \vdots & & \vdots  & \vdots \\
        0 & 0 & 0 &\cdots & 1 & -1  \\
   0 & 0 & 0 &\cdots & 0 & 0 
   \end{array}
\right], && B_{\nu}= \left[
   \begin{array}{cccccc}
     0 & 0 & 0 &\cdots & 0 & 0 \\
     0 & 0 & 0 &\cdots & 0 & 0  \\
     0 & 0 & 0 &\cdots & 0 & 0  \\
     \vdots &\vdots  & \vdots & & \vdots  & \vdots \\
        0 & 0 & 0 &\cdots & 0 & 0  \\
   1 & \cdots & 1 & \cdots & -1 & -1 
   \end{array}
\right],
\end{align*}  
where in the last row of $B_{\nu}$ for each edge in $\Ie_+$ stands a $+1$ and for each edge in $\Ie_-$ a $-1$. To paraphrase, these local boundary conditions guarantee that functions are continuous at the central vertex and that the sum of the outward directed derivatives evaluated at the positive incident edges equals the sum of the outward directed derivatives evaluated at the negative incident edges. These local boundary conditions arise naturally from the form approach and therefore they are the main example for self-adjoint boundary conditions discussed here. Note that these boundary conditions are related to the so--called standard or Kirchhoff boundary conditions, compare for example \cite{PKQG1}. If $A_{\nu}^{st}$ and $B_{\nu}^{st}$ define the standard boundary conditions at the central vertex then one has $A_{\nu}=A_{\nu}^{st}$ and $B_{\nu}= B_{\nu}^{st} J_{n,m}$.  

\begin{remark}
The form approach admits straight forward generalizations to compact graphs, as long as the operator $D\colon H_0^1(\Ge) \rightarrow \He$ is boundedly invertible. This is equivalent to the invertibility of the Dirichlet Laplacian $-\Delta_D=D^{\ast}D$. Hence the form approach applies whenever there is at least one vertex of degree one. The obstacle that the form approach cannot be used for arbitrary not necessarily compact finite metric graphs seems to be  due to technical difficulties.
\end{remark}

\section{Extension theory background}\label{secext}
One can make one step back and have a look at the problem from the more general viewpoint of extension theory. Extension theory deals with self-adjoint extensions of closed symmetric operators in Hilbert spaces. Actually it has been the starting point of this work to observe that the operator $T^{\min}$ is a closed symmetric operator with equal deficiency indices. 

\subsection{Classical extension theory}
The classical theory developed by von Neumann describes  self--adjoint extensions of closed symmetric operators in terms of unitary mappings between the deficiency spaces, see for example \cite{Faris}. Instead of the concrete operator $T^{\min}$ consider now two arbitrary closed symmetric operators $A_+$ and $A_-$ in Hilbert spaces $\He_+$ and $\He_-$, respectively, each with equal deficiency indices 
\begin{eqnarray*}
d_{+}(A_+)=d_{-}(A_+)\,(\leq \infty), & d_{+}(A_-)=d_{-}(A_-)\,(\leq \infty),
\end{eqnarray*}
where $d_{\pm}(A_{\pm})= \dim \Ker \left( A^{\ast}\mp i \right)$. Then one can study two operators
\begin{eqnarray*}
T:= A_+ \oplus -A_- & \mbox{and} & \Delta:= A_+ \oplus A_-.
\end{eqnarray*}
Both are  closed operators in $\He=\He_+ \oplus \He_-$ with equal deficiency indices
\begin{eqnarray*}
d_{\pm}(A)=d_{\pm}(A_+)+ d_{\mp}(A_-) & \mbox{and} & d_{\pm}(\Delta)=d_{\pm}(A_+)+ d_{\pm}(A_-).
\end{eqnarray*}
In von Neumann's theory the deficiency spaces 
\begin{eqnarray*}
N_{\pm}(z)=\Ker (A_{\pm}^*-z), & & z\in \C\setminus \R
\end{eqnarray*} 
are important objects. One considers $N_T(z):= \Ker( T^*-z)$. By construction 
\begin{align*}
N_T(z)=\Ker \begin{bmatrix} A_+^*-z & 0  \\ 0 & A_-^*+z \end{bmatrix}
\end{align*}
and therefore $N_T(z)= N_+(z) \oplus N_-(-z)$, whereas for $N_{\Delta}(z)=\Ker (\Delta^*-z)$ 
\begin{align*}
N_{\Delta}(z)=\Ker \begin{bmatrix} A_+^*-z & 0  \\ 0 & A_-^*-z \end{bmatrix}
\end{align*}
holds and therefore $N_{\Delta}(z)= N_+(z) \oplus N_-(z)$. The self-adjoint extensions of $T$ are in a one-to-one correspondence to the unitary mappings $$U^{\prime}\colon N_T(i)  \rightarrow N_T(-i).$$ Analogous to the above the self-adjoint extensions of $\Delta$ are in a one-to-one correspondence to the unitary mappings $$U \colon  N_{\Delta}(i) \rightarrow N_{\Delta}(-i).$$ One observes that $N_-(i)$ and $N_-(-i)$ are unitarily equivalent, because $A_-$ is a closed and symmetric operator with equal deficiency indices. Let $J\colon N_-(i) \rightarrow N_-(-i)$ be such a unitary equivalence. The relation between unitary mappings $U$ and $U^{\prime}$ is summarized in the following diagram
\begin{align*}
\begin{CD}
N_T(i)=N_{+}(i)\oplus N_{-}(i)   @>\mathds{1}\oplus J>> N_{+}(i)\oplus N_{-}(-i)= N_T(i)\\
@V U VV                  @V U^{\prime}VV \\
N_{\Delta}(i)=N_{+}(-i)\oplus N_{-}(-i)   @>\mathds{1}\oplus J>>  N_{+}(-i)\oplus N_{-}(i)= N_{\Delta}(i) 
\end{CD}
\end{align*}
and one reads from this that the relation is a one-to-one correspondence, this means each unitary map $U$ defines uniquely a unitary map $U^{\prime}$ and vice versa. This in turn gives a one--to--one correspondence between the self-adjoint extensions of $\Delta$ and the self-adjoint extensions of $T$. 
 
\subsection{Spaces of boundary values} In the context of differential operators the description of self-adjoint extensions in terms of boundary values can be more practical. Following A.~N.~Ko\v{c}ube\u\i, see \cite[Theorem 3]{Kochubei1974}, for each  symmetric operator $X$, defined in a Hilbertspace $\He$, with equal deficiency indices, $d_+(X)=d_-(X)\leq \infty$ there exists a Hilbert space $\Ke$ of dimension $d=d_+(X)$ and linear transformations $\Gamma^1,\Gamma^2\colon \Dom(X^*) \rightarrow \Ke $ with the properties,
\begin{enumerate}
\item for any $\varphi,\psi \in \Dom(X^*)$
$$\langle X^*\varphi,\psi \rangle_{\He}- \langle \varphi,X^*\psi \rangle_{\He}= \langle \Gamma^1\varphi,\Gamma^2\psi \rangle_{\Ke} - \langle \Gamma^2\varphi,\Gamma^1\psi \rangle_{\Ke},  $$
\item for any $\kappa_1,\kappa_2\in K$ there is a $\varphi\in \Dom(A^{\ast})$ such that $\Gamma^1\varphi=\kappa_1$ and $\Gamma^2\varphi=\kappa_2$, and  
\item if $\varphi\in \Dom(X)$, then $\Gamma^1\varphi=\Gamma^2\varphi=0$.
\end{enumerate}
This can be interpreted as a generalization of Green's formula or simply of integration by parts. The triple $(\Ke,\Gamma^1,\Gamma^2)$ is called \textit{space of boundary values} of $X$. The self-adjoint extensions of $X$ are given in terms of unitary mappings $U$ in $\Ke$, see \cite[Theorem 2 and 4]{Kochubei1974}. More precisely one has that all self--adjoint extensions $\widetilde{X}$ of $X$ are restrictions of $X^*$, which can be parametrized in terms of $U$, that is one has $\widetilde{X}=X_U$, where $X_U \psi = X^{\ast}\psi$ and
\begin{eqnarray*}
\Dom(X_U)= \{ \varphi \in \Dom(X^*) \mid (U-\mathds{1})\Gamma^1 \varphi +  i(U + \mathds{1})\Gamma^2 \varphi=0 \}.
\end{eqnarray*}

Consider the same situation as above with closed and symmetric operators $A_{\pm}$ each with equal deficiency indices. Let $(\Ke_{\pm},\Gamma^1_{\pm}, \Gamma^2_{\pm})$ be the spaces of boundary values of $A_{\pm}$. Then the space of boundary values for $\Delta$ is given by $(\Ke,\Gamma_{\Delta}^1, \Gamma_{\Delta}^2)$ with $\Ke= \Ke_+ \oplus \Ke_-$,
\begin{eqnarray*}
\Gamma_{\Delta}^1= \begin{bmatrix} \Gamma^1_+ & 0 \\ 0 & \Gamma^1_- \end{bmatrix} &\mbox{and} & \Gamma_{\Delta}^2= \begin{bmatrix} \Gamma^2_+ & 0 \\ 0 & \Gamma^2_- \end{bmatrix},
\end{eqnarray*}
which are given with respect to the decomposition $\Ke= \Ke_+ \oplus \Ke_-$. For $T$ the space of boundary values is given then by $(\Ke,\Gamma_{T}^1, \Gamma_{T}^2)$, again with $\Ke= \Ke_+ \oplus \Ke_-$, but now with 
\begin{eqnarray*}
\Gamma_{T}^1= \begin{bmatrix} \Gamma^1_+ & 0 \\ 0 & \Gamma^1_- \end{bmatrix} &\mbox{and} & \Gamma_{T}^2= \begin{bmatrix} \Gamma^2_+ & 0\\ 0 & -\Gamma^2_- \end{bmatrix}.
\end{eqnarray*}
As in the parametrization of von Neumann there is a one--to--one correspondence of the self-adjoint extensions of $\Delta$ and $T$. Given a unitary map $U$ in $\Ke$ then $\Delta_U$ is the restriction of $\Delta^*$ to 
\begin{eqnarray*}
\Dom(\Delta_U)= \{ \varphi\in \Dom(\Delta^*) \mid (U-\mathds{1})\Gamma_{\Delta}^1 \varphi +  i(U + \mathds{1})\Gamma_{\Delta}^2 \varphi=0 \}.
\end{eqnarray*}
With the same $U$ one obtains the extension of $T$ as a restriction of $T^*$ with domain 
\begin{eqnarray*}
\Dom(T_U)= \{ \psi\in \Dom(T^*) \mid (U-\mathds{1})\Gamma_{T}^1 \psi +  i(U + \mathds{1})\Gamma_{T}^2 \psi=0 \},
\end{eqnarray*}
where $\Dom(T^*)=\Dom(\Delta^*)$, compare Remark~\ref{UVK}.

\subsection{A radially symmetric example}
To come full circle one goes back to the starting point -- the differential operator given in \eqref{eq1IndefQG}. Consider the formal differential operator $\tau$ in $L^2(\R^2)$ defined by 
\begin{align*}
\tau u= -\div A(\cdot)\grad u, &&  A(x)= \begin{cases}+1, & x\in\Omega_+, \\  -1, & x\in\Omega_-,     \end{cases}
\end{align*}
where
\begin{align*}
&\Omega_+= \{x\in \R^2 \mid \norm{x}\leq R_1\} \cup \{x\in \R^2 \mid \norm{x}\geq R_2\} \  \mbox{and} \\
 &\Omega_-=\{x\in \R^2 \mid R_1 < \norm{x} <R_2   \}
\end{align*}
with $R_2>R_1>0$. This ring geometry has been studied in \cite{Bouchitte}. As $A(\cdot)$ is radially symmetric one can transform the operator $\tau$ to polar coordinates. Using the divergence and the gradient operator in polar coordinates one obtains
\begin{align*}
\tau u(r,\theta)= \frac{1}{r} \frac{d}{d r} r A(r) \frac{d}{d r} u(r,\theta) + \frac{1}{r^2} A(r) \frac{d^2}{d \theta^2}   u(r,\theta),
\end{align*}
where 
$$ A(r)= \begin{cases}+1, & r\in (0,R_1]\cup [R_3,\infty), \\  -1, & r\in (R_2,R_3).     \end{cases}$$
Using a separation of variables one obtains that $\tau$ is unitarily equivalent to 
\begin{align*}
\widetilde{\tau}=\bigoplus_{m\in \mathbb{Z}} \tau_m,
\end{align*}
where $\tau_m$ are operators in $L^2((0,\infty), dr)$ defined by 
\begin{align*}
 \tau_{m}u_m(r) = 
 -\frac{\partial r}{\partial r} A(r) \frac{\partial r}{\partial r} u_m(r) + A(r)\frac{m^2 -4^{-1}}{r}u_m(r).
\end{align*}
Denote by $AC_{loc}$ the set of locally absolutely continuous functions on the interval $(0,\infty)$. The natural (maximal) domain for $\tau_m$ in $L^2((0,\infty),dr)$ is 
\begin{align*}
\Dom(T_m)= \{ u_m\in L^2(0,\infty) \mid u_m, A(\cdot)u_m^{\prime} \in AC_{loc} \mbox{and} \ \tau_m u_m\in L^2((0,\infty),dr) \}
\end{align*}
for $m\neq 0$ and for $m=0$ one imposes additionally the assumption (or rather boundary condition)
$$\lim_{r\to 0}[\sqrt{r} \ln(r)]^{-1} u_0(r)= 0.$$ 
From the general considerations in this section or from the Sturm-Liouville theory, see for example \cite{WeidmannODE, Zettl}, one can deduce that the operators $$T_m=(\tau_m, \Dom(\tau_m))$$ are self--adjoint. Transforming back one obtains that $T=(\tau,\Dom(T))$ on the natural domain 
\begin{align*}
\Dom(T)= \left\{u \in L^2(\R^2) \mid  u \in H^1(\R^2), -\div A(\cdot)\grad u \in L^2(\Omega)    \right\}
\end{align*}
is self-adjoint. Observe that $\Dom(T)$ is a subset of $H^1(\R^2)$ and elements $u\in \Dom(T)$ satisfy the matching condition for the normal derivatives at the threshold of $\Omega_+$ and $\Omega_-$
$$ \lim_{\epsilon\to 0+}\frac{d}{d r} u(r+ \epsilon,\theta) = - \lim_{\epsilon\to 0-} \frac{d}{d r} u(r+\epsilon,\theta), $$
where $(r,\theta)\in \partial \Omega_+ =\partial \Omega_-$.

\section{Eigenvalues, resonances and resolvents}\label{secres}
The study of the spectral resolution of the self-adjoint operator $T(A,B)$ is based on finding solutions of $(\tau-k^2)u(\cdot,k)=0$ that satisfy the boundary conditions along with certain integrability conditions. Considering each edge separately, one recalls that a fundamental system of the equation 
\begin{align}\label{ew+}
-u^{\prime\prime}(x,k)= k^2 u(x,k) &&  \mbox{with}\quad k\neq 0,
\end{align}
$u\colon \R \rightarrow \C$ is given by $e^{ikx}$ and $e^{-ikx}$ and consequently of 
\begin{align}\label{ew-}
u^{\prime\prime}(x,\kappa)= \kappa^2 u(x,\kappa), &&  \mbox{for}\quad \kappa \neq 0,
\end{align}
by $e^{-\kappa x}$ and $e^{\kappa x}$. As the two fundamental systems have different integrability properties in certain regions of $\C$, one defines the two open quadrants
\begin{align*}
\Qe=& \{k\in \C \mid \Re(k)>0 \ \mbox{and} \ \Im(k)>0  \}
\end{align*}
and 
\begin{align*}
\Pe=& \{k\in \C \mid \Re(k)<0 \ \mbox{and} \ \Im(k)>0  \}.
\end{align*} 
For $k,\kappa\in \Qe$ the functions $e^{ikx}$ and $e^{-\kappa x}$ are elements of $L^2([0,\infty))$, whereas $e^{-ikx}$ and $e^{\kappa x}$ are not in $L^2([0,\infty))$. For $k,\kappa,\in \Pe$ the functions $e^{ikx}$ and $e^{\kappa x}$ are square integrable. 

\subsection{Resonances and non--zero eigenvalues}
A general \textit{Ansatz} for a solution that satisfies simultaneously equation \eqref{ew+} on the positive edges and equation \eqref{ew-} on the negative edges is the function $\psi$ defined by
\begin{align}\label{Ansatz1IndQG}
\psi(x,k,i\kappa)= \begin{cases}
s_j(k)e^{ikx_j}, & j\in \Ee_+, \\
          \alpha_{j}(k) e^{ik x_j} + \beta_{j}(k) e^{-ik x_j}, & j\in \Ie_+,\\
           s_j(i\kappa) e^{-\kappa x_j}, & j\in \Ee_-, \\
           \alpha_{j}(i\kappa) e^{-\kappa x_j} + \beta_{j}(i\kappa) e^{\kappa x_j}, & j\in \Ie_-
                    \end{cases} 
\end{align}
with $k,\kappa\in \C\setminus\{0\}$. With this notation one has 
\begin{eqnarray*}
(\tau-k^2)\psi(x,k,\pm ik)=0  & \mbox{and} & (-\Delta-k^2)\psi(x,k,\pm k)=0.
\end{eqnarray*}
Assuming that $\Ee_-\neq \emptyset$ and $\Ee_+\neq \emptyset$ one has for $k\in \Qe$, that $\psi(\cdot,k,ik)$ is square integrable. Analogue for $k \in \Pe$, $\psi(\cdot,k,-ik)$ is square integrable on the external edges, as now $e^{kx}\in L^2([0,\infty))$. Introduce for brevity the notations
\begin{eqnarray*}
 \chi_+(k)= \begin{bmatrix} \{s_j(k)\}_{j\in \Ee_+} \\ \{\alpha_j(k)\}_{j\in \Ie_+} \\ \{\beta_j(k)\}_{j\in \Ie_+} \end{bmatrix}, &
 \chi_-(i\kappa)= \begin{bmatrix} \{s_j(i\kappa)\}_{j\in \Ee_-} \\ \{\alpha_j(i\kappa)\}_{j\in \Ie_-} \\ \{\beta_j(i\kappa)\}_{j\in \Ie_-} \end{bmatrix}, & 
 \chi(k,i\kappa)=\begin{bmatrix} \chi_+(k) \\ \chi_-(i\kappa)  \end{bmatrix}.  
\end{eqnarray*}
For the boundary values of the \textit{Ansatz} function $\psi(\cdot,k,i\kappa)$ one obtains the formulae
\begin{align*}
  \underline{\psi(\cdot,k,i\kappa)}= X_{n,m}(k,i\kappa,\au) \chi(k,ik) , && 
  \underline{\psi(\cdot,k,i\kappa)}^{\prime}=Y_{n,m}(k,i\kappa,\au)\chi(k,ik)   
\end{align*}
with
\begin{align*}\index{$X_{n,m}(k,i\kappa,\au)$}\index{$Y_{n,m}(k,i\kappa,\au)$}
X_{n,m}(k,i\kappa,\au)&= \begin{bmatrix} X_{n}(k,\au_+) & 0 \\ 0 &  X_{m}(i\kappa,\au_-) \end{bmatrix} \ \mbox{and}
\\ Y_{n,m}(k,i\kappa,\au) &=\begin{bmatrix}  Y_{n}(k,\au_+) & 0 \\ 0 &   Y_{m}(i\kappa,\au_-) \end{bmatrix}
\end{align*}
matrices in $\Ke$, which are written with respect to the decomposition $\Ke=\Ke_+\oplus \Ke_- $. This uses the notation  
\begin{eqnarray*}
X_{l}(k,\bu)= \begin{bmatrix} \mathds{1}& 0 & 0 \\ 0 & \mathds{1} & \mathds{1} \\ 0 & e^{ik\bu} & e^{-ik\bu} \end{bmatrix} &\mbox{and}&
Y_{l}(k,\bu)= ik\begin{bmatrix} \mathds{1} & 0 & 0 \\ 0 & \mathds{1} & -\mathds{1}\\ 0 & e^{ik\bu} & -e^{ik\bu} \end{bmatrix},
\end{eqnarray*} 
where for $l=n$ one has to plug in $\bu=\au_+$ and for $l=m$ one inserts $\bu=\au_-$. Actually these are the matrices known from the spectral theory of Laplace operators on $(\Ge_+,\au_+)$ and $(\Ge_-,\au_-)$, see for example \cite[Section 3]{VKRS2006}. The matrices $X_{n}(k,\au_+)$ and $Y_{n}(k,\au_+)$ define operators in $\Ke_+$ and $X_{m}(k,\au_-)$ and $Y_{m}(k,\au_-)$ in $\Ke_-$, respectively. The matrix $e^{ik\bu}$ denotes the diagonal matrix with entries $\{e^{ik\bu}\}_{k,l}= \delta_{k,l} e^{ik b_l}$, where $b_l\in \bu$ and $\delta_{k,l}$ denotes the Kronecker delta. 

Note that there is a function $\psi(\cdot,k,i\kappa)$ of the form \eqref{Ansatz1IndQG} satisfying the boundary conditions $A \underline{\psi(\cdot,k,i\kappa)}+B \underline{\psi(\cdot,k,i\kappa)}^{\prime}=0$, if and only if there exist coefficients $\chi(k, i\kappa)\in \Ke\setminus\{0\}$ such that
\begin{align}\label{EWIndQG}\index{$Z_{n,m}(A,B,k,i\kappa,\underline{a})$}
Z_{n,m}(A,B,k,i\kappa,\underline{a})\chi(k,i\kappa) =0 
\end{align}
holds with 
$$Z_{n,m}(A,B,k,i\kappa,\underline{a})= A X_{n,m}(k,i\kappa,\au) + B Y_{n,m}(k,i\kappa,\au).$$ 
Since $\Ke$ is a finite dimensional space the condition
\begin{align*}
\det Z_{n,m}(A,B,k,i\kappa,\underline{a})=0
\end{align*}
is equivalent to the condition given in equation \eqref{EWIndQG}. 
\begin{remark}\label{SingpointofZ}
Let $T(A,B)$ be self-adjoint. Then there are no functions $\psi(\cdot,k, ik)$ of the form \eqref{Ansatz1IndQG}, for $k\in \Qe$ with $(\tau-k^2)\psi(\cdot,k, ik)=0$ and $\psi(\cdot,k, ik)\in \Dom(T(A,B))$. Similarly for $k\in \Pe$, there is no such $\psi(\cdot,k, -ik)$ with $(\tau-k^2)\psi(\cdot,k, -ik)=0$ that satisfies $\psi(\cdot,k, -ik)\in \Dom(T(A,B))$. This is due to the fact that $\psi(\cdot,k, ik)$ and $\psi(\cdot,k, -ik)$, respectively would be eigenfunctions to the complex eigenvalue $k^2\in \C\setminus\R$, which would contradict the self--adjointness of $T(A,B)$. The question where the resolvent $R(k)=(T(A,B)-k^2)^{-1}$ as function in $k$ admits a meromorphic continuation to the real line is related to equation \eqref{EWIndQG}. The poles of the meromorphic continuation of $R(\cdot)$ are linked to certain singular points of $Z_{n,m}(A,B,\cdot,\cdot,\au)$. 
\end{remark}
For star graphs, that is for graphs with $\Ie=\emptyset$, the matrix $Z_{n,m}(A,B,k,i\kappa,\au)$ simplifies to
\begin{align*}
Z_{n,m}(A,B,k,i\kappa) = A + B \begin{bmatrix} ik & 0 \\ 0 & -\kappa  \end{bmatrix}.
\end{align*}
Instead of the matrices $A$ and $B$ one can consider the equivalent parametrization according to Corollary~\ref{PL}. There is an orthogonal projector $P$ and a Hermitian matrix $L$ acting in $\Ker P$ such that with $A^{\prime}= L+P$ and $B^{\prime}= P^{\perp}J_{n,m}$ one has $T(A^{\prime},B^{\prime})=T(A,B)$. Denote by $l_j$ the eigenvalues of $L$ and by $P_j$ the orthogonal projector on the corresponding eigenspace. Then one has for $k,\kappa\in \C\setminus\{0\}$ the representation
\begin{align*}
\displaystyle{Z_{n,m}(A,B,k,i\kappa) = P+ \sum_{j} P_j \left(l_j  +  \begin{bmatrix} -ik & 0 \\ 0 & \kappa \end{bmatrix}\right)}.   
\end{align*}

Analogue to \cite[Theorem 3.1]{VKRS1999} one proves now
\begin{lemma}\label{zeroZ}
Let $T(A,B)$ be self--adjoint. Then zeros of the function 
$$k \mapsto \det Z_{n,m}(A,B,k,ik,\underline{a})$$ 
are discrete and there are no zeros in $\Qe$. For $\Ie=\emptyset$ there are $\abs{\Ee}$ zeros in $\overline{\Qe}$ at the most. The analogue statements hold for the zeros of $\kappa \mapsto\det Z_{n,m}(A,B,i\kappa,\kappa,\underline{a})$.
\end{lemma}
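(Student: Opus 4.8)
The plan is to adapt the argument of \cite[Theorem 3.1]{VKRS1999} for Laplacians, exploiting the representation of $Z_{n,m}$ from Corollary~\ref{PL}. First I would record that on a star graph ($\Ie=\emptyset$) we have the explicit formula
\begin{align*}
Z_{n,m}(A,B,k,ik) = P + \sum_j P_j\left(l_j + \begin{bmatrix} -ik & 0 \\ 0 & -ik \end{bmatrix}\right) = P + \sum_j P_j(l_j - ik\,\mathds{1}),
\end{align*}
so that $\det Z_{n,m}(A,B,k,ik,\underline a)$ is, up to a nonzero constant, a polynomial in $k$ of degree equal to $\dim \Ran P^\perp = \Rk B$, which is at most $\abs{\Ee}=n+m$; this already gives discreteness of zeros and the bound $\abs{\Ee}$ on the number of zeros in $\overline{\Qe}$ (counted with multiplicity). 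For general $\Ie\neq\emptyset$, the entries of $X_{n,m}$ and $Y_{n,m}$ involve $e^{\pm ik a_i}$, so $\det Z_{n,m}(A,B,k,ik,\underline a)$ is an entire function of $k$ (a finite exponential sum) and hence has discrete zeros unless it vanishes identically; I would rule out the identically-zero case by noting that $Z_{n,m}(A,B,k,ik,\underline a)$ is invertible for $k\in\Qe$, which is the substantive point below.

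The key step is the absence of zeros in $\Qe$. Suppose $\det Z_{n,m}(A,B,k_0,ik_0,\underline a)=0$ for some $k_0\in\Qe$. Then by \eqref{EWIndQG} there exists $\chi(k_0,ik_0)\in\Ke\setminus\{0\}$ with $Z_{n,m}(A,B,k_0,ik_0,\underline a)\chi(k_0,ik_0)=0$, and the associated \textit{Ansatz} function $\psi(\cdot,k_0,ik_0)$ of the form \eqref{Ansatz1IndQG} is a nonzero element of $\De$ satisfying the boundary conditions $A\underline{\psi}+B\underline{\psi^{\prime}}=0$, hence $\psi(\cdot,k_0,ik_0)\in\Dom(T(A,B))$. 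Moreover, for $k_0\in\Qe$ one has $e^{ik_0 x}\in L^2([0,\infty))$ and $e^{-k_0 x}\in L^2([0,\infty))$, so $\psi(\cdot,k_0,ik_0)$ is square integrable also on the external edges and is therefore a genuine eigenfunction with $(\tau-k_0^2)\psi(\cdot,k_0,ik_0)=0$. But $k_0^2\in\C\setminus\R$ since $k_0\in\Qe$, contradicting self-adjointness of $T(A,B)$ (this is exactly Remark~\ref{SingpointofZ}). Hence there are no zeros in $\Qe$. The statement for $\kappa\mapsto\det Z_{n,m}(A,B,i\kappa,\kappa,\underline a)$ follows identically: for $\kappa\in\Qe$ the function $e^{-\kappa x}$ is square integrable on positive external edges and $e^{i\kappa x}$ on... wait — one checks instead that $\kappa\in\Qe$ corresponds via $k=i\kappa\in\Pe$ to the square-integrable \textit{Ansatz} $\psi(\cdot,i\kappa,\kappa)$, whose eigenvalue $(i\kappa)^2=-\kappa^2\in\C\setminus\R$ again contradicts self-adjointness.

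The main obstacle I anticipate is purely bookkeeping: on a non-compact graph with internal edges the matrix $Z_{n,m}(A,B,k,ik,\underline a)$ is not polynomial in $k$, so "discreteness of zeros" must be argued through the identity theorem for the entire function $k\mapsto\det Z_{n,m}(A,B,k,ik,\underline a)$, and one needs the non-triviality of this function — which is supplied precisely by the just-proved invertibility on the open quadrant $\Qe$. The counting bound $\abs{\Ee}$ in $\overline{\Qe}$ is only claimed for $\Ie=\emptyset$, where the determinant is an honest polynomial of degree $\leq\abs{\Ee}$, so no further care is needed there; I would simply remark that the leading coefficient is $\det\big(\sum_j P_j(-i)\big)$ restricted to $\Ran P^\perp$, which is nonzero, pinning the degree to $\Rk B\leq\abs{\Ee}$ and completing the count.
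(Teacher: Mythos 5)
Your argument is essentially the paper's own proof: the determinant is an entire function of $k$, so by the identity theorem its zeros are discrete once the function is not identically zero, and the non-vanishing on $\Qe$ (and the analogous statement for $\kappa \mapsto \det Z_{n,m}(A,B,i\kappa,\kappa,\underline{a})$) follows exactly as in Remark~\ref{SingpointofZ}, since a zero would make the square-integrable \textit{Ansatz} function an eigenfunction of the self-adjoint operator $T(A,B)$ to the non-real eigenvalue $k^2$ (resp.\ $-\kappa^2$), while for $\Ie=\emptyset$ the determinant is a polynomial of degree at most $n+m=\abs{\Ee}$, giving the count. Only a cosmetic slip: in the Corollary~\ref{PL} representation evaluated at $\kappa=k$ the two diagonal blocks are $-ik$ and $k$ (they differ by the factor $i$ hidden in $J_{n,m}$), not both $-ik\,\mathds{1}$, so your exact-degree/leading-coefficient remark would need adjusting, but the lemma only uses the bound $\deg\leq\abs{\Ee}$ together with non-triviality, which your argument supplies.
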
  
\begin{proof}
The function $k \mapsto \det Z_{n,m}(A,B,k,ik,\underline{a})$ is an entire function. This follows directly from the definition. Therefore it can be represented as a power series. If the zeros had not been  discrete, then there would be a convergent sequence $\{k_j\}_{j\in \mathbb{N}}$ such that $\det Z_{n,m}(A,B,k_j,ik_j, \underline{a})=0$. This would imply that $\det Z_{n,m}(A,B,k,ik,\underline{a}) \equiv 0$. Therefore it is sufficient to find one point $k_0\in \C$ for which $\det Z_{n,m}(A,B,k_0,ik_0,\underline{a}) \neq 0$. If there had been such a $k_0$ with $\det Z_{n,m}(A,B,k_0,ik_0,\underline{a})=0$ from the open quadrant $\Qe$, this would imply the integrability of the \textit{Ansatz} function $\psi(x,k_0,ik_0)$. According to Remark~\ref{SingpointofZ} then $k_0^2\in \C\setminus\R$ would be an eigenvalue of $T(A,B)$, which is a contradiction to the self--adjointness of $T(A,B)$. 
\end{proof}

Putting the pieces together one obtains
\begin{proposition}
Let $T(A,B)$ be self-adjoint. Then $k^2>0$ is an eigenvalue of $T(A,B)$ if and only if there is a coefficient $\chi(k,ik)\neq 0$ with $s_j(k)=0$ for $j\in \Ee_+$ and for the positive square roots $k>0$ of $k^2$ such that
\begin{eqnarray*}
Z_{n,m}(A,B,k,ik,\underline{a})\chi(k,ik) =0. 
\end{eqnarray*}
The number $-\kappa^2<0$ is a negative eigenvalue of $T(A,B)$ if and only if for the positive square root $\kappa>0$ of $\kappa^2$ there is a coefficient $\chi(i\kappa,\kappa)\neq 0$ with $s_j(\kappa)=0$ for $j\in \Ee_-$ such that
\begin{eqnarray*}
Z_{n,m}(A,B,i\kappa,\kappa,\underline{a})\chi(i\kappa,\kappa) =0 &\mbox{with} & s_j(\kappa)=0. 
\end{eqnarray*}
\end{proposition}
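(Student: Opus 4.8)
The plan is to characterize eigenvalues of $T(A,B)$ by combining two facts: first, that an eigenfunction of $T(A,B)$ for real spectral parameter must be an $L^2$-solution of $(\tau - \lambda)u = 0$, and second, that on each edge such a solution is necessarily of the \emph{Ansatz} form \eqref{Ansatz1IndQG} with the appropriate decay on external edges. So I would begin by fixing $\lambda = k^2 > 0$ with $k > 0$ the positive square root. Any $u \in \Dom(T(A,B)) \subset \De$ solving $(\tau - k^2)u = 0$ satisfies, edge by edge, $-u_j'' = k^2 u_j$ on positive edges and $u_j'' = k^2 u_j$ on negative edges (using $\kappa = k$, since for negative edges one needs $\kappa^2 = \lambda = k^2$). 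Hence on each positive edge $u_j$ is a combination of $e^{\pm ikx_j}$, and on each negative edge a combination of $e^{\pm k x_j}$. On a \emph{positive external} edge $j \in \Ee_+$, square integrability of $u_j$ on $[0,\infty)$ forces the $e^{-ikx_j}$ coefficient to vanish (since $e^{-ikx}\notin L^2$ for $k>0$), leaving $u_j = s_j(k)e^{ikx_j}$; on a \emph{negative external} edge $j\in\Ee_-$, square integrability forces the $e^{+kx_j}$ coefficient to vanish, leaving $u_j = s_j(k) e^{-kx_j}$ — but this is precisely $\psi(\cdot, k, i\kappa)$ with $\kappa = k$, i.e. $\psi(\cdot,k,ik)$ with the convention $s_j(ik) = s_j(k)$ on $\Ee_-$. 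So every such $u$ has the form \eqref{Ansatz1IndQG} with parameter vector $\chi(k,ik)$, and $u\neq 0$ iff $\chi(k,ik)\neq 0$.

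Next I would invoke the boundary-condition bookkeeping already set up before \eqref{EWIndQG}: $u\in\Dom(T(A,B))$ additionally requires $A\underline{u} + B\underline{u}' = 0$, which by the formulae for $\underline{\psi(\cdot,k,i\kappa)}$ and $\underline{\psi(\cdot,k,i\kappa)}'$ translates into $Z_{n,m}(A,B,k,ik,\underline a)\,\chi(k,ik) = 0$. Thus: $k^2>0$ is an eigenvalue iff there is $\chi(k,ik)\neq 0$ in the kernel of $Z_{n,m}(A,B,k,ik,\underline a)$. It remains only to explain the extra condition $s_j(k) = 0$ for $j\in\Ee_+$. The point is that a nonzero solution of the form \eqref{Ansatz1IndQG} with $\kappa = k > 0$ is automatically square integrable on the negative external edges (that is how $\kappa$ was chosen), and on the internal edges it is trivially $L^2$; the \emph{only} obstruction to $u\in\He$ is the behaviour on positive external edges, where $e^{ikx}$ is bounded but \emph{not} $L^2$ on $[0,\infty)$. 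Hence $u\in\He$ — equivalently $u$ is a genuine eigenfunction rather than a generalized/scattering solution — precisely when $s_j(k) = 0$ for all $j\in\Ee_+$. This distinguishes eigenvalues from resonances, the latter being exactly the case $\chi\neq 0$, $Z_{n,m}\chi = 0$ but some $s_j(k)\neq 0$; compare Remark~\ref{SingpointofZ} and Lemma~\ref{zeroZ}. The negative-eigenvalue statement is obtained by the symmetric argument: with $-\kappa^2<0$ and $\kappa>0$, square integrability is automatic on positive external edges (where now one uses $e^{-\kappa x}\in L^2$, i.e. the parameter $ik = i\kappa$ lies in $\Qe$), and the only obstruction sits on the negative external edges, giving $s_j(\kappa) = 0$ for $j\in\Ee_-$ together with $Z_{n,m}(A,B,i\kappa,\kappa,\underline a)\chi(i\kappa,\kappa) = 0$; alternatively one quotes Remark on $-T(A,B)$ being $T(A,B)$ with positive and negative edges swapped.

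The one substantive point that needs genuine care — as opposed to routine translation through the already-established matrix formalism — is the claim that a nonzero $u$ of the form \eqref{Ansatz1IndQG} satisfying the boundary conditions \emph{and} lying in $\He$ is necessarily a nonzero eigenfunction, i.e. that $\tau u = k^2 u$ holds in the operator sense and $u$ is in the domain. This follows because $\Dom(T(A,B)) = \{\psi\in\De \mid A\underline\psi + B\underline{\psi}' = 0\}$ and $\tau$ acts as the formal expression; so once $u\in\De$ (which requires $u\in\He$, guaranteed by the decay conditions just discussed, plus $\tau u = k^2 u \in\He$, automatic) and $u$ satisfies the boundary conditions, then $u\in\Dom(T(A,B))$ and $T(A,B)u = \tau u = k^2 u$. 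Conversely, self-adjointness of $T(A,B)$ (Theorem~\ref{sabc}) guarantees the spectrum is real, so no solutions with $\chi\neq 0$ exist for non-real $k^2$, which is why only $k>0$ (resp. $\kappa>0$) needs to be considered; this is the content invoked from the discussion preceding the proposition. Aside from this, everything is a bookkeeping exercise in the $X_{n,m}, Y_{n,m}, Z_{n,m}$ notation already fixed in the text.
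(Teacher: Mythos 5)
Your proposal is correct and takes essentially the same route as the paper, which proves the proposition implicitly by ``putting the pieces together'': every eigenfunction for real nonzero energy is edge-wise of the \emph{Ansatz} form \eqref{Ansatz1IndQG}, the boundary conditions translate into $Z_{n,m}(A,B,\cdot,\cdot,\underline{a})\chi=0$ via the trace formulae, and square integrability on the external edges produces the extra conditions $s_j=0$. One small imprecision: for real $k>0$ neither $e^{ikx}$ nor $e^{-ikx}$ lies in $L^2([0,\infty))$, so on a positive external edge square integrability does not merely eliminate the $e^{-ikx}$ coefficient as you first state --- no nontrivial combination of the two is square integrable, so the eigenfunction vanishes identically on $\Ee_+$, which is precisely the combined effect of your two steps and is what the condition $s_j(k)=0$, $j\in\Ee_+$, encodes (and symmetrically for $\Ee_-$ at negative energies).
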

In particular for $\Ee=\emptyset$ one obtains 
\begin{corollary}
Let $T(A,B)$ be self-adjoint and assume that $\Ee=\emptyset$. Then the positive square roots $k>0$ of positive eigenvalues $k^2$ are exactly the solutions of the secular equation  
\begin{eqnarray*}
\det Z_{n,m}(A,B,k,ik,\underline{a})=0
\end{eqnarray*}
and the positive square roots $\kappa>0$ of the absolute values of negative eigenvalues $-\kappa^2$ are exactly the solutions of the secular equation
\begin{eqnarray*}
\det Z_{n,m}(A,B,i\kappa, \kappa,\underline{a})=0. 
\end{eqnarray*}
\end{corollary}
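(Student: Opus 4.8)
The plan is to obtain this corollary as the degenerate case $\Ee_+=\Ee_-=\emptyset$ of the preceding proposition, so that essentially no new analysis is needed beyond unwinding two finite--dimensional observations. First I would note that when $\Ee=\emptyset$ the side conditions ``$s_j(k)=0$ for $j\in\Ee_+$'' and ``$s_j(\kappa)=0$ for $j\in\Ee_-$'' occurring in the proposition are vacuous, there being no external edges on which to impose them. Consequently the proposition reduces, for $k>0$, to the assertion that $k^2$ is a positive eigenvalue of $T(A,B)$ if and only if $Z_{n,m}(A,B,k,ik,\underline{a})\chi=0$ for some $\chi\in\Ke\setminus\{0\}$, and, for $\kappa>0$, to the assertion that $-\kappa^2$ is a negative eigenvalue if and only if $Z_{n,m}(A,B,i\kappa,\kappa,\underline{a})\chi=0$ for some $\chi\in\Ke\setminus\{0\}$.

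Second, I would invoke the equivalence already recorded just below \eqref{EWIndQG}: because $\Ke=\Ke_{\Ie_+}^-\oplus\Ke_{\Ie_+}^+\oplus\Ke_{\Ie_-}^-\oplus\Ke_{\Ie_-}^+$ is finite dimensional (of dimension $2\abs{\Ie_+}+2\abs{\Ie_-}=n+m$ once $\Ee=\emptyset$) and $Z_{n,m}(A,B,k,i\kappa,\underline{a})=AX_{n,m}(k,i\kappa,\underline{a})+BY_{n,m}(k,i\kappa,\underline{a})$ maps $\Ke$ to $\Ke$, it is a square matrix, so it has a nontrivial kernel precisely when its determinant vanishes. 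Inserting this into the two equivalences of the first step produces exactly the two secular equations in the statement. Discreteness of their zero sets, which makes the word ``secular'' meaningful, is then supplied by Lemma~\ref{zeroZ}. If a parametrization independent of $A$ and $B$ were wanted one could rewrite $Z_{n,m}$ through the data $P$, $L$, $J_{n,m}$ of Corollary~\ref{PL}, but this is not required.

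The only ingredient that is not purely formal --- and which is in fact already carried by the proof of the proposition --- is the verification that on a compact metric graph every eigenfunction of $T(A,B)$ has the \textit{Ansatz} shape \eqref{Ansatz1IndQG}. On each internal positive edge a solution of $-u_j''=k^2u_j$ with $k\neq0$ is a linear combination of $e^{ikx}$ and $e^{-ikx}$, and likewise on each internal negative edge with the appropriate fundamental system; since every internal edge is a bounded interval, all of these exponentials belong to $L^2(I_j)$, so --- unlike on a half--line --- no further integrability constraint survives, which is precisely why the $s_j$--conditions drop out. Self--adjointness of $T(A,B)$ (Remark~\ref{SingpointofZ}) forces the relevant $k,\kappa$ to be real and positive, so the chosen branches of the fundamental systems are the correct ones. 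I do not anticipate a genuine obstacle here: the entire content of the corollary is the bookkeeping of the two boundary cases already handled in the proposition together with the ``$\det=0$'' reformulation, not a new estimate.
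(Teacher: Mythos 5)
Your proposal is correct and follows the paper's own route: the corollary is obtained by specializing the preceding proposition to $\Ee=\emptyset$, where the $s_j$--conditions are vacuous, and then using the equivalence (already recorded below \eqref{EWIndQG}) between a nontrivial kernel of the square matrix $Z_{n,m}$ and the vanishing of its determinant. Your extra remark that all exponential solutions are square integrable on the bounded internal edges is exactly the reason the \textit{Ansatz} captures every eigenfunction in the compact case, and is consistent with the paper's treatment.
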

\begin{remark}
In the presence of external edges only the solutions that are square integrable on the external edges are eigenvalues. In general one cannot exclude other resonances in the set of singular points of the function $k \mapsto Z_{n,m}(A,B,k,ik)$. In all the examples studied the resonances are only eigenvalues, but the author has not been able to find a general proof. 
\end{remark}
  
\subsection{Eigenvalue zero.} The solutions of $u^{\prime\prime}(x)=0$ are the affine functions and only the trivial solution is square integrable on $[0,\infty)$. Thus the eigenfunctions of $T(A,B)$ to the eigenvalue zero have to be affine on each edge and one gets to the \textit{Ansatz} 
\begin{align*}
\psi^0(x)= \begin{cases}
           0, & j\in \Ee_+\cup \Ee_-, \\
           \alpha_j^0  + \beta_j^0 x_j, & j\in  \Ie_+ \cup \Ie_-.\\
        \end{cases}
\end{align*}
The coefficients appearing on the right hand side are written into the vectors
\begin{eqnarray}\label{EW0} 
 \chi_+^0= \begin{bmatrix} 0 \\ \{\alpha_j^0\}_{j\in \Ie_+} \\ \{\beta_j^0\}_{j\in \Ie_+} \end{bmatrix} &\mbox{and} & 
 \chi_-^0= \begin{bmatrix} 0 \\ \{\alpha_j^0\}_{j\in \Ie_-} \\ \{\beta_j^0\}_{j\in \Ie_-} \end{bmatrix}, 
\end{eqnarray}
which are summarized in one vector
\begin{eqnarray*}
 \chi^0=\begin{bmatrix} \chi_+^0 \\ \chi_-^0  \end{bmatrix}.  
\end{eqnarray*}
For the boundary values one obtains 
\begin{eqnarray*}
   \underline{\psi^0}= X_{n,m}^0(\au) \chi^0  & \mbox{and} & 
   \underline{\psi^0}^{\prime}= Y_{n,m}^0(\au) \chi^0
\end{eqnarray*}
with
\begin{eqnarray*}
X_{n,m}^0(\au)=\begin{bmatrix}X_n^0(\au_+) & 0 \\ 0 & X_m^0(\au_-)\end{bmatrix} & \mbox{and} & Y_{n,m}^0(\au)=\begin{bmatrix}Y_n^0(\au_+) & 0 \\ 0 & Y_m^0(\au_-)\end{bmatrix},
\end{eqnarray*}
which are matrices in $\Ke$, that are written with respect to the decomposition  $\Ke=\Ke_+\oplus \Ke_- $. This uses the notation 
\begin{eqnarray*} 
X^0_{l}(\bu)=\begin{bmatrix}0 & 0 &0 \\ 0 & \mathds{1} &0 \\ 0 & \mathds{1} & \bu  \end{bmatrix} & \mbox{and}& 
Y^0_{l}(\bu)=\begin{bmatrix}0 & 0 &0 \\ 0 & 0 & \mathds{1}\\ 0 & 0 & -\mathds{1}  \end{bmatrix},
\end{eqnarray*}
where for $l=n$ one has to plug in $\bu=\au_+$ and for $l=m$ one inserts $\bu=\au_-$. Actually these are the matrices known from the spectral theory of Laplace operators on finite metric graphs. The matrices $X_{n}^0(\au_+)$ and $Y_{n}^0(\au_+)$ define operators in $\Ke_+$, and the matrices $X_{m}^0(\au_-)$ and $Y^0_{m}(\au_-)$ define operators in $\Ke_-$. The matrix $\bu$ denotes the diagonal matrix with entries $\{\bu\}_{k,l}= \delta_{k,l} b_l$. This gives
\begin{proposition}
Let $\Ee=\emptyset$. Then $T(A,B)$ is invertible if and only if
$$\det (AX_{n,m}^0(\au)+BY_{n,m}^0(\au))=0.$$ 
The dimension of $\Ker T(A,B)$ is equal to the dimension of the space \\ $\Ker(AX_{n,m}^0(\au)+BY_{n,m}^0(\au))$. In the presence of external edges, zero is an eigenvalue of $T(A,B)$ if and only if there exists a vector $\chi^0$ with entries of the form \eqref{EW0} such that 
$$(AX_{n,m}^0(\au)+BY_{n,m}^0(\au))\chi^0=0.$$
\end{proposition}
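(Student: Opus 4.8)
The plan is to repeat the analysis of the non-zero spectrum with the affine \textit{Ansatz} $\psi^0$ in place of the exponential one. First I would observe that, since $\tau$ acts as $-\Delta$ on the positive edges and as $+\Delta$ on the negative ones, a function $\psi\in\De$ satisfies $\tau\psi=0$ if and only if $\psi_j''=0$ on every $I_j$, i.e.\ $\psi_j$ is affine on each edge; on an external edge $[0,\infty)$ the only square-integrable affine function is the zero function, so $\Ker T^{\max}$ consists exactly of the \textit{Ansatz} functions $\psi^0$ introduced above. Moreover the assignment $\chi^0\mapsto\psi^0$ from vectors of the form \eqref{EW0} to these functions is a linear \emph{bijection}: given $\psi^0$ the coefficients $\alpha_j^0,\beta_j^0$ are recovered as $\psi_j^0(0)$ and $(\psi_j^0)'(0)$, while the slots of $\chi^0$ attached to external edges are prescribed to be $0$ by definition.

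Next I would use the boundary-value identities $\underline{\psi^0}=X_{n,m}^0(\au)\chi^0$ and $\underline{\psi^0}^{\prime}=Y_{n,m}^0(\au)\chi^0$ established just above, together with the description $\Dom(T(A,B))=\{\psi\in\De\mid A\underline{\psi}+B\underline{\psi}^{\prime}=0\}$ from Theorem~\ref{sabc}. Since $T(A,B)\subset T^{\max}$, one has $\psi^0\in\Ker T(A,B)$ precisely when $\chi^0$ has the form \eqref{EW0} and $(AX_{n,m}^0(\au)+BY_{n,m}^0(\au))\chi^0=0$. In the presence of external edges this is exactly the asserted criterion for $0$ to be an eigenvalue (with $\chi^0\neq0$). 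When $\Ee=\emptyset$ the constraint \eqref{EW0} is void, so $\chi^0$ runs over all of $\Ke$, and the bijection above restricts to a linear isomorphism between $\Ker(AX_{n,m}^0(\au)+BY_{n,m}^0(\au))$ and $\Ker T(A,B)$; this gives the dimension formula.

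For the invertibility statement one notes that for $\Ee=\emptyset$ the graph is compact, so $\Dom(T(A,B))$ consists of $H^2$-functions on finitely many compact intervals and embeds compactly into $\He$; hence $T(A,B)$ has compact resolvent and, being self-adjoint, is boundedly invertible if and only if $\Ker T(A,B)=\{0\}$. By the dimension formula $\Ker T(A,B)$ is non-trivial exactly when $\det(AX_{n,m}^0(\au)+BY_{n,m}^0(\au))=0$, which is therefore the criterion for $T(A,B)$ to fail to be invertible.

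The only point that needs a little attention is the bookkeeping around the vanishing top blocks of $X_{n,m}^0(\au)$ and $Y_{n,m}^0(\au)$: because the external-edge components of $\underline{\psi^0}$ and $\underline{\psi^0}^{\prime}$ vanish identically, the corresponding components of $\chi^0$ are not detected by the boundary values and must be fixed to $0$ by hand, which is precisely what the form \eqref{EW0} does. Apart from this, every step is a direct transcription of the argument used for non-zero eigenvalues, and the compact-resolvent fact for compact graphs is standard (alternatively, one may argue directly that $\tau u=f$ has a unique solution in $\Dom(T(A,B))$ for every $f\in\He$ whenever $\det(AX_{n,m}^0(\au)+BY_{n,m}^0(\au))\neq0$).
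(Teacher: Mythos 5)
Your argument is correct and is essentially the paper's own (implicit) one: the proposition is presented as a direct consequence of the affine \textit{Ansatz} and the boundary-value identities $\underline{\psi^0}=X_{n,m}^0(\au)\chi^0$, $\underline{\psi^0}^{\prime}=Y_{n,m}^0(\au)\chi^0$, and your linear bijection $\chi^0\mapsto\psi^0$ together with the compact-resolvent/self-adjointness step for the invertibility claim is exactly the intended reasoning (the same compact-embedding argument is used in the proof of the Weyl asymptotics). Note only that the displayed criterion in the proposition as printed contains a sign typo --- invertibility corresponds to $\det(AX_{n,m}^0(\au)+BY_{n,m}^0(\au))\neq 0$ --- and what you prove is this corrected version.
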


\begin{remark}
Since the solutions of $u^{\prime\prime}(x)=0$ are affine functions one can use the same \textit{Ansatz} $\psi^0$ to find solutions of the Laplace equation $\Delta u=0$ on the graph. Consequently zero is an eigenvalue of $T(A,B)$ if and only if zero is an eigenvalue of the not necessarily self-adjoint Laplace operator $-\Delta(A,B)$. Note that $-\Delta(A,B)$ has the same domain as $T(A,B)$, since it is defined by the same boundary conditions. 
\end{remark}

\subsection{Eigenvalue asymptotics.} Assume $T(A,B)$ to be self--adjoint, then its eigenvalues, according to Lemma~\ref{zeroZ}, are discrete with only finite multiplicities. Denote by $0<\lambda_1^+ \leq \lambda_2^+ \leq \ldots$ the positive eigenvalues, counted with multiplicities and by $0<\lambda_1^- \leq \lambda_2^- \leq \ldots$ the absolute values of the negative eigenvalues, counted with multiplicities. Note that in general the eigenvalues of $T(A,B)$ can be embedded in the continuous spectrum. One defines the counting function for the positive eigenvalues $N_{+}(\lambda; T(A,B))$ and the counting function for the negative eigenvalues $N_{-}(\lambda; T(A,B))$ by
$$N_{\pm}(\lambda; T(A,B)) = \sum_{\lambda_{j}^{\pm}<\lambda} 1.$$
In certain cases a generalized Weyl law holds for the asymptotic behaviour of the eigenvalue counting functions. 

\begin{proposition}\label{Weylgraph}  \ \ \ \ \ \ \  \ \ \ \  \ \  \
\begin{enumerate}
\item Assume that $(\Ge,\au)$ is a compact metric graph, that is $\Ee=\emptyset$, and let $T(A,B)$ be self-adjoint. Then asymptotically the eigenvalue counting functions are  
$$\displaystyle{N_{\pm}(\lambda; T(A,B))\sim \frac{1}{\pi}\bigg(\sum_{i\in\mathcal{I}_{\pm}} a_i\bigg)\, \lambda^{1/2}, \ \mbox{for} \ \lambda\to\infty.}$$
\item Assume that $(\Ge,\au)$ is non--compact, but $\Ee_{-}=\emptyset\,(\Ee_+=\emptyset)$ and let $T(A,B)$ be self-adjoint. Then the asymptotic formula for $N_{-}(\lambda; T(A,B))$ \\ $(N_{+}(\lambda; T(A,B)))$ still holds.
\end{enumerate}
\end{proposition}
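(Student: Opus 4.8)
The plan is to compare $T(A,B)$ with the completely decoupled operator obtained by imposing Dirichlet conditions at every vertex, and to exploit that any two self-adjoint extensions of the symmetric operator $T^{\min}$ (which has deficiency indices $(d,d)$, $d=n+m$) have resolvents differing by an operator of rank at most $d$, its range lying in the $d$-dimensional space $\Ker(T^{\max}-z)$.

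First I would set $T_D:=T(\mathds{1},0)$, which is self-adjoint by Theorem~\ref{sabc}; its domain consists of all $\psi\in\De$ with $\psi_j(0)=0$ for $j\in\Ee\cup\Ie$ and $\psi_j(a_j)=0$ for $j\in\Ie$, so $T_D$ is the orthogonal direct sum, over the edges $j$, of the Dirichlet realizations of $-d^2/dx^2$ on $I_j$ for $j\in\Ee_+\cup\Ie_+$ and of $+d^2/dx^2$ on $I_j$ for $j\in\Ee_-\cup\Ie_-$. On a compact interval $[0,a]$ the operator $-d^2/dx^2$ with Dirichlet conditions has eigenvalues $(p\pi/a)^2$, $p\in\N$, hence counting function $\lfloor(a/\pi)\lambda^{1/2}\rfloor\sim(a/\pi)\lambda^{1/2}$; on $[0,\infty)$ it has purely absolutely continuous spectrum $[0,\infty)$ and no eigenvalues; correspondingly $+d^2/dx^2$ on $[0,a]$ has eigenvalues $-(p\pi/a)^2$ and on $[0,\infty)$ spectrum $(-\infty,0]$. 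Summing the edge contributions gives, in case (1) (so $\Ee=\emptyset$), that $T_D$ has compact resolvent and $N_+(\lambda;T_D)=\sum_{i\in\Ie_+}\lfloor(a_i/\pi)\lambda^{1/2}\rfloor\sim\tfrac1\pi\bigl(\sum_{i\in\Ie_+}a_i\bigr)\lambda^{1/2}$, with the symmetric statement for $N_-(\lambda;T_D)$; in case (2), say $\Ee_-=\emptyset$, the essential spectrum of $T_D$ is $[0,\infty)$, the spectrum below $0$ is exactly that of $+\Delta$ on the compact graph $\Ge_-$, and again $N_-(\lambda;T_D)\sim\tfrac1\pi\bigl(\sum_{i\in\Ie_-}a_i\bigr)\lambda^{1/2}$.

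It then remains to transfer these asymptotics to $T(A,B)$. Since the resolvent difference is finite rank, $\sigma_{\mathrm{ess}}(T(A,B))=\sigma_{\mathrm{ess}}(T_D)$ — empty in case (1), $[0,\infty)$ in case (2) — so the spectrum of $T(A,B)$ is discrete in $(0,\infty)$ and in $(-\infty,0)$ (as also follows from Lemma~\ref{zeroZ}) and $N_\pm(\lambda;T(A,B))$ is finite for every $\lambda>0$. The claim then reduces to a uniform bound $\bigl|N_\pm(\lambda;T(A,B))-N_\pm(\lambda;T_D)\bigr|\le C$ with $C=C(d)$. I would obtain this from the stability of spectral projections under finite-rank resolvent perturbations: for an interval $(\alpha,\beta)$ whose endpoints lie in the resolvent set of both operators and in whose closure the spectrum is discrete, $E_{T(A,B)}((\alpha,\beta))-E_{T_D}((\alpha,\beta))$ has rank bounded by a fixed multiple of $d$ (equivalently, the spectral shift function of the pair is bounded by $d$); applying this with $(\alpha,\beta)=(\varepsilon,\lambda)$, respectively $(-\lambda,-\varepsilon)$, for a fixed small $\varepsilon>0$ (absorbing the bounded count on $(0,\varepsilon]$, respectively $[-\varepsilon,0)$) yields the bound, and together with the previous step this proves the proposition.

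\textbf{Main obstacle.} The genuinely non-routine point is this last step: because $T(A,B)$ is not semibounded, the elementary min-max reasoning that gives stability of eigenvalue counting for self-adjoint Laplacians under rank-$d$ perturbations is unavailable, and one must instead argue with the finite-rank stability of spectral projections (or the boundedness of the spectral shift function) on the regions where the spectrum is discrete; everything else — the explicit spectra of the decoupled intervals and the bookkeeping over the edges — is routine. As an alternative to the transfer step in the compact case one may count directly, by the argument principle, the zeros of the entire function $k\mapsto\det Z_{n,m}(A,B,k,ik,\au)$ in a large half-disc contained in $\overline{\Qe}$: its asymptotic behaviour is dictated by the oscillatory factors $e^{\pm ika_i}$ coming from the positive internal edges $i\in\Ie_+$, whereas the factors $e^{\pm ka_i}$ from $i\in\Ie_-$ are real and, after normalization, do not produce zeros in the relevant regime — giving once more the density $\tfrac1\pi\sum_{i\in\Ie_+}a_i$.
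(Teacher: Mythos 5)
Your proposal is correct and follows essentially the same route as the paper: comparison with the decoupled Dirichlet operator $T(\mathds{1},0)$, the finite-rank resolvent difference supplied by Proposition~\ref{resolvent}, and the explicit Dirichlet spectra on the intervals summed over the edges. The only divergence is in the transfer step: the paper handles the lack of semiboundedness simply by applying the min--max principle to the resolvents (bounded, respectively compact, self-adjoint operators differing by finite rank), so the spectral-shift/projection-rank machinery you invoke, while valid, is not needed, and your remark that min--max reasoning is unavailable overlooks precisely this resolvent-level application.
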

\begin{proof}
Suppose that $\Ee=\emptyset$. Then it follows from the compact embedding of $\De \hookrightarrow \He$ that $T(A,B)$ has compact resolvent and therefore there exists $k^2\in \R$ with $k^2\in \rho(T(A,B))\cap \rho(T(\mathds{1},0))$. According to the forthcoming Proposition~\ref{resolvent}, the difference $$(T(A,B)-k^2)^{-1}-(T(\mathds{1},0)-k^2)^{-1}$$ is a self-adjoint finite rank operator. To prove the first statement, one applies the $\min-\max$--principle to the resolvents. As the difference has only finite rank one obtains that 
$$N_{\pm}(\lambda; T(A,B)) \sim N_{\pm}(\lambda; T(\mathds{1},0)), \ \mbox{for} \ \lambda\to\infty.$$
The operator $T(\mathds{1},0)$ decouples all edges and defines (positive) Dirichlet Laplacians on each positive edge and Dirichlet Laplacians multiplied with minus one on each negative edge. The claim follows by summing up the eigenvalue counting functions of the Dirichlet Laplacians on the intervals.  

To prove the second statement one takes into account that for a bounded operator one can apply the $\min-\max$--principle to characterize the discrete spectrum below the essential spectrum. For the case of $\Ee_+\neq \emptyset$ and $\Ee_-=\emptyset$ the essential spectrum of $T(A,B)$ is $[0,\infty)$ and the negative spectrum is discrete. Denote by $-\mu^-_1<0$ the largest negative eigenvalue of $T(\mathds{1},0)$. Then all $-k^2\in \left(\max\{-\lambda^-_1,-\mu_1^-\}, 0\right)$ are in the resolvent set of both $T(A,B)$ and $T(\mathds{1},0)$. For such a $-k^2$ the difference $$(T(A,B)+k^2)^{-1}-(T(\mathds{1},0)+k^2)^{-1}$$ is a self-adjoint finite rank operator and the negative spectrum of $(T(A,B)+k^2)^{-1}$ consists of a sequence of eigenvalues of finite multiplicities that accumulate at zero. Therefore one can apply the $\min-\max$--principle to characterize the positive eigenvalues of $-(T(A,B)+k^2)^{-1}$ as well as the positive eigenvalues of $-(T(\mathds{1},0)+k^2)^{-1}$. As one has again only a perturbation of finite rank the claim follows by the same reasoning as above.    
\end{proof}
For the proof of Proposition~\ref{Weylgraph} it is essential to consider only finite rank perturbations.

\begin{remark}
Let $(\Ge,\au)$ be a compact finite metric graph and assume that $T(A,B)$ is self--adjoint and invertible. Then the operator $T(A,B)$ has a compact inverse and the eigenvalues of $T(A,B)^{-1}$ can be determined as the successive extrema of the Rayleigh-quotient 
\begin{eqnarray*}
\rho[\psi]:= \frac{\langle \tau\psi, \psi \rangle}{\langle \tau\psi, \tau \psi \rangle}, & \psi \in \Dom(T(A,B)).
\end{eqnarray*}
This delivers a variational characterizations of the eigenvalues of $T(A,B)$.
\end{remark}

\subsection{Generalized eigenfunctions.}
An important tool for the study of the absolutely continuous part of self--adjoint operators $T(A,B)$ and hence for their scattering theory are generalized eigenfunctions. Assume that $\Ee\neq \emptyset$. For $l\in \Ee_+$ one \textit{Ansatz} for a generalized eigenfunction is
\begin{align*}\index{$\varphi_{l}(x,k,i\kappa)$}
\varphi_{l}(x,k,i\kappa)= \begin{cases}
\delta_{lj}e^{-ikx_l} + s_{lj}(k)e^{ikx_l}, & j\in \Ee_+, \\
          \alpha_{lj}(k) e^{ik x_j} + \beta_{lj}(k) e^{-ik x_j}, & j\in \Ie_+,\\
           s_{lj}(i\kappa ) e^{-\kappa x_j}, & j\in \Ee_-, \\
           \alpha_{lj}(i\kappa) e^{-\kappa x_j} + \beta_{lj}(i\kappa) e^{\kappa x_j}, & j\in \Ie_-
                    \end{cases} 
\end{align*}
and for $l\in \Ee_-$
\begin{align*}
\varphi_{l}(x,k,i\kappa)= \begin{cases}
s_{lj}(k)e^{ikx_l}, & j\in \Ee_+, \\
          \alpha_{lj}(k) e^{ik x_j} + \beta_{lj}(k) e^{-ik x_j}, & j\in \Ie_+,\\
      \delta_{lj} e^{\kappa x_j}  +  s_{lj}(i\kappa ) e^{-\kappa x_j}, & j\in \Ee_-, \\
           \alpha_{lj}(i\kappa) e^{-\kappa x_j} + \beta_{lj}(i\kappa) e^{\kappa x_j}, & j\in \Ie_-
                    \end{cases} 
\end{align*}
with $\delta_{lj}$ the Kronecker delta and sought after coefficients $s_{lj}(k), \alpha_{lj}(k)$ and $\beta_{lj}(k)$. The functions $\varphi_{l}(\cdot,k,i\kappa)$, $l\in \Ee$, are solutions of equation \eqref{ew+} on the positive edges and of equation \eqref{ew-} on the negative edges, but they are not square integrable. For notational simplicity these \textit{Ansatz} functions are written into the  $\abs{\Ee}\times \abs{\Ee\cup \Ie}$--matrix  $\varphi(x,k,i\kappa)$ with entries
$$\left\{\varphi(x,k,i\kappa)\right\}_{lj}:=\left\{\varphi_{l}(x,k,i\kappa)\right\}_j,$$
that is the $(l,j)$-th entry of $\varphi(\cdot,k,i\kappa)$ is the restriction of $\varphi_l$, $l\in \Ee$ to $I_j$, $j\in \Ee\cup \Ie$. The coefficients for $l\in \Ee_+$ are written into the vectors
\begin{eqnarray*}\index{$ \chi_{l,\pm}(k)$}
 \chi_{l,+}(k)= \begin{bmatrix} \{ \delta_{lj} + s_{lj}(k)\}_{j\in \Ee_+} \\ \{\alpha_{lj}(k)\}_{j\in \Ie_+} \\ \{\beta_{lj}(k)\}_{j\in \Ie_+} \end{bmatrix}, & &
 \chi_{l,-}(i\kappa)= \begin{bmatrix} \{s_{lj}(i\kappa)\}_{j\in \Ee_-} \\ \{\alpha_{lj}(i\kappa)\}_{j\in \Ie_-} \\ \{\beta_{lj}(i\kappa)\}_{j\in \Ie_-} \end{bmatrix}
\end{eqnarray*}
and for $l\in \Ee_-$ into
\begin{eqnarray*}
 \chi_{l,+}(k)= \begin{bmatrix} \{s_{lj}(k)\}_{j\in \Ee_+} \\ \{\alpha_j(k)\}_{j\in \Ie_+} \\ \{\beta_j(k)\}_{j\in \Ie_+} \end{bmatrix}, & & 
 \chi_{l,-}(i\kappa)= \begin{bmatrix} \{ \delta_{lj} +  s_j(i\kappa)\}_{j\in \Ee_-} \\ \{\alpha_j(i\kappa)\}_{j\in \Ie_-} \\ \{\beta_j(i\kappa)\}_{j\in \Ie_-} \end{bmatrix}.
\end{eqnarray*}
Finally one sets 
$$ \chi_{l}(k,i\kappa)=\begin{bmatrix} \chi_{l,+}(k) \\ \chi_{l,-}(i\kappa)  \end{bmatrix}, $$
which is a vector in $\Ke=\Ke_+\oplus\Ke_-$. These $\chi_{l}$ are summarized into the $\abs{\Ee}\times \abs{\Ee\cup \Ie}$ matrix $\chi(k,i\kappa)$ with entries
$$\left\{\chi(k,i\kappa)\right\}_{lj}:=\left\{\chi_{l}(k,i\kappa)\right\}_j.$$  \index{$\chi(k,i\kappa)$}
For the boundary values of the \textit{Ansatz} functions $\varphi_l(\cdot,k,i\kappa)$ defined above one obtains the formulae
\begin{align*}
  \underline{\varphi(\cdot,k,i\kappa)}&= \mathds{1}_{n+m} e_{n,m}  + X_{n,m}(k,i\kappa,\au) \chi(k,i\kappa) , \\
  \underline{\varphi(\cdot,k,i\kappa)}^{\prime}&=I_{n,m}(-k,-i\kappa)e_{n,m} + Y_{n,m}(k,i\kappa,\au)\chi(k,i\kappa),
\end{align*}
where for brevity
\begin{eqnarray*}\index{$I_{n,m}(k,i\kappa)$, $I_{n,m}$}
I_{n,m}(k,i\kappa)=\begin{bmatrix} ik\mathds{1}_{n} & 0 \\ 0 & -\kappa\mathds{1}_{m}\end{bmatrix}, &  & I_{n,m}= \begin{bmatrix} \mathds{1}_{n} & 0 \\ 0 & i\mathds{1}_{m}\end{bmatrix},
\end{eqnarray*}
which are related by $ik\cdot I_{n,m}=I_{n,m}(k,ik)$. The notation
\begin{eqnarray*}\index{$e_{n,m}$}
e_{n,m}= \begin{bmatrix} e_n & 0 \\ 0 & e_m  \end{bmatrix}, &\mbox{where} & e_{p}= \begin{bmatrix} \mathds{1} \\ 0 \\  0   \end{bmatrix}, \quad p\in \{n,m\}
\end{eqnarray*}
is used, where $e_{n}$ is a $\abs{\Ke_{\Ee_+}}\times \abs{\Ke}$-matrix, $e_{m}$ a $\abs{\Ke_{\Ee_-}}\times \abs{\Ke}$-matrix and consequently the block-operator matrix $e_{n,m}$ is a $\left(\abs{\Ke_{\Ee_+}}+\abs{\Ke_{\Ee_-}}\right)\times \abs{\Ke}$-matrix. The boundary conditions
\begin{eqnarray*}
 A \underline{\varphi_l(\cdot,k,i\kappa)}+ B\underline{\varphi_l(\cdot,k,i\kappa)}^{\prime}=0, &\mbox{for }  l\in \Ee
\end{eqnarray*}
are satisfied for all $\varphi_l(\cdot, k, i\kappa)$, $l\in \Ee$ if and only if  
\begin{align*}
 ( AX_{n,m}(k,i\kappa,\au) + BY_{n,m}(k,i\kappa,\au))\chi(k,i\kappa)=-(A \mathds{1}_{n+m}+  B I_{n,m}(-k,-i\kappa))e_{n,m}
\end{align*}
holds for the coefficient vector $\chi(k,i\kappa)$. If $AX_{n,m}(k,i\kappa,\au) + BY_{n,m}(k,i\kappa,\au)$ is invertible one can define the matrix valued transform 
\begin{align*}
 \mathfrak{X}(k,i\kappa)=-( AX_{n,m}(k,i\kappa,\au) + BY_{n,m}(k,i\kappa,\au))^{-1}(A \mathds{1}_{n+m}+  B I_{n,m}(-k,-i\kappa)).
\end{align*}
This is  a transform in $A$ and $B$ as well as in $k,\kappa$. However in the notation the dependence on $A$ and $B$ is omitted. When $\mathfrak{X}(k,i\kappa)$ is well-defined one can solve the 
equation for the coefficients $\chi(k,i\kappa)$, which are then uniquely determined by 
\begin{align}\label{chi}\index{$\mathfrak{X}(k,i\kappa)$}
 \chi(k,i\kappa)=  \mathfrak{X}(k,i\kappa)  e_{n,m}.
\end{align}
Note that this way one obtains that $\chi(\pm k,ik)$ is well-defined for all $k>0$, except for resonances with $\det Z_{n,m}(A,B, \pm k,ik,\au)=0$. With the coefficients $\chi(\pm k,ik)$, where $k>0$ such that $Z_{n,m}(A,B, \pm k,ik,\au)$ is regular, the functions 
\begin{eqnarray*}
\varphi_l(\cdot,k,ik) & \mbox{and} & \varphi_l(\cdot,-k,ik), \quad l\in \Ee_+, 
\end{eqnarray*}
satisfy the boundary conditions $A\underline{\varphi} + B \underline{\varphi_l}^{\prime}=0$. By construction they additionally solve the equation $\left( \tau -k^2 \right)\varphi =0$. Therefore they are denoted as \textit{generalized eigenfunctions of $T(A,B)$ for positive energies}. Analogue for $l\in \Ee_-$ and $\kappa>0$ such that $Z_{n,m}(A,B, i\kappa,\pm\kappa,\au)$ is regular the functions    
\begin{eqnarray*}
\varphi_l(\cdot,i\kappa,\kappa)& \mbox{and} & \varphi_l(\cdot,i\kappa,-\kappa)
\end{eqnarray*}
are denoted as \textit{generalized eigenfunctions of $T(A,B)$ for negative energies}. These statements are going to be specified, when considering scattering problems related to the operator $T(A,B)$ in Section~\ref{secscat}. There the associated wave operators are computed in terms of the generalized eigenfunctions of $T(A,B)$.

\subsection{Coefficient matrix.} 
For a star graph, that is for $\Ie=\emptyset$, the coefficients of the generalized eigenfunctions are simplified to become
\begin{align*}
 \chi(k,i\kappa)=-(A + B I_{n,m}(k,i\kappa) )^{-1}(A + B I_{n,m}(-k,-i\kappa)).
\end{align*}
Given an arbitrary finite metric graph $(\Ge,\au)$ and self-adjoint boundary conditions parametrized by matrices $A$ and $B$ one can cut all internal edges in twain and substitute each of the pieces by an external edge. Then the graph becomes a union of star graphs, on which $A$ and $B$ still define a self-adjoint operator. This motivates the definition of the \textit{local coefficient matrix of the generalized eigenfunctions} $\Cm(k,i\kappa)$ for any graph and fixed $A,B$ satisfying the assumptions of Theorem~\ref{sabc} by 
\begin{align}\label{Cm}\index{$ \Cm(k,i\kappa)$, $\Cm_{\pm\pm}(k,i\kappa)$}
 \Cm(k,i\kappa):=-(A + B I_{n,m}(k,i\kappa) )^{-1}(A - B I_{n,m}(k,i\kappa)).
\end{align}
In short $\Cm$ is denoted the \textit{coefficient matrix}. For star graphs $\Cm(k,i\kappa)=\mathfrak{X}(k,i\kappa)$ holds. The coefficient matrix is an analogue of the local scattering matrix that appears in the context of positive Laplacians on metric graphs, see for example \cite{VKRS1999}, and in the case $\Ee_-=\emptyset$ both objects coincide. With respect to the block decomposition induced by $\mathcal{E}_+\cup\mathcal{I}_+ $ and  $\mathcal{E}_-\cup \mathcal{I}_-$ the coefficient matrix can be written as a block-operator matrix 
\begin{align*}
\Cm(k,i\kappa)=\begin{bmatrix}
\Cm_{++}(k,i\kappa) & \Cm_{+-}(k,i\kappa) \\
\Cm_{-+}(k,i\kappa) & \Cm_{--}(k,i\kappa)
\end{bmatrix},
\end{align*}
where $\Cm_{\pm \pm}(k,i\kappa)$ act in $\Ke_{\pm}$ and $\Cm_{\mp \pm}(k,i\kappa)$ are maps of $\Ke_{\mp}$ to $\Ke_{\pm}$. The poles of $\Cm(k,\pm ik)$, counted with multiplicities, are only finitely many, because \\ $\det (A + B I_{n,m}(k,\pm ik) )$ is a polynomial of degree not greater than $n+m$ and therefore it can have $n+m$ zeros at the most. 

\begin{example}\label{exCm}
Consider the operator $T(A,B)$ from Example~\ref{ex1IndQG}. Then one obtains the local coefficient matrix 
\begin{align*}
\Cm(k,i\kappa)=\frac{1}{\kappa^2 +k^2}\begin{bmatrix} k^2-\kappa^2 + 2i \kappa k & 2\kappa^2 - 2i \kappa k  \\  2 k^2 + 2i \kappa k  & k^2-\kappa^2 - 2i \kappa k \end{bmatrix}.
\end{align*}
The local coefficient matrix $\Cm=\Cm(k,ik)$ is the $k$--independent matrix 
\begin{align*}
\Cm=\begin{bmatrix} i & (1-i) \\  (1+i) & -i \end{bmatrix}.
\end{align*}
This gives the generalized eigenfunction for positive energy $k>0$
\begin{align*}
\varphi_1(x,k,ik)= \begin{cases} e^{-ikx} + i e^{ikx}, & x \in \Ee_+, \\ (1+i) e^{-kx}, & x \in \Ee_-.\end{cases}
\end{align*}
\end{example}
Looking at the generalized eigenfunction for positive energy $\varphi_1(\cdot,k,ik)$ one can interpret the off-diagonal entry $\Cm_{+-}$ of $\Cm$  as depth of penetration of the incoming wave propagating from the positive edge into the negative part of the graph. The diagonal entry $\Cm_{++}$ describes the scattering behaviour of the incoming waves, in this example it is the reflection coefficient. The entries $\Cm_{-+}$ and $\Cm_{--}$ have corresponding interpretations for negative energies.

There are some useful symmetries of $\Cm(k,i\kappa)$ summarized in the following
\begin{lemma}\label{cmsym}
For complex $k,\kappa\neq 0$ and $A,B$ satisfying the assumptions of Theorem~\ref{sabc} one has
\begin{enumerate}
\item $\Cm(k,i\kappa)^{-1}=\Cm(-k,-i\kappa)$ and 
\item $\Cm(k,ik) \begin{bmatrix} \mathds{1} & 0 \\ 0 & i\end{bmatrix} = \begin{bmatrix} \mathds{1} & 0 \\ 0 & i\end{bmatrix}\Cm(-\overline{k},i\overline{k})^*$. 
\item Let be $\hat{A}=P$ and $\hat{B}=P^{\perp}J_{n,m}$, where $P$ is an orthogonal projector in $\Ke$. Then the matrices $\Cm(k,ik)$ and $\Cm(k,-ik)$ are $k$--independent.
\end{enumerate}
\end{lemma}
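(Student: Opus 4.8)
The plan is to verify all three identities by direct substitution into the defining formula $\Cm(k,i\kappa)=-(A+BI)^{-1}(A-BI)$, where $I=I_{n,m}(k,i\kappa)$, using only the elementary facts $I_{n,m}(-k,-i\kappa)=-I_{n,m}(k,i\kappa)$, $I_{n,m}^{2}=J_{n,m}$ and $I_{n,m}^{-1}=I_{n,m}^{*}$ together with condition~(2) of Theorem~\ref{sabc}, namely $BJ_{n,m}A^{*}=AJ_{n,m}B^{*}$; here $I_{n,m}=\begin{bmatrix}\mathds{1}_n&0\\0&i\mathds{1}_m\end{bmatrix}$ is the matrix appearing in the statement. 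All identities are read at those $k,\kappa$ for which the inverted matrices are regular, the exceptional set being finite (resp.\ a proper analytic set), as already noted for the poles of $\Cm$. For (1) I would first observe $I_{n,m}(-k,-i\kappa)=-I$, so that $\Cm(-k,-i\kappa)=-(A-BI)^{-1}(A+BI)$, and then multiply:
\[
\Cm(k,i\kappa)\,\Cm(-k,-i\kappa)=(A+BI)^{-1}(A-BI)(A-BI)^{-1}(A+BI)=\mathds{1}.
\]

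For (2) I would specialise to $\kappa=k$, so $I=ik\,I_{n,m}$, and record the auxiliary identity $I_{n,m}(-\overline k,i\overline k)=(ik\,I_{n,m})^{*}=\overline{ik}\,I_{n,m}^{*}$. Using $I_{n,m}^{-1}=I_{n,m}^{*}$ and $I_{n,m}^{2}=J_{n,m}$ one rewrites the conjugate of $\Cm(k,ik)$ by $I_{n,m}$ as
\[
I_{n,m}^{*}\,\Cm(k,ik)\,I_{n,m}=-\bigl(AI_{n,m}+ikBJ_{n,m}\bigr)^{-1}\bigl(AI_{n,m}-ikBJ_{n,m}\bigr),
\]
and likewise
\[
\Cm(-\overline k,i\overline k)^{*}=-\bigl(A^{*}-ikI_{n,m}B^{*}\bigr)\bigl(A^{*}+ikI_{n,m}B^{*}\bigr)^{-1}.
\]
The asserted identity $\Cm(k,ik)I_{n,m}=I_{n,m}\,\Cm(-\overline k,i\overline k)^{*}$ is equivalent to equality of these two expressions, which after cross-multiplication becomes
\[
\bigl(AI_{n,m}-ikBJ_{n,m}\bigr)\bigl(A^{*}+ikI_{n,m}B^{*}\bigr)=\bigl(AI_{n,m}+ikBJ_{n,m}\bigr)\bigl(A^{*}-ikI_{n,m}B^{*}\bigr).
\]
Expanding both sides and using $I_{n,m}^{2}=J_{n,m}$, the purely quadratic-in-$k$ terms and the $k$-free terms cancel in pairs, and the remainder equals $2ik\bigl(AJ_{n,m}B^{*}-BJ_{n,m}A^{*}\bigr)$, which vanishes by Theorem~\ref{sabc}(2).

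For (3), with $\hat A=P$ and $\hat B=P^{\perp}J_{n,m}$, the first step is the computation $\hat B\,I_{n,m}(k,\pm ik)=k\,P^{\perp}H_{\pm}$, where $H_{+}=\begin{bmatrix}i\mathds{1}_{n}&0\\0&\mathds{1}_{m}\end{bmatrix}$ and $H_{-}=\begin{bmatrix}i\mathds{1}_{n}&0\\0&-\mathds{1}_{m}\end{bmatrix}$, so that $\Cm(k,\pm ik)=-(P+kP^{\perp}H_{\pm})^{-1}(P-kP^{\perp}H_{\pm})$. I would then pass to $2\times2$ block form with respect to the orthogonal splitting $\Ke=\Ran P\oplus\Ran P^{\perp}$: there $P=\begin{bmatrix}\mathds{1}&0\\0&0\end{bmatrix}$ and $P^{\perp}H_{\pm}=\begin{bmatrix}0&0\\C_{\pm}&D_{\pm}\end{bmatrix}$, whence $P+kP^{\perp}H_{\pm}$ is block lower triangular with diagonal blocks $\mathds{1}$ and $kD_{\pm}$. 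Thus it is invertible precisely when $D_{\pm}$ is, and then for every $k\neq0$; if $D_{\pm}$ is singular, $\Cm(k,\pm ik)$ is nowhere defined and there is nothing to prove. In the regular case a block multiplication gives
\[
(P+kP^{\perp}H_{\pm})^{-1}(P-kP^{\perp}H_{\pm})=\begin{bmatrix}\mathds{1}&0\\-2D_{\pm}^{-1}C_{\pm}&-\mathds{1}\end{bmatrix},
\]
which is manifestly independent of $k$, proving (3).

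The computations in (1) and (3) are essentially bookkeeping. The only delicate point is (2): one must keep the several twists by $I_{n,m}$ and the adjoints correctly aligned and, crucially, recognise that the identity $I_{n,m}^{2}=J_{n,m}$ is exactly what converts the surviving cross-terms into the self-adjointness relation of Theorem~\ref{sabc}.
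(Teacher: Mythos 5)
Your proof is correct. Parts (1) and (3) follow the paper's own route: (1) is immediate from $I_{n,m}(-k,-i\kappa)=-I_{n,m}(k,i\kappa)$, and (3) is the same reduction of $\Cm(k,\pm ik)=-(P+kP^{\perp}H_{\pm})^{-1}(P-kP^{\perp}H_{\pm})$ to a block-triangular inversion with respect to $\Ran P\oplus\Ran P^{\perp}$, which the paper phrases via the Schur complement and you carry out by hand; your explicit result $\bigl(\begin{smallmatrix}\mathds{1}&0\\-2D_{\pm}^{-1}C_{\pm}&-\mathds{1}\end{smallmatrix}\bigr)$ is manifestly $k$--independent, which is all the lemma claims. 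For part (2) you take a genuinely more direct route: the paper first passes to the canonical pair $\hat{A}=L+P$, $\hat{B}=P^{\perp}J_{n,m}$ of Corollary~\ref{PL} (tacitly using that $\Cm$ is unchanged under $(A,B)\mapsto(CA,CB)$ with $C$ invertible) and verifies the key identity for these special matrices, whereas you conjugate by $I_{n,m}$, use $I_{n,m}^{2}=J_{n,m}$ and $I_{n,m}^{-1}=I_{n,m}^{*}$, cross-multiply, and reduce the claim to a polynomial identity whose only surviving term is $2ik\,(AJ_{n,m}B^{*}-BJ_{n,m}A^{*})$, which vanishes by Theorem~\ref{sabc}(2). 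Your version needs no reduction to the special parametrization and makes transparent that only the self--adjointness condition on $(A,B)$ enters; the paper's version keeps the computation in the $(P,L)$--coordinates that it then reuses in part (iii). Your opening caveat (the identities are read where the inverted matrices are regular, a finite exceptional set in $k$) matches what the paper leaves implicit, and your hedge in (3) about a possibly singular $D_{\pm}$ is harmless: in fact $P^{\perp}H_{\pm}P^{\perp}$ is always invertible on $\Ran P^{\perp}$, since $H_{\pm}=\alpha\mathds{1}+\beta J_{n,m}$ with $\alpha+\beta t\neq 0$ for all $t\in[-1,1]$ and the Hermitian operator $P^{\perp}J_{n,m}P^{\perp}$ has spectrum in $[-1,1]$, so that case never occurs.
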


\begin{proof}
(i) The first statement follows immediately from the definition since \\ $I_{n,m}(-k,-i\kappa)= -I_{n,m}(k,i\kappa)$ holds. 

(ii) To prove the second statement, consider instead of $A$ and $B$ the equivalent parametrization given in Corollary~\ref{PL} with 
\begin{eqnarray*}
\hat{A}=L +P & \mbox{and} & \hat{B}=P^{\perp}\begin{bmatrix} \mathds{1} & 0 \\ 0 & -\mathds{1}\end{bmatrix},
\end{eqnarray*}
where $P$ is an orthogonal projector and $L$ is a Hermitian operator with $L=P^{\perp}LP^{\perp}$. This gives
\begin{align*}
  &\left( \hat{A}-ik P^{\perp} \begin{bmatrix} \mathds{1} & 0 \\ 0 & -i\end{bmatrix} \right) \begin{bmatrix} \mathds{1} & 0 \\ 0 & i\end{bmatrix} \left( \hat{A}+ ik  \begin{bmatrix} \mathds{1} & 0 \\ 0 & -i\end{bmatrix}P^{\perp} \right) \\
 =&\left( \hat{A}+ ik P^{\perp} \begin{bmatrix} \mathds{1} & 0 \\ 0 & -i\end{bmatrix} \right) \begin{bmatrix} \mathds{1} & 0 \\ 0 & i\end{bmatrix} \left( \hat{A}- ik  \begin{bmatrix} \mathds{1} & 0 \\ 0 & -i\end{bmatrix}P^{\perp} \right).
\end{align*}
For
$$\Cm(k,ik)= \left( \hat{A}+ik P^{\perp} \begin{bmatrix} \mathds{1} & 0 \\ 0 & -i\end{bmatrix} \right)^{-1} \left( \hat{A}-ik P^{\perp} \begin{bmatrix} \mathds{1} & 0 \\ 0 & -i\end{bmatrix} \right)$$ 
and 
$$\Cm(\overline{k},-i\overline{k})^* = \left( \hat{A}+ ik  \begin{bmatrix} \mathds{1} & 0 \\ 0 & -i\end{bmatrix}P^{\perp} \right) \left( \hat{A}- ik \begin{bmatrix} \mathds{1} & 0 \\ 0 & -i\end{bmatrix}P^{\perp} \right)^{-1} $$ 
one obtains 
$$ \Cm(k,ik) \begin{bmatrix} \mathds{1} & 0 \\ 0 & i\end{bmatrix} \Cm(\overline{k},-i\overline{k})^*   = \begin{bmatrix} \mathds{1} & 0 \\ 0 & i\end{bmatrix} $$ 
and therefore using Part (i) of Lemma~\ref{cmsym} gives the claim. 

(iii) For the proof of the third part of the lemma decompose 
$$P \pm ik P^{\perp} \begin{bmatrix} \mathds{1} & 0 \\ 0 & -i\end{bmatrix}$$ 
with respect to the orthogonal spaces $\mbox{Ran} P$ and $\mbox{Ran} P^{\perp}$. Denote by 
\begin{eqnarray*}
\left( P^{\perp} \begin{bmatrix} \mathds{1} & 0 \\ 0 & \mp i\end{bmatrix} P^{\perp}\right)^{-1} &\mbox{the inverses of}& P^{\perp} \begin{bmatrix} \mathds{1} & 0 \\ 0 & \mp i\end{bmatrix}  P^{\perp}
\end{eqnarray*}
considered as maps in the Hilbert space $\mbox{Ran} P^{\perp}$. Applying the formula for the Schur complement, see for example \cite{Zhang}, delivers the $k$-independent block-operator matrix representation
\begin{align*}
&\Cm(k,\pm ik)= -\left[ P+ ik P^{\perp}\begin{bmatrix} \mathds{1} & 0 \\ 0 & \mp i\end{bmatrix} \right]^{-1}\left[ P- ik P^{\perp}\begin{bmatrix} \mathds{1} & 0 \\ 0 & \mp i\end{bmatrix}\right]= \\  \\ 
&\begin{bmatrix}
-\left( P^{\perp}  \begin{bmatrix} \mathds{1} & 0 \\ 0 & \mp i\end{bmatrix}  P^{\perp}\right)^{-1} & -\left( P^{\perp}  \begin{bmatrix} \mathds{1} & 0 \\ 0 & \mp i\end{bmatrix}  P^{\perp}\right)^{-1} P^{\perp}  \begin{bmatrix} \mathds{1} & 0 \\ 0 & -i\end{bmatrix}  P \\
 & \\
0 & P
\end{bmatrix}.
\end{align*}
\end{proof}
\subsection{Resonance equation.}
The resonance equation \eqref{EWIndQG} can be rewritten in an analogue way to the one known for Laplacians on finite metric graphs, see for example \cite[Theorem 3.2]{VKRS2006}. For $k,\kappa\neq 0$ such that $\left( A+B I_{n,m}(k,i\kappa)\right)$ is invertible the operator $Z_{n,m}(A,B,k,i\kappa,\underline{a})$ can be rewritten as follows 
\begin{align*}
&Z_{n,m}(A,B,k,i\kappa,\underline{a})\\
= &AX_{n,m}(k,i\kappa) + B Y_{n,m}(k,i\kappa)  \\
=&\left( A+B I_{n,m}(k,i\kappa)  \right)R^+_{n,m}(k,i\kappa) + \left( A-B I_{n,m}(k,i\kappa)  \right)R^-_{n,m}(k,i\kappa) \\
= &\left( A+B I_{n,m}(k,i\kappa)  \right) \left( \mathds{1} - \Cm(k,i\kappa) T_{n,m}(k,i\kappa)   \right), 
\end{align*}
where
\begin{eqnarray*}
R^+_{n,m}(k,i\kappa)=&  \frac{1}{2}\left( X_{n,m}(k,i\kappa) + Y_{n,m}(k,i\kappa) \right), \\
  R^-_{n,m}(k,i\kappa)=&\frac{1}{2}\left( X_{n,m}(k,i\kappa) - Y_{n,m}(k,i\kappa) \right)
\end{eqnarray*}
become more explicit
\begin{align*}
R_{n,m}^+(k,i\kappa,\au)= \begin{bmatrix} R_{n}^+(\au_+,k)  & 0 \\ 0 &  R_{m}^+(\au_-,i\kappa)\end{bmatrix}, && R_{l}^+(\bu,k)=\begin{bmatrix} \mathds{1} & 0 &0 \\ 0 & \mathds{1} &0 \\ 0 &0& e^{ik\bu}   \end{bmatrix},
\end{align*}
\begin{align*}
R_{n,m}^-(k,i\kappa,\au)= \begin{bmatrix} R_{n}^-(\au_+,k)  & 0 \\ 0 &  R_{m}^-(\au_-,i\kappa)\end{bmatrix}, && R_{l}^-(\bu,k)=\begin{bmatrix} 0 & 0 &0 \\ 0 & 0 & \mathds{1}  \\ 0 & e^{ik\bu} & 0   \end{bmatrix} 
\end{align*}
and 
\begin{align*}
T_{n,m}(k,i\kappa,\au)=&R_{n,m}^{-}(k,i\kappa)\left(R_{n,m}^+(k,i\kappa)\right)^{-1} \\
=& \begin{bmatrix} T_{n}(\au_+,k)  & 0 \\ 0 &  T_{m}(\au_-,i\kappa)\end{bmatrix}, && T_{l}(\bu,k)=\begin{bmatrix} 0 & 0 &0 \\ 0 & 0 &e^{ik\bu} \\ 0 &e^{ik\bu}& 0  \end{bmatrix}. 
\end{align*}
Here for $l=n$ one plugs in $\bu=\au_+$ and for $l=m$ one inserts $\bu=\au_-$. Assume that $k\neq 0$ and that the matrices $ A+B I_{n,m}(k,ik)$, $A+B I_{n,m}(k,-ik)$ are invertible. Then one obtains the representations
\begin{align}\label{resonance}
\begin{split}
&Z_{n,m}(A,B, k,\pm ik,\underline{a})= \\
&\left( A+B I_{n,m}(k,\pm ik)  \right) \left( \mathds{1} - \Cm(k,\pm ik) T_{n,m}(k,\pm ik,\au)   \right) R_{n,m}^+(k,\pm ik,\au).
\end{split}
\end{align}

\subsection{The resolvent.}
The Green's function for the equation $\left(-\frac{d^2}{dx^2}-k^2\right)u=f$, $f\in L^2(\R)$, is $\tfrac{i}{2k} e^{ik\abs{x-y}}$. For computing the Green's function for the problem considered here, one has to find appropriate correction terms. These corrections can be expressed in terms of the solutions of the homogeneous problem  $\left(\mp\frac{d^2}{dx^2}-k^2\right)u=0$ on each edge. The Green's function obtained in this way defines an integral operator, which for $k^2\in \C\setminus \R$ is the resolvent operator $(T(A,B)-k^2)^{-1}$. For the computation of the resolvent one follows the guide lines of \cite[Section 4]{VKRS2006}. The notion of integral operators is specified in the following definition borrowed from there.
\begin{definition}[{\cite[Definition 4.1]{VKRS2006}}]
The operator $\mathfrak{K}$ on the Hilbertspace $\mathcal{H}$ is called \textit{integral operator} if for all $j,j^{\prime}\in \Ee \cup \Ie$ there are measurable functions $\mathfrak{K}_{j,j^{\prime}}(\cdot,\cdot)\colon I_j \times I_{j^{\prime}}\rightarrow \C$ with the following properties
\begin{enumerate}
\item $\mathfrak{K}_{j,j^{\prime}}(x_j,\cdot)\varphi_{j^{\prime}}\in L^1(I_{j^{\prime}})$ for almost all $x_j\in I_j$, 
\item $\psi = \mathfrak{K}\varphi$ with 
\begin{equation*}
\psi_j(x_j)= \sum_{j^{\prime}\in \Ee \cup \Ie} \int_{I_{j^{\prime}}} \mathfrak{K}_{j,j^{\prime}}(x_j,y_{j^{\prime}}) \varphi_{j^{\prime}}(y_{j^{\prime}}) d y_{j^{\prime}}.
\end{equation*}
\end{enumerate} 
The $(\Ie \cup \Ee) \times (\Ie \cup \Ee)$ matrix-valued function $(x,y) \mapsto \mathfrak{K}(x,y)$ with 
$$ [\mathfrak{K}(x,y)]_{j,j^{\prime}} = \mathfrak{K}_{j,j^{\prime}}(x_j,y_{j^{\prime}})$$
is called the \textit{integral kernel} of the operator $\mathfrak{K}$.
\end{definition}

\begin{proposition}\label{resolvent}
Let $T(A,B)$ be self-adjoint. Then the resolvent 
\begin{eqnarray*}
R(k^2)=(T(A,B)-k^2)^{-1}, & \mbox{for } k^2\in\C\setminus\R, 
\end{eqnarray*}
is an integral operator with kernel 
\begin{align*}
r(x,y,k)=\begin{cases} 
                r_{n,m}^0(x,y,k,+ik) + r_{n,m}^1(x,y,k,+ik), & k\in \Qe, \\
                r_{n,m}^0(x,y,k,-ik)+ r_{n,m}^1(x,y,k,-ik), & k\in \Pe,
             \end{cases}
\end{align*}
where the free Green's function is given by 
\begin{align*}
r^0_{n,m}(x,y,k,i\kappa)= \begin{bmatrix}  r^0_{n}(x,y,k) & 0 \\ 0 & -r^0_{m}(x,y,i\kappa)\end{bmatrix}W_{n,m}(k,i\kappa),
\end{align*}
where $r^0_{l}(x,y,k)$, $l\in\{n,m\}$ are diagonal matrices with entries
\begin{align*}
\{r^0_{l}(x,y,k)\}_{p,q}=  \delta_{p,q}e^{ik\abs{x_q-y_q}}, 
\end{align*}
\begin{align*}
W_{n,m}(k,i\kappa)= \begin{bmatrix} W_n(k)  & 0 \\ 0 & W_m(i\kappa)  \end{bmatrix}, 
\end{align*}
where $W_n(k)= \frac{i}{2k}\mathds{1}$ with the identity in $\Ke_+$ and $W_m(k)= \frac{i}{2k}\mathds{1}$ with the identity in $\Ke_-$. The correction term is given by
\begin{align*}
r^1(x,y,k,i\kappa)= 
 - \Phi_{n,m}(x,k,i\kappa) G_{n,m}(k,i\kappa,\au) \Phi_{n,m}(y,k,i\kappa)^T W_{n,m}(k,i\kappa),
\end{align*}
where the subscript $T$ denotes the transposed matrix, 
\begin{align*}
G_{n,m}(k,i\kappa,\au)= \mathfrak{X}(k,i\kappa)(R_{n,m}^+(k,i\kappa,\au))^{-1}J_{n,m} 
\end{align*}
and 
\begin{equation*}
\Phi_{n,m}(x,k,i\kappa)= \begin{bmatrix} \Phi_n(x,k) & 0 \\ 0 & \Phi_m(x,i\kappa)  \end{bmatrix} 
\end{equation*}
with
\begin{align*}
\Phi_{l}(x,k)= \begin{bmatrix} \varphi_{\Ee_{l}}(x,k) & 0 & 0 \\ 0 & \varphi_{\Ie_{l}}(x,k) & \varphi_{\Ie_{l}}(x,-k)  \end{bmatrix}. 
\end{align*}
This notation means that for $l=n$ one plugs in $\Ee_{n}=\Ee_+$ and $\Ie_{n}=\Ie_+$ and for $l=m$ one inserts $\Ee_{m}=\Ee_-$ and $\Ie_{m}=\Ie_-$. The blocks $\varphi_{\mathcal{C}}(x,k)$, $\mathcal{C}\in \{\Ee_+,\Ee_-,\Ie_+,\Ie_- \}$ are diagonal matrices with entries $\{\varphi_{\mathcal{C}}(x,k)\}_{l,j\in \mathcal{C}}= \delta_{l,j} \{  e^{ ik x_j} \}$.  
\end{proposition}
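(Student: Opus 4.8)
The plan is to follow the guidelines of \cite[Section~4]{VKRS2006}: exhibit a candidate integral kernel $r(x,y,k)$, check that the associated integral operator $R$ is bounded, maps $\He$ into $\Dom(T(A,B))$, and inverts $T(A,B)-k^2$, and then invoke uniqueness of the resolvent, since $\C\setminus\R\subset\rho(T(A,B))$ by self--adjointness of $T(A,B)$. First I would fix $k^2\in\C\setminus\R$ and choose the square root $k$ so that $k\in\Qe$ (the case $k\in\Pe$ being completely symmetric, and the two branches of the fundamental systems of \eqref{ew+} and \eqref{ew-} accounting for the split between $\Qe$ and $\Pe$). For $k\in\Qe$, Lemma~\ref{zeroZ} ensures $\det Z_{n,m}(A,B,k,ik,\au)\neq 0$, so the transform $\mathfrak{X}(k,ik)$ from \eqref{chi} and the matrix $G_{n,m}(k,ik,\au)=\mathfrak{X}(k,ik)(R^+_{n,m}(k,ik,\au))^{-1}J_{n,m}$ are well defined; this is exactly why the formula is valid on all of $\C\setminus\R$.

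Second, the free part. On each positive edge the equation $(\tau-k^2)u=f$ reads $-u''-k^2u=f$, with edgewise Green's function $\tfrac{i}{2k}e^{ik|x-y|}$; on each negative edge it reads $u''-k^2u=f$, whose edgewise Green's function carries the opposite sign, $-\tfrac{1}{2\kappa}e^{-\kappa|x-y|}$ with $i\kappa=ik$, so that for $\Re k>0$ it decays. Assembling these diagonally gives $r^0_{n,m}(x,y,k,ik)$ with the scalar factor $W_{n,m}(k,ik)$ and the minus sign in front of $r^0_m$. The operator with kernel $r^0$ alone does not respect the vertex conditions, so I would add a correction $r^1(\cdot,y,k)$ which, for each fixed source point $y$, solves the homogeneous equation $(\tau-k^2)u=0$ on every edge; then $(\tau-k^2)r(\cdot,y,k)=\delta_y$ still holds. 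Since the homogeneous solutions are spanned edgewise by $e^{\pm ikx}$ on positive edges and $e^{\pm\kappa x}$ on negative ones, and only the $L^2$ solutions $e^{ikx}$, $e^{-\kappa x}$ are admissible on the external edges, $r^1(\cdot,y,k)$ must have the form $\Phi_{n,m}(x,k,ik)\,c(y)$ with a coefficient matrix $c(y)$ built from the boundary data of $r^0$ at the source point $y$.

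Third, I would impose $A\underline{r(\cdot,y,k)}+B\underline{r(\cdot,y,k)}^{\prime}=0$. Using the boundary--value formulae $\underline{\psi}=X_{n,m}(k,ik,\au)\chi$ and $\underline{\psi}^{\prime}=Y_{n,m}(k,ik,\au)\chi$ valid for functions of the form \eqref{Ansatz1IndQG}, this becomes the linear system $Z_{n,m}(A,B,k,ik,\au)\,c(y)=-\bigl(A\underline{r^0(\cdot,y,k)}+B\underline{r^0(\cdot,y,k)}^{\prime}\bigr)$. A direct computation of the boundary values of the free kernel in the variable $y$ identifies the right--hand side with $-\bigl(A\mathds{1}_{n+m}+BI_{n,m}(-k,-ik)\bigr)\Phi_{n,m}(y,k,ik)^{T}W_{n,m}(k,ik)$ up to the triangular factor $R^+_{n,m}$ and $J_{n,m}$; inverting $Z_{n,m}$ and recalling both $\mathfrak{X}=-Z_{n,m}^{-1}\bigl(A\mathds{1}_{n+m}+BI_{n,m}(-k,-ik)\bigr)$ and the rewriting \eqref{resonance} of $Z_{n,m}$ collapses $c(y)$ to $-G_{n,m}(k,ik,\au)\,\Phi_{n,m}(y,k,ik)^{T}W_{n,m}(k,ik)$, which is the asserted $r^1$. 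I would then verify that $R$ is bounded on $\He$ (the $r^0$--part is the orthogonal sum of free resolvents with the sign flip on the negative edges, and the $r^1$--part is a finite rank, in fact Hilbert--Schmidt, operator because the columns of $\Phi_{n,m}$ are $L^2$ on the external edges for $\Re k>0$, $\Re\kappa>0$), that $Rf\in\De$ and satisfies the vertex conditions by construction, hence $Rf\in\Dom(T(A,B))$, and that $(T(A,B)-k^2)Rf=f$ from the $\delta$--function property of the kernel; since $k^2\in\rho(T(A,B))$ this forces $R=(T(A,B)-k^2)^{-1}$.

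The step I expect to be the main obstacle is the bookkeeping in the third paragraph: the sign changes on the negative edges (the minus in front of $r^0_m$, the identification $i\kappa=ik$ and the replacement of $e^{-ikx}$ by $e^{-\kappa x}$ on the negative external edges) must be tracked with care, and the simplification of the inhomogeneous term into the compact form $-G_{n,m}\,\Phi_{n,m}(y,k,ik)^{T}W_{n,m}$ through the resonance factorisation \eqref{resonance} is the one genuinely intricate computation. Everything else should run in complete parallel to the positive--Laplacian case treated in \cite[Section~4]{VKRS2006}.
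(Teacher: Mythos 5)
Your proposal follows the same construction and verification scheme as the paper's proof in Appendix~\ref{AppA}: there the author checks (i) that the kernel inverts $\tau-k^2$ edgewise (the free part via the known half--line Green's functions, the correction being annihilated by $\tau-k^2$), and (ii) that the traces of $\int r^0 f$ and $\int r^1 f$ combine so that the vertex conditions hold, which is exactly your linear system $Z_{n,m}(A,B,k,ik,\au)\,c(y)=-\bigl(A\underline{r^0(\cdot,y,k)}+B\underline{r^0(\cdot,y,k)}^{\prime}\bigr)$ read backwards. The one genuine difference is the concluding step: the paper additionally proves (iii) the kernel symmetry $r(x,y,-\overline{k})=r(y,x,k)^*$ -- a computation that uses Lemma~\ref{cmsym} and the factorization \eqref{resonance}, which is in fact the only place where \eqref{resonance} enters -- and uses (iii) both to pass from a one--sided to a two--sided inverse and to carry the formula from $\Qe$ over to $\Pe$; you instead invoke the abstract fact that $k^2\in\rho(T(A,B))$ by self--adjointness, so that $Rf\in\Dom(T(A,B))$ together with $(T(A,B)-k^2)Rf=f$ already forces $R=(T(A,B)-k^2)^{-1}$, and you treat $\Pe$ by rerunning the computation with the other decaying branch. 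Both routes are sound; yours spares the symmetry computation at the price of doing $\Qe$ and $\Pe$ separately, while the paper's (iii) does double duty and records a property of independent interest. One caution on the step you yourself single out: the derivative traces of the free kernel at the initial vertices are $-ik$ (resp.\ $+\kappa$) times the value traces, so the inhomogeneity of your linear system takes the form $\bigl(A+BI_{n,m}(-k,-i\kappa)\bigr)\bigl(R^+_{n,m}(k,i\kappa,\au)\bigr)^{-1}J_{n,m}\Phi_{n,m}(y,k,i\kappa)^{T}W_{n,m}(k,i\kappa)$, and this is precisely what produces $\mathfrak{X}$ and hence $G_{n,m}$; since the chain $\mathfrak{X}=-Z_{n,m}^{-1}(\cdots)$, $G_{n,m}=\mathfrak{X}(R^+_{n,m})^{-1}J_{n,m}$, $r^1=-\Phi_{n,m}G_{n,m}\Phi_{n,m}^{T}W_{n,m}$ contains several compensating minus signs, you should calibrate the final sign of $c(y)$ against a concrete case such as Example~\ref{ex1IndQG} (where the resolvent of $-\tfrac{d}{dx}\sgn(x)\tfrac{d}{dx}$ can be written down by hand) before declaring the collapse complete.
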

The proof is postponed to Appendix~\ref{AppA}. From the explicit formula in Proposition~\ref{resolvent} one reads that the resolvent kernel can be defined for all $k\in \overline{\Qe}$ as the limit of values taken from $\Qe$, except for those $k\in \partial \Qe$ for which $\det Z_{n,m}(A,B,k,ik,\au)=0$. These exceptional vales are called \textit{resonances of $T(A,B)$ in $\overline{\Qe}$}. Analogue for all $k\in \overline{\Pe}$ the resolvent kernel is defined, as the limit of values taken from $\Pe$, except for those $k\in \partial \Pe$ for which $\det Z_{n,m}(A,B,k,-ik,\au)=0$. These are called the \textit{resonances of $T(A,B)$ in $\overline{\Pe}$}. This justifies to denote the equations 
$$\det Z_{n,m}(A,B,k, \pm ik,\au)=0$$
the \textit{resonance equations} for the operator $T(A,B)$. 

\begin{remark}
From the integrability properties of the integral kernel one reads that the resolvent, as a function in $k\in \Qe$ or $k \in \Pe$, admits a meromorphic continuation to $\overline{\Qe}$ or $\overline{\Pe}$. The continuation is possible outside the resonances of $T(A,B)$ in $\overline{\Qe}$ and outside the resonances of $T(A,B)$ in $\overline{\Pe}$. There the resolvent kernel $r(\cdot,\cdot,k)$ defines an operator 
\begin{eqnarray*}
R_{\varepsilon}(k^2)\colon L^2(\Ge, e^{\varepsilon x}dx)\rightarrow L^2(\Ge, e^{-\varepsilon x}dx), & \mbox{for any } \varepsilon >0. 
\end{eqnarray*}
Consequently outside the resonances of $T(A,B)$ the resolvent kernel defines an operator from $L^2_{\mbox{comp}}(\Ge, dx)\rightarrow L^2_{\mbox{loc}}(\Ge, dx)$, where $L^2_{\mbox{comp}}(\Ge, dx)$ denotes the set of compactly supported elements of $\He$ and $L^2_{\mbox{loc}}$ denotes the locally square integrable functions on $(\Ge,\au)$.
\end{remark}
\begin{remark}
The values $k$ and $i\kappa$, are square roots of the spectral parameters $\lambda=k^2$ and $\lambda=-\kappa^2$, respectively. In the explicit formulae involving $k$ and $i\kappa$ it is hidden that two different branches of the complex square root are used simultaneously. To be specific one uses the branch with $\Im \sqrt{\cdot}>0$ and the one with $\Re \sqrt{\cdot}>0$. 
\end{remark}

\section{Scattering}\label{secscat} 
On the one hand scattering theory gives an interpretation of scattering solutions in the context of time--dependent dynamics. On the other hand it gives a description of the absolutely continuous part of an operator in terms of perturbation theory. As general references for mathematical scattering theory the author recommends the books \cite{Yaf} and \cite{Baum}. 

Here the role of the free operator is played by $T(0,\mathds{1})$. This means that in the free system all edges are decoupled and on each positive edge one has the positive Neumann Laplace operator along with the Neumann Laplace operator multiplied by minus one on each negative edge. So, for the self-adjoint operator $T(A,B)$ one considers the scattering pair 
$$\left(T(A,B) ,T(0,\mathds{1})\right).$$ 
The pre--wave operators $W(t)=W(t)\left(T(A,B) ,T(0,\mathds{1})\right)$, where $t\in\R$ are defined by 
\begin{equation*}
W(t)\left(T(A,B) ,T(0,\mathds{1})\right) := e^{itT(A,B)} e^{-it T(0,\mathds{1})}.
\end{equation*}
The strong wave operators $W_{\pm}=W_{\pm}\left(T(A,B) ,T(0,\mathds{1})\right)$ are the strong limits
\begin{equation*}\index{$W(t)$, $W_{\pm}$}
W_{\pm}\left(T(A,B) ,T(0,\mathds{1})\right):= s-\lim_{t\to \pm \infty}W(t)P^{ac}_{T(0,\mathds{1})},
\end{equation*}
where $P^{ac}_{T(0,\mathds{1})}$ is the orthogonal projector onto the absolutely continuous part of $T(0,\mathds{1})$. 
\begin{theorem}\label{CompletnessWO}
Let $T(A,B)$ be self--adjoint. Then the strong wave operators $W_{\pm}$ exist and are complete. The absolutely continuous spectrum $\sigma_{ac}(T(A,B))$ of $T(A,B)$ is 
\begin{eqnarray*}
\sigma_{ac}(T(A,B))= \begin{cases} 
(-\infty,\infty), & \mbox{if } \Ee_+\neq\emptyset \ \mbox{and } \Ee_-\neq\emptyset, \\
[0,\infty), & \mbox{if } \Ee_-=\emptyset \ \mbox{and } \Ee_+\neq\emptyset, \\
(-\infty,0], & \mbox{if } \Ee_+=\emptyset \ \mbox{and } \Ee_-\neq\emptyset,  \\
\emptyset, & \mbox{if } \Ee_+=\emptyset \ \mbox{and } \Ee_-=\emptyset. 
   \end{cases}
\end{eqnarray*}
The multiplicity of $(-\infty,0)$ is $\abs{\Ee_-}$ and the multiplicity of $(0,\infty)$ is $\abs{\Ee_+}$.
\end{theorem}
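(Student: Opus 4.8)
The plan is to compare $T(A,B)$ with the free operator $T(0,\mathds{1})$ by means of trace class scattering theory. The starting observation is that both operators are self--adjoint extensions of the \emph{same} symmetric operator $T^{\min}$ (for $T(0,\mathds{1})$ conditions (1) and (2) of Theorem~\ref{sabc} hold trivially), and that $T^{\min}$ has finite deficiency indices $(d,d)$ with $d=n+m$. By Krein's resolvent formula the difference
$$ (T(A,B)-z)^{-1}-(T(0,\mathds{1})-z)^{-1}, \qquad z\in\C\setminus\R, $$
is then an operator of rank at most $d$, in particular it is trace class. I would invoke the Birman--Kato theorem on trace class perturbations, see \cite{Yaf,Baum}, to conclude that the wave operators $W_{\pm}(T(A,B),T(0,\mathds{1}))$ exist and are complete, i.e.\ $\Ran W_{\pm}$ equals the absolutely continuous subspace of $T(A,B)$. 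Since $W_{\pm}$ intertwines the absolutely continuous part of $T(0,\mathds{1})$ with that of $T(A,B)$, the two are unitarily equivalent; in particular $\sigma_{ac}(T(A,B))=\sigma_{ac}(T(0,\mathds{1}))$ with coinciding multiplicity functions, and it only remains to analyse the free operator.

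The second step is an elementary computation of $\sigma_{ac}(T(0,\mathds{1}))$. The boundary condition $\underline{\psi^{\prime}}=0$ defining $T(0,\mathds{1})$ decouples all edges, so $T(0,\mathds{1})$ is the orthogonal sum over $j\in\Ee\cup\Ie$ of the operator acting as $-d^2/dx^2$ for positive $j$ and as $+d^2/dx^2$ for negative $j$, each with Neumann conditions ($\psi_j^{\prime}(0)=0$, and additionally $\psi_j^{\prime}(a_j)=0$ when $j\in\Ie$). I would note that each internal--edge summand lives on a bounded interval, hence has compact resolvent and contributes only discrete spectrum; each positive external edge yields the Neumann Laplacian on $[0,\infty)$, with purely absolutely continuous spectrum $[0,\infty)$ of multiplicity one (via the Fourier cosine transform); and each negative external edge yields minus that operator, with purely absolutely continuous spectrum $(-\infty,0]$ of multiplicity one. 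Summing, $\sigma_{ac}(T(0,\mathds{1}))$ is $[0,\infty)$ with multiplicity $\abs{\Ee_+}$ on $(0,\infty)$ and $(-\infty,0]$ with multiplicity $\abs{\Ee_-}$ on $(-\infty,0)$; transporting this through the unitary equivalence given by $W_{\pm}$ yields the asserted formula for $\sigma_{ac}(T(A,B))$ and the stated multiplicities, with the four cases corresponding to whether $\Ee_{\pm}$ are empty.

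The main obstacle is really only the first step -- making the finite rank claim for the resolvent difference precise and clean. It is exactly the phenomenon already used in the proof of Proposition~\ref{Weylgraph} and visible in the rank--$(n+m)$ correction term $r^1$ of Proposition~\ref{resolvent}: even though the external edges are non--compact, passing between two self--adjoint boundary conditions is a finite rank modification of the resolvent, which is what makes trace class scattering theory applicable here. If one prefers to avoid Krein's formula in the abstract, an equivalent but slightly longer plan is to interpose the Dirichlet operator $T(\mathds{1},0)$, establish existence and completeness of $W_{\pm}(T(A,B),T(\mathds{1},0))$ and of $W_{\pm}(T(\mathds{1},0),T(0,\mathds{1}))$ edge by edge using Proposition~\ref{resolvent}, and compose the two by multiplicativity of wave operators; on internal edges both intermediate pairs have compact resolvents and on external edges they differ only by a point interaction at the origin.
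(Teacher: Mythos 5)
Your proposal is correct and follows essentially the same route as the paper: the paper also reduces everything to the observation that $(T(A,B)-i)^{-1}-(T(0,\mathds{1})-i)^{-1}$ is a finite rank operator (obtained there from the explicit resolvent formula of Proposition~\ref{resolvent}) and then cites \cite[Theorem 6.5.1]{Yaf} for existence and completeness of the wave operators, the spectral statement resting on the unitary equivalence of the absolutely continuous parts with the decoupled Neumann operator $T(0,\mathds{1})$. Your two variations --- deducing the finite rank bound $\leq n+m$ abstractly from Krein's formula for two self--adjoint extensions of $T^{\min}$ instead of from the kernel $r^1$, and writing out the edgewise computation of $\sigma_{ac}(T(0,\mathds{1}))$ with its multiplicities --- are sound and merely make explicit what the paper leaves implicit.
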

\begin{proof}
By Proposition~\ref{resolvent} the operator $$(T(A,B)-i)^{-1} - (T(0,\mathds{1})-i)^{-1}$$ is a finite rank operator and from \cite[Theorem 6.5.1]{Yaf} it follows that the strong wave operators exist and are complete. 
\end{proof}

\subsection{Wave operators.}
In the stationary scattering theory the wave operators are calculated in terms of the resolvents of the perturbed operator $T(A,B)$ and of the free operator $T(0,\mathds{1})$. The abelian wave operators $W_{\pm}^{a}=W_{\pm}^{a}\left(T(A,B) ,T(0,\mathds{1})\right)$ can be computed as 
\begin{equation*}
\displaystyle{ W_{\pm}^{a}f =\lim_{\varepsilon \to 0+} - i \varepsilon \int_{-\infty}^{\infty} R(\lambda\mp i\varepsilon) dE_0(\lambda)f  ,}
\end{equation*}\index{$dE_0(\cdot)$}
where $dE_0(\cdot)$ is the absolutely continuous part of the spectral measure of the free operator and $R(\lambda\mp i\varepsilon)= \left( T(A,B) - (\lambda\mp i\varepsilon)\right)^{-1}$, see \cite[Proposition 13.1.1, formula (3)]{Baum}. Since the strong wave operators exist they agree with the abelian wave operators, compare for example \cite{Yaf}. Note that the absolutely continuous part of the free operator $T(0,\mathds{1})$ is related only to the external edges and can be expressed in terms of the $\cos$-transform, see for example \cite[Example 14.8]{W2}. The absolutely continuous part of the spectral measure of the free operator $T(0,\mathds{1})$ is
$$d E_0(\lambda)= \bigoplus_{j\in \Ee} (d E_{0,j}(\lambda)).$$ 
For $j\in \Ee_+$ and $\lambda>0$ one has
\begin{align*}
\left( d E_{0,j}(\lambda)f\right)(x)= 
																	\cos(\sqrt{\lambda} x_j) \frac{d\lambda}{\pi \sqrt{\lambda}} \int_{0}^{\infty} \cos(\sqrt{\lambda} y_j)f(y_j)dy_j. 
\end{align*}
and for $j\in \Ee_-$ and $\lambda<0$  
\begin{align*}
\left( d E_{0,j}(\lambda)f\right)(x)= 
																	\cos(\sqrt{\abs{\lambda}} x_j) \frac{d\lambda }{\pi\sqrt{\abs{\lambda}}} \int_{0}^{\infty} \cos(\sqrt{\abs{\lambda}} y_j)f(y_j)dy_j.                     
\end{align*}
For the cases $j\in \Ee_+$ and $\lambda<0$ and $j\in \Ee_-$ and $\lambda>0$ one has $\left( d E_{0,j}(\lambda)f\right)(x)= 0$. Now substitute $\lambda>0$ by $k^2=\lambda$ with $k>0$ and  $\lambda<0$ by $-\kappa^2=\lambda$ with $\kappa>0$. Then define on each exterior edge $j\in\Ee$ the $\cos$--transform by
$$\displaystyle{  \hat{f}_j(k) = \frac{2}{\pi} \int_{0}^{\infty} \cos(k y_j)f(y_j) dy_j  }$$
and its inverse by
$$\displaystyle{  f_j(k) = \int_{0}^{\infty} \cos(k y_j)\hat{f}(k) dk  }.$$
Set for brevity 
\begin{eqnarray*}
\hat{f}_+(k)= \left\{  \hat{f}_j(k)  \right\}_{j\in \Ee_+}, \ \  \hat{f}_-(\kappa)=  \left\{  \hat{f}_j(\kappa)  \right\}_{j\in \Ee_-} & \mbox{and} & \hat{f}(k,\kappa)= \begin{bmatrix} \hat{f}_+(k) \\ \hat{f}_-(\kappa) \end{bmatrix}. 
\end{eqnarray*}

In the calculation of $W^a_{\pm}$ one can interchange the limit $\varepsilon\to 0+$ and the integration over $dE_0(\lambda)$ according to \cite[Theorem 4.2.4]{Yaf}. After the substitution $k^2=\lambda>0$ one calculates using \cite[Definition 2.7.2]{Yaf} 
\begin{align*}
\lim_{\epsilon \to 0+}-i\epsilon R(k^2\pm i\epsilon) dE_0(k^2)f=   \lim_{\epsilon \to 0+} -i\epsilon\int_{\Ge}r(x,y,\sqrt{k^2\pm i\epsilon}) \begin{bmatrix} \cos(k x)\hat{f}_+(k) \\ 0  \end{bmatrix}   dy \, dk 
\end{align*}
with $$\left\{\cos(k x)\hat{f}_+(k)\right\}_{j\in \Ee_+}= \cos(k x_j)\hat{f}_j(k)_{j\in \Ee_+}$$ and $0$ denotes the zero on the rest of the components. Analogue after substituting $-\kappa^2=\lambda<0$ one computes
\begin{align*}
&\lim_{\epsilon \to 0+}-i\epsilon R(-\kappa^2\pm i\epsilon) dE_0(-\kappa^2)f \\ =&\lim_{\epsilon \to 0+} -i\epsilon\int_{\Ge}r(x,y,\sqrt{-\kappa^2\pm i\epsilon}) \begin{bmatrix} 0 \\ \cos(\kappa x)\hat{f}_-(\kappa)   \end{bmatrix}  dy\, d\kappa
\end{align*}
with $$\left\{\cos(\kappa x)\hat{f}_-(\kappa)\right\}_{j\in \Ee_-}= \cos(\kappa x_j)\hat{f}_j(\kappa)_{j\in \Ee_-}$$ and again $0$ denotes the zero on the rest of the components. This calculation is carried out in Appendix~\ref{AppB}. The result is that the strong wave operators can be computed in terms of the generalized eigenfunctions of $T(A,B)$. Note that $\Ran P^{ac}_{T(0,\mathds{1})}= \He_{\Ee}$. 
\begin{proposition}\label{wo}
For $f\in \He_{\Ee}$ one has
\begin{align*}
W_+f = 
\displaystyle{\sum_{l\in E_+} \int_{0}^{\infty}\varphi_l(\cdot,-k,ik)\hat{f}_l(k) \tfrac{d k}{2}} 
 + \displaystyle{\sum_{l\in E_-}  \int_{0}^{\infty}\varphi_l(\cdot,ik,k)\hat{f}_l(k) \tfrac{d k}{2}}    
\end{align*}
and
\begin{align*}
W_-f =  
\displaystyle{\sum_{l\in E_+} \int_{0}^{\infty}\varphi_l(\cdot,k,ik)\hat{f}_l(k) \tfrac{d k}{2}}  
+\displaystyle{\sum_{l\in E_-}  \int_{0}^{\infty}\varphi_l(\cdot,ik,-k)\hat{f}_l(k) \tfrac{d k}{2}}. 
\end{align*}
\end{proposition}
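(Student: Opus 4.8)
The plan is to reduce the computation of $W_\pm$ to a stationary one. By Theorem~\ref{CompletnessWO} the strong wave operators exist, hence they coincide with the abelian wave operators, and it is enough to evaluate
\[ W_\pm^a f = \lim_{\varepsilon\to 0+} -i\varepsilon \int_{-\infty}^\infty R(\lambda\mp i\varepsilon)\,dE_0(\lambda)f \]
on $\Ran P^{ac}_{T(0,\mathds{1})}=\He_{\Ee}$. Since $dE_0(\lambda)$ is supported on the external edges and diagonalised by the $\cos$--transform, with its positive part carried by $\Ee_+$ (where $\lambda>0$) and its negative part by $\Ee_-$ (where $\lambda<0$), I would split the integral at $\lambda=0$ and, after the substitutions $\lambda=k^2$ on $\Ee_+$ and $\lambda=-\kappa^2$ on $\Ee_-$, handle the two contributions separately.

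The core of the argument is to insert the explicit resolvent kernel of Proposition~\ref{resolvent}, writing $r(x,y,\sqrt{\lambda\mp i\varepsilon})=r^0_{n,m}+r^1_{n,m}$, and to pass to the limit $\varepsilon\to0+$, interchanging the limit with the $dE_0$--integration as permitted by \cite[Theorem 4.2.4]{Yaf}. The factor $-i\varepsilon$ combines with the $\tfrac{i}{2k}$ (respectively $\tfrac{i}{2\kappa}$) singularity of the free Green's function $W_{n,m}$ and the behaviour of the kernel near the real axis to act as an approximate identity. In the limit the free part $r^0_{n,m}$ contracts the $\cos$--transform data against $e^{ik\abs{x-y}}$ and reproduces, edge by edge, the incident exponential of the relevant generalized eigenfunction, while the correction term $r^1_{n,m}$ supplies exactly the scattered part, whose coefficients are the entries of $\chi(\pm k,ik)=\mathfrak{X}(\pm k,ik)e_{n,m}$ of \eqref{chi}. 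Reassembling these pieces on each edge yields $\varphi_l(\cdot,\mp k,ik)$ for $l\in\Ee_+$ and $\varphi_l(\cdot,ik,\pm k)$ for $l\in\Ee_-$; this is the graph analogue of the computation carried out for positive Laplacians in \cite{VKRS2006}, whose lines I would follow.

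The part I expect to be the main obstacle is the bookkeeping of which boundary value of the resolvent enters $W_+$ versus $W_-$. For $\lambda=k^2>0$ the approximation $R(\lambda+i\varepsilon)$ corresponds to $\sqrt{\lambda+i\varepsilon}\to k\in\partial\Qe$, hence to $\varphi_l(\cdot,k,ik)$, whereas $R(\lambda-i\varepsilon)$ forces $\sqrt{\lambda-i\varepsilon}$ into $\overline{\Pe}$, i.e. to $-k$, hence to $\varphi_l(\cdot,-k,ik)$; the analogous dichotomy on $\Ee_-$ for $\lambda=-\kappa^2<0$ gives $\varphi_l(\cdot,ik,\pm k)$. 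Along the way one must check that the limits exist for almost every $k>0$ --- the exceptional set being the finitely many resonances, where $\det Z_{n,m}(A,B,\pm k,ik,\au)=0$ or $\det Z_{n,m}(A,B,ik,\pm k,\au)=0$, which is $dE_0$--null --- and that the normalisation constants coming from the $\cos$--transform and from $W_{n,m}$ combine to the factor $\tfrac{dk}{2}$ appearing in the statement. The remaining estimates justifying the interchange of limits and the $L^2$--convergence are routine; the detailed calculation is deferred to Appendix~\ref{AppB}.
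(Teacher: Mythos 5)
Your proposal is correct and follows essentially the same route as the paper: identify $W_\pm$ with the abelian wave operators, use the $\cos$--transform representation of $dE_0$, insert the explicit resolvent kernel of Proposition~\ref{resolvent}, interchange the limit with the spectral integration via \cite[Theorem 4.2.4]{Yaf}, and evaluate the boundary values $\sqrt{\lambda\mp i\varepsilon}\to \pm k$ (resp.\ $i\kappa$ from the $\Qe$/$\Pe$ side), which is exactly the computation the paper carries out in Appendix~\ref{AppB}. The only difference is that you treat the explicit $\varepsilon\to 0+$ auxiliary integrals as routine rather than writing them out.
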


\begin{remark}\label{U}
From Theorem~\ref{CompletnessWO} it follows that the absolutely continuous part of $T(A,B)$ is the range of $W_{\pm}$ and spectral representations of the absolutely continuous part of $T(A,B)$ are given by the transforms
$$U_{\pm}=W_{\pm}W_{\pm}^*.$$ 
\end{remark}

\subsection{Scattering matrix.}
As the wave operators exist and since they are complete one can define the scattering operator $S=S\left(T(A,B) ,T(0,\mathds{1})\right)$ by  
$$S:=W_+^{\ast}W_-.$$
The scattering operator is a unitary operator on the absolutely continuous subspace of $T(0,\mathds{1})$. From the formulae for the wave operators given in Proposition~\ref{wo} one sees that the wave operators decompose into two parts. One part is related to the positive absolutely continuous spectrum and the other part is related to the negative absolutely continuous spectrum. This observation can be formalized within the concept of local wave operators, see for example \cite[Chapter 2.2.2]{Yaf}. The local wave operators $W_{\pm}\left(T(A,B) ,T(0,\mathds{1}), \Lambda\right)=W_{\pm}\left(\Lambda\right)$ are  
\begin{equation*}
W_{\pm}\left(T(A,B) ,T(0,\mathds{1}), \Lambda\right) := s-\lim_{t\to \pm \infty} e^{itT(A,B)} e^{-it T(0,\mathds{1})} P_0^{ac}(\Lambda),
\end{equation*}
where $P_0^{ac}(\cdot)$ is the spectral projector of the absolutely continuous part of the free operator $T(0,\mathds{1})$ and $\Lambda\subset \sigma_{ac}(T(0,\mathds{1}))$. For $\Lambda_+=[0,\infty)$ and $\Lambda_-=(-\infty,0]$ one has, because $\abs{\Lambda_+ \cap \Lambda_-}=0$, that furthermore
\begin{equation*}
W_{\pm} = W_{\pm}\left(\Lambda_+\right) + W_{\pm}\left(\Lambda_-\right).
\end{equation*}
The local operators $W_{\pm}\left(\Lambda_+\right)$ are
\begin{align*}
W_+\left(\Lambda_+)\right)f &= 
\displaystyle{\sum_{l\in E_+} \int_{0}^{\infty}\varphi_l(\cdot,-k,ik)\hat{f}_l(k) \tfrac{d k}{2}}, \\
W_+\left(\Lambda_-)\right)f &=  \displaystyle{\sum_{l\in E_-}  \int_{0}^{\infty}\varphi_l(\cdot,ik,k)\hat{f}_l(k) \tfrac{d k}{2}}    
\end{align*}
and
\begin{align*}
W_-\left(\Lambda_+\right)f &=  
\displaystyle{\sum_{l\in E_+} \int_{0}^{\infty}\varphi_l(\cdot,k,ik)\hat{f}_l(k) \tfrac{d k}{2}},\\   
W_-\left(\Lambda_-\right)f &= \displaystyle{\sum_{l\in E_-}  \int_{0}^{\infty}\varphi_l(\cdot,ik,-k)\hat{f}_l(k) \tfrac{d k}{2}}. 
\end{align*}
For the scattering operator it follows that
\begin{align*}
S= W_{+}\left(\Lambda_+\right)^{\ast} W_{-}\left(\Lambda_+\right) + W_{+}\left(\Lambda_-\right)^{\ast} W_{-}\left(\Lambda_-\right). 
\end{align*}
Let $U^0$ be a given spectral representation for the free operator, then the corresponding \textit{scattering matrix} is defined as  
$$S(\cdot)= U_0 W_+^*W_- U_0^*.$$ 
The scattering matrix is unitary. Here one chooses $U_0$ to be the diagonal block-operator matrix
\begin{eqnarray*}
U_0= \begin{bmatrix} U_0^+ & 0 \\ 0 & U_0^-  \end{bmatrix}, &\mbox{where} & U_0^+f=\hat{f}_+, \ \mbox{and}  \ U_0^-f=\hat{f}_-.
\end{eqnarray*}
This is the $\cos$--transform for the absolutely continuous part of the free operator $T(0,\mathds{1})$, that already appeared in the context of stationary scattering theory. As a consequence the scattering matrix  is given as a diagonal block operator matrix. Using Proposition~\ref{wo} the scattering matrix can be computed in terms of the coefficients of the generalized eigenfunctions. In the following this is carried out in detail.

Write $\varphi_l(\pm k,i\kappa, x)$, $l\in\Ee$, into the $\abs{\Ee} \times \abs{\Ee\cup \Ie}$--matrix $\Phi(x,\pm k,ik)$ with entries 
\begin{align*}
\left\{\Phi(x,\pm k,ik)\right\}_{j,l} := \left\{\varphi_l(x,\pm k,i\kappa)\right\}_j.
\end{align*}
The restrictions of $\Phi(x,\pm k,ik)$ to $\Ee_+$ are
\begin{align*}
\left\{\Phi(x,+k,ik)\right\}_{j,l\in\Ee_+} = \delta_{lj} e^{-ikx_j} +  e^{ikx_l}\chi_{j,l}(k,ik) 
\end{align*}
and
\begin{align*}
\left\{\Phi(x,-k,ik)\right\}_{j,l\in\Ee_+} = \delta_{lj} e^{ikx_j} +   e^{-ikx_l} \chi_{j,l}(-k,ik).  
\end{align*}
Consider the restriction of $\chi(k,ik)$ to the external edges and denote it by $\chi_{\Ee,\Ee}(k,i\kappa)$, which is a $\abs{\Ee}\times \abs{\Ee}$--matrix with entries $$\{\chi(k,i\kappa)_{\Ee,\Ee}\}_{j,l\in \Ee}=\left\{\chi(k,i\kappa)\right\}_{j,l\in \Ee}.$$ 
With respect to the division into positive and negative external edges one obtains the block structure 
\begin{eqnarray}\label{Sm}\index{$\chi_{\pm\pm}(k,i\kappa)$, $\chi_{\Ee,\Ee}(k,i\kappa)$}
\chi_{\Ee,\Ee}(k,i\kappa)=\begin{bmatrix}
\chi_{++}(k,i\kappa) & \chi_{+-}(k,i\kappa) \\
\chi_{-+}(k,i\kappa) & \chi_{--}(k,i\kappa)
\end{bmatrix},
\end{eqnarray}
where $\chi_{p,q}(k,i\kappa)$ with $p,q\in \{+,-\}$ are the matrices with entries  
$$\{\chi(k,i\kappa)_{p,q}\}_{j\in \Ee_{p}, l\in \Ee_{q}}=\left\{\chi(k,i\kappa)\right\}_{j\in \Ee_{p}, l\in \Ee_{q}}.$$ 
Since the functions $e^{-ikx}$ and $e^{ikx}$ are on each edge linearly independent there exists a $\abs{\Ee_+}\times \abs{\Ee_+}$--matrix $C_{+}(k)$ with entries $\{C_{+}(k)\}_{l,j\in \Ee_+}=c_{lj}(k)$ such that 
$$\Phi(\cdot,k,ik)C_+(k)=\Phi(\cdot,-k,ik)$$ 
holds on the positive external edges. The functions $c_{lj}(k)\varphi_l(\cdot,k,ik)$ are generalized eigenfunctions as well and therefore their linear combinations
$$\varphi_j(\cdot,-k,ik)= \sum_{l\in \Ee_+}c(k)_{lj}\varphi_l(\cdot,k,ik), \ \ \mbox{for} \ x\in \Ee_+$$ 
are generalized eigenfunctions, too. Recall that the functions $\varphi_l(\cdot,k,ik)$ and \\ $\varphi_l(\cdot,-k,ik)$ are defined uniquely up to a set of measure zero. Hence the relation 
$$\varphi_j(\cdot,-k,ik)= \sum_{l\in \Ee_+}c(k)_{lj}\varphi_l(\cdot,k,ik)$$ 
carries over from the external parts to the whole graph $(\Ge,\au)$. Comparing the coefficients on the external edges yields 
\begin{eqnarray*}
C_{+}(k) \chi_{++}(k,ik)=\mathds{1}_{\Ee_+} & \mbox{and hence} & C_{+}(k)^{-1}= \chi_{++}(k,ik).
\end{eqnarray*}
Analogously one obtains 
$$\varphi_l(\cdot,ik,k)= \sum_{l\in \Ee_-}c_{lj}(k)\varphi_l(\cdot,ik,-k),$$
for an appropriate $\abs{\Ee_-}\times \abs{\Ee_-}$--matrix $C_{-}(k)$ with entries  $\{C_-(k)\}_{j,l\in \Ee_-}=c_{lj}(k)$. Considering the restriction of $\varphi_l(\cdot,ik,\pm k)$ to $\Ee_-$ yields
\begin{eqnarray*}
\Phi(\cdot,ik,-k)C_-(k)=\Phi(\cdot,ik,k) 
\end{eqnarray*}
and hence
\begin{eqnarray*}
C_-(k) \chi_{--}(ik,k)=\mathds{1}_{\Ee_-}  & \mbox{and} & C_-(k)^{-1}= \chi_{--}(ik,k).
\end{eqnarray*}
This can be used to express $W_-$ in terms of $W_+$ as follows,
\begin{eqnarray}\label{wm}
W_+  M_C U_0 f = W_- f, 
\end{eqnarray}
where $M_C$ denoted the multiplication operator with $C(\cdot)$, 
\begin{eqnarray*}
C(k)=\begin{bmatrix} C_+(k) & 0 \\ 0 & C_-(k) \end{bmatrix} &\mbox{and hence }  M_C\hat{f}_{\pm}(k)= C_{\pm}(k)\hat{f}_{\pm}(k).
\end{eqnarray*}
As the absolutely continuous parts of $T(\mathds{1},0)$ and $T(A,B)$ are unitarily equivalent one can introduce the wave matrices using for example the transform $U_-$ from Remark~\ref{U}. Here the wave matrices $w_+$ and $w_-$  are defined by 
\begin{eqnarray*}
w_{+}(k)\hat{f}(k)= U_- W_{+}U_0^*\hat{f}(k)& \mbox{and} & w_{-}(k)\hat{f}(k)= U_- W_{-}U_0^*\hat{f}(k),
\end{eqnarray*}
and they are unitary maps from $\Ran U_0$ to Ran $U_-$. With the above equation \eqref{wm} one obtains using $U_0^{\ast}U_0 f = U_0^{\ast}\hat{f}$ that
\begin{eqnarray*}
w_+^*w_-\hat{f}_+= M^*_{C_+}\hat{f}_+ &\mbox{and} & w_+^*w_-\hat{f}_-= M^*_{C_-}\hat{f}_-,
\end{eqnarray*} 
where $M^*_{C_\pm}$ is the multiplication operator with $C^{\ast}_{\pm}(\cdot)$. By the definitions of $w_+$ and $w_-$ one has for the scattering matrix  
\begin{eqnarray*}
S(k^2)\hat{f}(k)=w_+^*(k)w_-(k)\hat{f}(k)
\end{eqnarray*} 
and hence $C_+(k)$ is unitary for almost all $k>0$ and as $C_+(k)^{-1}=\chi_{++}(k,ik)$ one concludes that $C_+(k)^*=\chi_{++}(k,ik)$. Analogue one obtains $C_-(k)^*=\chi_{--}(ik,k)$. This is summarized in 

\begin{theorem}\label{smatrix}
Let $T(A,B)$ be self--adjoint. Then the scattering matrix for the pair \\ $\left(T(A,B),T(0,\mathds{1})\right)$ is a $\abs{\Ee_+} \times \abs{\Ee_+}$--matrix for positive energies $\lambda>0$  and a $\abs{\Ee_-} \times \abs{\Ee_-}$--matrix for negative energies $\lambda<0$,   
\begin{align*}\index{$S(\lambda)$}
S(\lambda)= \begin{cases} \chi_{++}(\sqrt{\lambda}), & \lambda>0,  \\
                    \chi_{--}(i\sqrt{\abs{\lambda}}), & \lambda<0,
      \end{cases}
\end{align*}
where $\chi_{++}$ and $\chi_{--}$ have been defined in equation \eqref{Sm} as restrictions of the coefficient $\chi$. 
\end{theorem}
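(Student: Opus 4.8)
The plan is to assemble the identities established in the discussion preceding the theorem; almost all of the analytic work is already done there, and what remains is to interpret it as a statement about the scattering matrix. By definition $S = W_+^{\ast}W_-$, and with the cosine-transform spectral representation $U_0$ of the free operator $T(0,\mathds{1})$ the scattering matrix is $S(\cdot) = U_0 W_+^{\ast} W_- U_0^{\ast}$. First I would invoke Proposition~\ref{wo}, which expresses $W_{\pm}$ explicitly through the generalized eigenfunctions $\varphi_l(\cdot,\pm k,ik)$ for $l\in\Ee_+$ and $\varphi_l(\cdot,ik,\pm k)$ for $l\in\Ee_-$. Since the positive external edges feed only the local part over $\Lambda_+=[0,\infty)$ and the negative ones only the part over $\Lambda_-=(-\infty,0]$, one has $W_{\pm}=W_{\pm}(\Lambda_+)+W_{\pm}(\Lambda_-)$ with $\abs{\Lambda_+\cap\Lambda_-}=0$, so that $S(\lambda)$ is the $\abs{\Ee_+}\times\abs{\Ee_+}$ block $W_+(\Lambda_+)^{\ast}W_-(\Lambda_+)$ for $\lambda>0$ and the $\abs{\Ee_-}\times\abs{\Ee_-}$ block $W_+(\Lambda_-)^{\ast}W_-(\Lambda_-)$ for $\lambda<0$; in particular $S$ is block diagonal, which matches the case distinction in the statement.

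Next I would treat the positive-energy block in detail. On each positive external edge the functions $e^{ikx}$ and $e^{-ikx}$ are linearly independent, so there is an $\abs{\Ee_+}\times\abs{\Ee_+}$--matrix $C_+(k)$ with $\Phi(\cdot,k,ik)C_+(k)=\Phi(\cdot,-k,ik)$ on $\Ee_+$. Because each $c_{lj}(k)\varphi_l(\cdot,k,ik)$ is again a generalized eigenfunction, and generalized eigenfunctions are determined up to null sets by their restriction to the external edges, the identity $\varphi_j(\cdot,-k,ik)=\sum_{l\in\Ee_+}c_{lj}(k)\varphi_l(\cdot,k,ik)$ propagates from $\Ee_+$ to all of $(\Ge,\au)$. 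Comparing the coefficient vectors $\chi$ on the external edges yields $C_+(k)\chi_{++}(k,ik)=\mathds{1}_{\Ee_+}$, i.e. $C_+(k)^{-1}=\chi_{++}(k,ik)$; the same argument on the negative external edges gives $C_-(k)^{-1}=\chi_{--}(ik,k)$. Feeding these change-of-basis relations back into the eigenfunction expansions of $W_{\pm}$ from Proposition~\ref{wo} produces the operator identity $W_+ M_C U_0 = W_-$, where $M_C$ is multiplication by $C(k)=\mathrm{diag}(C_+(k),C_-(k))$.

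Then I would pass to wave matrices. Using the transform $U_-$ of Remark~\ref{U} to a spectral representation of the absolutely continuous part of $T(A,B)$, the wave matrices $w_{\pm}(k)$ defined by $w_{\pm}(k)\hat f(k)=U_-W_{\pm}U_0^{\ast}\hat f(k)$ are unitary from $\Ran U_0$ to $\Ran U_-$, and the identity $W_+M_C U_0 = W_-$ becomes $w_+^{\ast}w_- = M_C^{\ast}$ fibrewise. Since the strong wave operators exist and are complete by Theorem~\ref{CompletnessWO}, $S$ is unitary, hence $C_{\pm}(k)$ is unitary for almost every $k>0$ and $C_{\pm}(k)^{\ast}=C_{\pm}(k)^{-1}$. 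Combining the two steps, $S(k^2)\hat f(k)=w_+^{\ast}(k)w_-(k)\hat f(k)=C_+(k)^{\ast}\hat f(k)=\chi_{++}(k,ik)\hat f(k)$ for $k>0$, and likewise $S(-\kappa^2)=C_-(\kappa)^{\ast}=\chi_{--}(i\kappa,\kappa)$ for $\kappa>0$. Rewriting in the spectral variable $\lambda=k^2>0$ and $\lambda=-\kappa^2<0$ gives exactly the asserted formula.

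The routine-but-careful parts are the interchanges of summation and integration needed to read off $W_+M_CU_0=W_-$ from Proposition~\ref{wo}, together with checking that $w_{\pm}$ are genuinely unitary on the relevant subspaces (which is where completeness of the wave operators is used). The one genuine point requiring care is the claim that the change-of-basis relation between generalized eigenfunctions, established a priori only on the external edges, extends to all of $(\Ge,\au)$: this rests on the uniqueness, up to sets of measure zero, of the generalized eigenfunctions constructed in Section~\ref{secres}, and it is precisely this step that makes $\chi_{++}$ and $\chi_{--}$ themselves --- not merely their external restrictions as abstract objects --- the entries of the scattering matrix.
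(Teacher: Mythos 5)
Your proposal is correct and follows essentially the same route as the paper: decomposition into local wave operators over $\Lambda_\pm$, the change-of-basis matrices $C_\pm(k)$ with $C_+(k)^{-1}=\chi_{++}(k,ik)$ and $C_-(k)^{-1}=\chi_{--}(ik,k)$ obtained by propagating the relation from the external edges to the whole graph via uniqueness of the generalized eigenfunctions, the identity $W_+M_CU_0f=W_-f$, and finally unitarity of the wave matrices to conclude $S(\lambda)=C_\pm^{\ast}=\chi_{\pm\pm}$. No substantive differences from the argument given in the paper.
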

Recall that the coefficient $\chi$ has been defined in \eqref{chi}. Especially one obtains, applying Theorem~\ref{smatrix} to star-graphs, the non--obvious fact that the blocks $\Cm_{++}(k,ik)$ and $\Cm_{--}(ik,k)$ of the coefficient matrix $\Cm(k,ik)$ defined in equation \eqref{Cm}, are unitary for all $k>0$, except a finite set. The relevance of Theorem~\ref{smatrix} arises from the fact that it justifies to read the scattering properties of the system directly from the coefficients of the generalized eigenfunctions. This is common when considering self--adjoint Laplacians on graphs, but it is not self--evident, as the example of a Schr\"{o}dinger operator on the real line with a step potential shows. In this case one obtains a relation between the coefficients of the generalized eigenfunctions and the scattering matrix as well, but both do not agree in general, compare for example the article \cite{Guillope}, which is also helpful for the understanding of the scattering problem discussed here. 

\begin{example}\label{ex2IndefQG}
Consider the graph that consists of two external edges $\Ee_+=\{1,2\}$ which are connected by an internal edge $I_-=\{3\}$ of length $\au=\{a\}$. This means that one has the two vertices $\partial (1)=\partial_-(3)$ and $\partial (2)=\partial_+(3)$. The boundary conditions imposed on each vertex are the standard boundary conditions from Example~\ref{ex1IndQG}. They are encoded in the matrices
\begin{eqnarray*} A=
\begin{bmatrix} 
1 & -1 & 0 &0 \\
0& 0 & 0 & 0\\
 0 &0 &1 & -1 \\
0& 0 & 0 & 0
\end{bmatrix} & \mbox{and}& B= 
\begin{bmatrix} 
0& 0 & 0 & 0\\
1 & -1 & 0 &0 \\
0& 0 & 0 & 0 \\
0 &0 &1 & -1 
\end{bmatrix}.
\end{eqnarray*}
Hence one obtains 
\begin{align*}
X_{2,2}(k,ik,\au)=
\begin{bmatrix} 
1 & 0 & 0 &0 \\
0& 1 & 1 & 0\\
 0 &e^{-k a} & e^{k a} & 0 \\
0& 0 & 0 & 1
\end{bmatrix}, 
Y_{2,2}(k,ik,\au)= 
\begin{bmatrix} 
ik & 0 & 0 &0 \\
0& -k & +k & 0\\
 0 &k e^{-k a} & -k e^{k a} & 0 \\
0& 0 & 0 & ik
\end{bmatrix}
\end{align*}
and with $Z_{2,2}(A,B,k,ik,\au)= AX_{2,2}(k,ik,\au) +BY_{2,2}(k,ik,\au)$ one obtains the scattering matrix for $k>0$
\begin{align*} S(k^2)= \begin{bmatrix} 
s_{11}(k) & s_{12}(k)\\
s_{21}(k)& s_{22}(k) 
\end{bmatrix} 
= \begin{bmatrix} 
 i \tanh(ak)   & \frac{1}{\cosh(ak)}\\
\frac{1}{\cosh(ak)} &   i \tanh(ak)  
\end{bmatrix}.
\end{align*}
The absolutely continuous spectrum is $[0,\infty)$ with multiplicity two. The negative eigenvalues are the zeros of  
$$\det Z_{2,2}(A,B,ik,k,\au)= i k^2 \cos(ak)$$ 
and there are no embedded eigenvalues. Hence the pure point spectrum 
$$\sigma_{pp}(T(A,B))=\left\{ -\left(\frac{((2m-1)\pi)^2}{2 a}\right)^2 \bigg\vert \, m\in \mathds{N} \right\}$$ 
consists of infinitely many negative eigenvalues with multiplicity one which accumulate at $-\infty$. The kernel is zero and there are no further resonances. 
\end{example} 
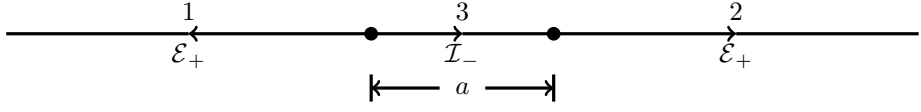
\begin{figure}
  \centering
\begin{tikzpicture}[scale=0.6]
\fill[black] (0,0) circle (1ex);
\fill[black] (-4,0) circle (1ex);
\draw[->, black, very thick] (0,0) -- (4,0);
\draw[black, very thick] (4,0) -- (8,0);
\draw (4,0.5) node 
{2};
\draw (4,-0.5) node 
{$\Ee_+$};
\draw[<-, black, very thick] (-8,0) -- (-4,0);
\draw[black, very thick] (-12,0) -- (-8,0);
\draw (-8,0.5) node 
{1};
\draw (-8,-0.5) node 
{$\Ee_+$};
\draw[->, black, very thick] (-4,0) -- (-2,0);
\draw[black, very thick] (-2,0) -- (0,0);
\draw (-2,0.5) node 
{3};
\draw (-2,-0.5) node 
{$\Ie_-$};

\draw[<-, black, very thick] (-4,-1.2) -- (-2.5,-1.2);
\draw[black, very thick] (-4,-0.9) -- (-4,-1.5);
\draw[black, very thick] (0,-0.9) -- (0,-1.5);
\draw[->,black, very thick] (-1.5,-1.2) -- (0,-1.2);
\draw (-2,-1.2) node 
{$a$};

\end{tikzpicture}
  \caption[The graph described in Example~\ref{ex2IndefQG}, Chapter~\ref{IndefiniteQG}.]{The graph described in Example~\ref{ex2IndefQG}.}
\end{figure}
\begin{example}\label{S1}
For the situation considered in Example~\ref{exCm} one reads from Theorem~\ref{smatrix} that the scattering matrix is 
\begin{align*}
S(\lambda)=\begin{cases} +i , & \lambda>0, \\ -i, & \lambda<0. \end{cases} 
\end{align*}
The absolutely continuous spectrum is the whole real line and there are no eigenvalues. 
\end{example}

\begin{example}
Consider a star graph with three edges, $\Ee_+=\{1,2\}$ and $\Ee_-=\{3\}$ matched together by the boundary conditions
\begin{eqnarray*}
A=\begin{bmatrix}
1 & -1 & 0 \\
0 & 1 & -1 \\
0 & 0 & 0 
\end{bmatrix} & \mbox{and} & B=\begin{bmatrix}
0 & 0 & 0 \\
0 & 0 & 0 \\
1 & 1 & -1 
\end{bmatrix}.
\end{eqnarray*}
Note that these local boundary conditions are the ones from Example~\ref{ex1IndQG}. The operator $T(A,B)$ has absolutely continuous spectrum $(-\infty,\infty)$, where $(0,\infty)$ has multiplicity two and $(-\infty,0)$ has multiplicity one. Since 
\begin{equation*}
\det Z_{2,1}(A,B,k,i\kappa)= \kappa + 2 ik,
\end{equation*}
there are no eigenvalues and no resonances. The local coefficient matrix \\ $\Cm(k,ik)=\Cm$, 
\begin{eqnarray*}
\Cm=\begin{bmatrix} -1/5+(2/5)i & 4/5+(2/5)i & 2/5-(4/5)i \\  4/5+(2/5)i &  -1/5+(2/5)i & 2/5-(4/5)i \\ 4/5+(2/5)i & 4/5+(2/5)i  & -3/5-(4/5)i \end{bmatrix}
\end{eqnarray*}
is $k$--independent and the scattering matrix is given by the corresponding blocks of $\Cm= \chi_{\Ee,\Ee}$,
\begin{align*}
S(\lambda)=\begin{cases} \begin{bmatrix} -1/5+(2/5)i & 4/5+(2/5)i \\  4/5+(2/5)i &  -1/5+(2/5)i  \end{bmatrix} , & \lambda>0, \\ -3/5-(4/5)i, & \lambda<0. \end{cases} 
\end{align*}
\end{example}

Looking at the generalized eigenfunctions and taking into account Theorem~\ref{smatrix} one can try an interpretation of the coefficient matrix $\Cm(k,ik)$. Let $\Ge$ be a star graph and $k>0$, then 
\begin{itemize}
\item $\Cm_{++}(k,ik)$ is the scattering matrix, that is the entries describe the transmitted and reflected parts of an incoming wave (scattering matrix for positive energies), 
\item $\Cm_{+-}(k,ik)$ can be interpreted as depth of indentation of an incoming wave with positive energy $k^2$ into the negative part of the graph. 
\end{itemize}
For negative energies $-k^2<0$ the matrices $\Cm_{--}(ik,-k)$ and $\Cm_{-+}(ik,-k)$ admit analogue interpretations.

\subsection{Time--dependent Problems.}
A difficult topic is to consider time--dependent problems involving indefinite operators. Thinking of applications to solid-state physics, one considers the Schr\"{o}dinger equation. In the effective mass approximation, the effective mass tensor can be negative, too. For any self--adjoint operator one can now give the solutions of the initial value problem
\[ \left\{ \begin{array}{ll}
         \left( i \frac{\partial}{\partial t } - T(A,B) \right) u(x,t)=0, \\ 
        u(\cdot,0)=u_0 \end{array} \right. 
        \] 
in terms of a unitary group 
\begin{eqnarray*}
u=U(t)u_0, &\mbox{where} & U(t)=e^{-it T(A,B)}
\end{eqnarray*}
is acting in $\He$. The fact that the spectrum is not semi--bounded is no obstacle for the construction of this group, similar to the situation, when considering the Dirac equation.

Considering the wave equation or the diffusion equation changes this feature completely. The lower bound on the spectrum of a semi--bounded operator can be interpreted as a measure for the stability of the system. Problems that arise when the spectrum is neither bounded from below nor from above can be avoided by projecting away the critical parts of the spectrum. Assume that $\Ee_-=\emptyset$ and consider a self-adjoint operator $T(A,B)$ on such a graph. Then $T(A,B)$ has only positive absolutely continuous spectrum. Denote by $T_{ac}(A,B)$ the restriction of $T(A,B)$ onto its absolutely continuous subspace. For the wave equation according to \cite[Chapter 10.3]{Baum} with 
$\Be=T_{ac}(A,B)^{1/2}$ the solution of 
\begin{align*}
    \left(  \frac{\partial^2}{\partial t^2 } + T_{ac}(A,B) \right) u(x,t)=0, && u(0)=u_0, && \frac{\partial}{\partial t }u = v_0
\end{align*}
is given by the group 
\begin{align*}
\begin{bmatrix} u(t) \\ u^{\prime}(t)  \end{bmatrix} = \begin{bmatrix} \cos(\Be t) & \Be^{-1} \sin(t\Be) \\ -\Be \sin (t\Be) & \cos(t\Be)   \end{bmatrix} \begin{bmatrix} u_0 \\ v_0  \end{bmatrix},
\end{align*}
acting in an appropriate Hilbert space. All entries are known in terms of the spectral theorem using the spectral transform of $T_{ac}(A,B)$, which in turn is explicitly given in terms of generalized eigenfunctions, compare Remark~\ref{U} and Proposition~\ref{wo}.

Alternatively one can consider a piecewise defined wave equation
\begin{align*}
\begin{cases}  +  \left(  \frac{\partial^2}{\partial t^2 } - \frac{d^2}{dx^2} \right) u(x,t)=0, & x\in \Ge_+, \\
       -\left(  \frac{\partial^2}{\partial t^2 } - \frac{d^2}{dx^2} \right)u(x,t)=0 , & x\in \Ge_-
\end{cases}
\end{align*}
with appropriate initial conditions. This would correspond to a completely reversed dynamics on the negative part compared to the positive one. Such problems -- although very interesting -- are beyond the scope of the approach presented here. 

\subsection{Gluing formula for the scattering matrix.}
   
Let be given two connected metric graphs. Then one can construct a single connected metric graph by gluing the two original graphs together along external edges. Each two external edges that are glued together in pairs, become an internal edge of a certain length. To be more precise let be given two metric graphs $(\Ge_1,\au_1)$, where $\Ge_1= (V_1,\Ie_1,\Ee_1, \partial_1)$ and $(\Ge_2,\au_2)$, where $\Ge_2= (V_2,\Ie_2,\Ee_2, \partial_2)$. Furthermore let be given two subsets $\tilde{\Ee_1}\subset \Ee_1$ and $\tilde{\Ee_2}\subset \Ee_2$ of their external edges with $\abs{\tilde{\Ee_1}}=\abs{\tilde{\Ee_2}}$, and a bijective identification map 
$$G \colon \tilde{\Ee_1}  \rightarrow \tilde{\Ee_2}.$$  
Then define a new graph 
$$ \Ge := \Ge_1 \circ_G \Ge_2 $$\index{$\Ge_1 \circ_G \Ge_2$}
is the graph $\Ge= (V,\Ie,\Ee, \partial)$ with $V=V_1\cup V_2$ and $\Ee= \left(\Ee_1\setminus \tilde{\Ee}_1\right) \cup \left(\Ee_2\setminus \tilde{\Ee}_2\right)$. The internal edges are $\Ie=\Ie_1 \cup \Ie_2 \cup \Ie_G$, where $\Ie_G$  are the new edges connecting the two graphs. One has $\abs{\Ie_G}=\abs{\tilde{\Ee}_1}$. Any element $e\in \tilde{\Ee}_1$ becomes an element of $i=i_e\in\Ie_G$ and one sets $\partial(i_e)=(\partial(e),\partial(G(e)))$. Assigning to each new internal length edge the length $a_{i}$, which is written into $\au_G=\{a_{i}\}_{i\in \Ie_G}$ and keeping the edge lengths of $\Ie_1$ and $\Ie_2$ one obtains a new metric graph $(\Ge,\au)$.

Let there be furthermore two self--adjoint Laplace operators defined on each of the metric graphs $(\Ge_j,\au_j)$, $j=1,2$. Then their boundary conditions define naturally a self-adjoint operator on the new graph $\Ge=\Ge_1 \circ_G \Ge_2$, constructed by gluing together the two original ones. One can consider first the scattering matrices of the operators defined on each of the components separately. Then the scattering matrix of the operator defined on the new graph can be computed in terms of the original data of the scattering matrices and the data used in the gluing construction. This iterative process is known as the \textit{star product}. The history of gluing formulas for scattering matrices of Laplacians on networks goes back to Redheffer, see \cite{Redheffer1961, Redheffer1962}. The star product has been generalized by Kostrykin and Schrader to self--adjoint Laplacians on metric graphs, see \cite{VKRS1999, KS2001}. The considerations presented here are based on these last mentioned works. For the history of the star products and factorizations of the scattering matrix see also the references quoted therein. Such gluing formulae are only known for one dimensional (singular) spaces. For higher dimensions such formulae are not known in general.

Here a gluing formula is presented for the case of two graphs which have equal numbers of negative external edges and all of them are glued together in pairs. This allows understanding of the scattering properties of systems, where the negative part of the leading coefficient is only compactly supported in terms of the generalized star product. 

Consider two finite metric graphs $(\Ge^1,\au^1)$ and $(\Ge^2,\au^2)$ with $\abs{\Ee_-^1}= \abs{\Ee_-^2}$, and on each of these graphs the self--adjoint operators $T(A_1,B_1)$ and $T(A_2,B_2)$, respectively. In addition let there be a bijective identification $G\colon \Ee_-^1 \rightarrow \Ee_-^2$. One considers now the graph $\Ge=\Ge_1 \circ_G \Ge_2$ with $\Ie_G\subset \Ie_-$ and lengths $a_i>0$ for $i\in\Ie_G$. The boundary conditions $(A_1,B_1)$ and $(A_2,B_2)$ act only on $V_1$ or on $V_2$, respectively. Therefore one can impose on $\Ge$ the boundary conditions defined by 
\begin{eqnarray*}
A=\begin{bmatrix} A_1 & 0 \\ 0 & A_2 \end{bmatrix} &\mbox{and} & B=\begin{bmatrix} B_1 & 0 \\ 0 & B_2 \end{bmatrix},
\end{eqnarray*}
which are given with respect to the division $V=V_1 \dot\cup V_2$. Hence $T(A,B)$ is self-adjoint on $\Ge$ for any choice of $\au_G$.

\begin{figure}
  \centering
   \subfigure[Two disconnetcted star graphs and the identification map $G$]{
\begin{tikzpicture}[scale=0.4]
\fill[black] (0,3) circle (1ex);
\draw (-1,4) node 
{$\Ge^1$};

\draw[->, black, dashed, very thick] (0,3) -- (4,3);
\draw[black, dashed,  very thick] (4,3) -- (8,3);
\draw (4,2.3) node 
{$\Ee_-^1$};
\draw[<-, black, very thick] (-4,3) -- (0,3);
\draw[black, very thick] (-8,3) -- (-4,3);
\draw (-4,2.3) node 
{$\Ee_+^1$};

\draw[->, black, very thick] (3.1,2) -- (3.1,0.6);
\draw (2.2,1.5) node 
{$G$};

\fill[black] (8,0) circle (1ex);
\draw (9,1) node 
{$\Ge^2$};

\draw[->, black, very thick] (8,0) -- (12,0);
\draw[black, very thick] (12,0) -- (16,0);
\draw (12,-0.7) node 
{$\Ee_+^2$};
\draw[<-, black, dashed, very thick] (4,0) -- (8,0);
\draw[black, dashed, very thick] (0,0) -- (4,0);
\draw (4,-0.7) node 
{$\Ee_-^2$};
\end{tikzpicture}

}
 \subfigure[Connected graph after gluing construction with internal edge length $a_G$]{
 \begin{tikzpicture}[scale=0.4]
\fill[black] (8,-3) circle (1ex);
\fill[black] (0,-3) circle (1ex);
\draw (3,-2) node 
{$\Ge=\Ge_G$};

\draw[->, black, very thick] (8,-3) -- (12,-3);
\draw[black, very thick] (12,-3) -- (16,-3);
\draw (12,-3.7) node 
{$\Ee_+$};
\draw[->, black, dashed, very thick] (0,-3) -- (4,-3);
\draw[black, dashed, very thick] (4,-3) -- (8,-3);
\draw (4,-3.7) node 
{$\Ie_-$};

\draw[<-, black, very thick] (0,-4.7) -- (3,-4.7);
\draw[black, very thick] (0,-4.4) -- (0,-5);
\draw[black, very thick] (8,-4.4) -- (8,-5);
\draw[->,black, very thick] (5,-4.7) -- (8,-4.7);
\draw (4,-4.7) node 
{$a_G$};

\draw[<-, black, very thick] (-4,-3) -- (0,-3);
\draw[black, very thick] (-8,-3) -- (-4,-3);
\draw (-4,-3.7) node 
{$\Ee_+$};

\end{tikzpicture}
}
  \caption{Two graphs glued along the negative external edges.}
\end{figure}
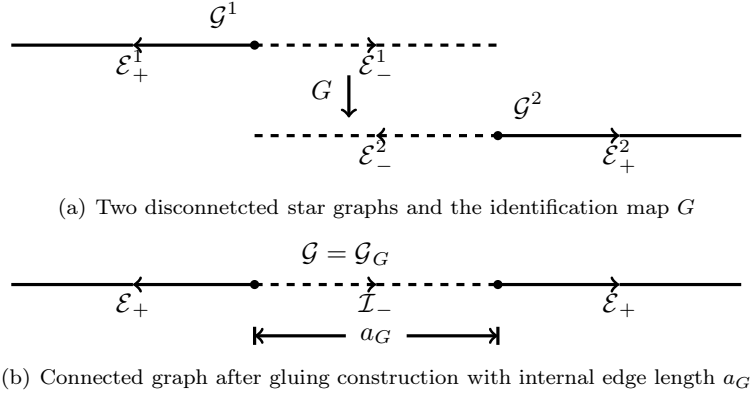

Let $\chi^j(k,ik)$ be the coefficients of the generalized eigenfunctions of $T(A_j,B_j)$ on $(\Ge^j,\au^j)$, for $j=1,2$. With respect to the decomposition into positive and negative external edges one writes as in equation \eqref{Sm}
\begin{align*}
\chi_{\Ee^1,\Ee^1}^1(k,ik)=& \begin{bmatrix}\chi_{++}^1(k,ik) & \chi_{+-}^1(k,ik) \\ \chi_{-+}^1(k,ik) & \chi_{--}^1(k,ik)\end{bmatrix}
\end{align*}
and 
\begin{align*}
\chi^2_{\Ee^2,\Ee^2}(k,ik)=& \begin{bmatrix}\chi_{++}^2(k,ik) & \chi_{+-}^2(k,ik) \\ \chi_{-+}^2(k,ik) & \chi_{--}^2(k,ik)\end{bmatrix}.
\end{align*}
The operator $T(A,B)$ on the metric graph $(\Ge,\au)$, obtained by gluing together both along the negative edges, with new lengths $\au_G$, has only positive absolutely continuous spectrum and its scattering matrix is explicitly computable in terms of $\chi_{\Ee^1,\Ee^1}^1(k,ik)$ and $\chi_{\Ee^2,\Ee^2}^2(k,ik)$, assuming some compatibility properties. To make these compatibility assumptions more precise one
denotes the critical sets where the generalized star product is a priori not defined by
\begin{align*}
\Xi_1(\chi^1,\chi^2, \au):=&\left\{ k>0 \mid \det [\mathds{1}- e^{-k\au} \chi_{--}^1(k,ik)e^{-k\au}\chi_{--}^2(k,ik)]=0 \right\},\\
\Xi_2(\chi^1,\chi^2, \au):=&\left\{ k>0 \mid \det [\mathds{1} - e^{-k\au}\chi_{--}^2(k,ik)e^{-k\au}\chi_{--}^1(k,ik)]=0 \right\}
\end{align*}
and by 
\begin{align*}
\Theta(\chi^1,\chi^2) = \left\{k>0 \mid k \ \mbox{singularity of $\chi^1(k,ik)$ or $\chi^2(k,ik)$}\right\}.  
\end{align*}
The generalized star product is defined in the article \cite{KS2001} and is denoted by $\ast_p$. The gluing formula is described in the following

\begin{proposition}\label{glue}
Let $T(A_1,B_1)$ and $T(A_2,B_2)$ be self--adjoint operators on the graph $(\Ge^1,\au^1)$ and on $(\Ge^2,\au^2)$, respectively, where $p=\abs{\Ee_-^1}=\abs{\Ee_-^2}$. Let furthermore be $T(A,B)$ and $(\Ge,\au)$ as described above. Then for all $k>0$ with $ k\notin\{ \Xi_1(\chi^1,\chi^2, \au) \cup \Xi_2(\chi^1,\chi^2, \au)\cup \Theta(\chi^1,\chi^2)\}$ the scattering matrix of the pair $\left(T(A,B), T(0,\mathds{1})\right)$ is given in terms of the generalized star product as
\begin{align*}
S(k^2)= \chi^1_{\Ee,\Ee}(k,ik) \ast_p V(\au_G,k)  \chi_{\Ee,\Ee}^2(k,ik), && V(\au_G,k)=\begin{bmatrix} e^{-k \au_G} & 0 \\ 0 & \mathds{1} \end{bmatrix}.
\end{align*}
Carrying this out gives 
\begin{align*}
S(k^2)=& \begin{bmatrix}s_{11}(k) & s_{12}(k) \\ s_{21}(k) & s_{22}(k)\end{bmatrix},
\end{align*}
where the blocks $s_{ij}(k)$, $i,j\in\{1,2\}$ of the scattering matrix are matrices with entries 
$$\left\{s_{ij}(k)\right\}_{n,m}=\left\{(s_{n,m}(k))\right\}_{n\in \Ee_+^i, m\in \Ee_+^j}.$$ 
One has
\begin{align*}
s_{11}=&\chi^1_{++} + \chi^1_{+-} e^{-k\au_G} \chi^2_{--} e^{-k\au_G}[\mathds{1} - \chi^1_{--}e^{-k\au_G}\chi^2_{--}e^{-k\au_G}]^{-1}\chi^1_{-+}, \\
s_{21}=& \chi^2_{+-}e^{-k\au_G}[\mathds{1}- \chi^1_{--}e^{-k\au_G}\chi^2_{--}e^{-k\au_G}]^{-1}\chi^1_{-+},\\
s_{12}=& \chi^1_{+-}[\mathds{1}- e^{-k\au_G}\chi^2_{--}e^{-k\au_G}\chi^1_{--}e^{-k\au_G}]^{-1}e^{-k\au_G}\chi^2_{-+},\\
s_{22}=& \chi^2_{++} + \chi^2_{+-} e^{-k\au_G} \chi^1_{--}e^{-k\au_G}[\mathds{1} - e^{-k\au_G}\chi^2_{--}e^{-k\au_G}\chi^1_{--}]^{-1}e^{-k\au_G} \chi^2_{-+},
\end{align*}
where the $k$--dependence is omitted and $\chi^j_{\pm,\pm}$ is short for $\chi^j_{\pm,\pm}(k,ik)$, $j=1,2$.
\end{proposition}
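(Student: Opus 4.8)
The plan is to compute the coefficient matrix of the generalized eigenfunctions of $T(A,B)$ on the glued graph $(\Ge,\au)$ directly out of those of $T(A_1,B_1)$ and $T(A_2,B_2)$, and then to invoke Theorem~\ref{smatrix}. Since in the gluing construction all negative external edges of $\Ge^1$ and of $\Ge^2$ are consumed, one has $\Ee_-=\emptyset$ on $\Ge$; hence $\sigma_{ac}(T(A,B))=[0,\infty)$ by Theorem~\ref{CompletnessWO}, and by Theorem~\ref{smatrix} the scattering matrix for positive energies is $S(k^2)=\chi_{++}(k,ik)$, the restriction to $\Ee_+=\Ee_+^1\cup\Ee_+^2$ of the coefficient matrix of the generalized eigenfunctions of $T(A,B)$ on $\Ge$. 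It therefore suffices to build, for each $l\in\Ee_+$, the generalized eigenfunction $\varphi_l(\cdot,k,ik)$ on $\Ge$ and to read off its outgoing amplitudes on $\Ee_+^1$ and on $\Ee_+^2$; assembling these data over $l\in\Ee_+^1$ and $l\in\Ee_+^2$ produces the claimed block matrix $(s_{ij})_{i,j\in\{1,2\}}$.

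The first step is the ansatz. For each piece, the restriction to the edges of $\Ge^j$ of any solution of $(\tau-k^2)u=0$ obeying the vertex conditions $(A_j,B_j)$ lies in an $\abs{\Ee^j}$--dimensional space spanned by $\{\varphi^j_e(\cdot,k,ik)\}_{e\in\Ee^j}$, valid for all $k>0$ outside the discrete resonance set of $(\Ge^j,\au^j)$ (this is the dimension count behind Lemma~\ref{zeroZ} applied to $(\Ge^j,\au^j)$); these excluded $k$'s are collected into $\Theta(\chi^1,\chi^2)$. Hence for $l\in\Ee_+^1$ one sets
\begin{align*}
\varphi_l\big|_{\Ge^1}=\varphi^1_l+\sum_{e\in\Ee_-^1}c^1_e\,\varphi^1_e,
\qquad
\varphi_l\big|_{\Ge^2}=\sum_{e'\in\Ee_-^2}c^2_{e'}\,\varphi^2_{e'},
\end{align*}
with unknown coefficients $c^1_e,c^2_{e'}$, the coefficient $1$ in front of $\varphi^1_l$ and the absence of the other $\varphi^1_j$ ($j\in\Ee_+^1$), respectively of all $\varphi^2_j$ ($j\in\Ee_+^2$), being forced by the requirement that the incoming amplitudes on $\Ee_+$ equal $\delta_{l\,\cdot}$. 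Such a $\varphi_l$ automatically solves $(\tau-k^2)\varphi_l=0$ on $\Ge$ and satisfies the vertex conditions at $V_1$ and $V_2$; the only thing left is to make the two descriptions agree along the connecting edges $\Ie_G$. For $l\in\Ee_+^2$ one proceeds symmetrically.

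This matching is the core of the argument. On a connecting edge $i_e\in\Ie_G$ the edge coordinate coincides with the $\Ge^1$--external coordinate of $e$ near $\partial(e)$ and with the \emph{reversed} coordinate $y=a_{i_e}-x$ of $G(e)\in\Ee_-^2$ near $\partial(G(e))$; under the reversal the decaying mode $e^{-kx}$ coming from the $\Ge^1$--side turns, up to the factor $e^{-ka_{i_e}}$, into the growing mode on the $\Ge^2$--side, and vice versa. Expressing the restrictions $\varphi^j_e|_{\Ie_G}$ through the blocks $\chi^j_{-+}$ and $\chi^j_{--}$ and equating the coefficients of $e^{+kx}$ and of $e^{-kx}$ on every $i_e$ produces a linear system for the vectors $c^1,c^2$ with a feedback structure of the form $c^1=e^{-k\au_G}\chi^2_{--}\,c^2$ and $c^2=e^{-k\au_G}\big(\text{source}_l+\chi^1_{--}\,c^1\big)$, the source being the $l$-th column of $\chi^1_{-+}$ (with $1\leftrightarrow 2$ when $l\in\Ee_+^2$); eliminating one of the two vectors leaves either $\mathds{1}-e^{-k\au_G}\chi^1_{--}e^{-k\au_G}\chi^2_{--}$ or $\mathds{1}-e^{-k\au_G}\chi^2_{--}e^{-k\au_G}\chi^1_{--}$ to be inverted, which is possible precisely for $k\notin\Xi_1(\chi^1,\chi^2,\au)\cup\Xi_2(\chi^1,\chi^2,\au)$. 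Substituting the solution back into the outgoing amplitudes of $\varphi_l$ on $\Ee_+^1$ (read off via $\chi^1_{++},\chi^1_{+-}$) and on $\Ee_+^2$ (via $\chi^2_{++},\chi^2_{+-}$), and using the identity $(\mathds{1}-XY)^{-1}X=X(\mathds{1}-YX)^{-1}$ to move the propagation factors $e^{-k\au_G}$ into the positions displayed, yields the formulas for $s_{11},s_{12},s_{21},s_{22}$; comparing the resulting block matrix with the definition of the generalized star product $\ast_p$ in \cite{KS2001} (in which the negative, i.e.\ glued, channels carry the exponential propagator $e^{-k\au_G}$ and the positive channels stay free, which is exactly $V(\au_G,k)$) identifies it with $\chi^1_{\Ee,\Ee}(k,ik)\ast_p V(\au_G,k)\,\chi^2_{\Ee,\Ee}(k,ik)$.

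I expect the main obstacle to be bookkeeping rather than anything conceptual: one must track with care the coordinate reversal on each glued edge, since it decides on which side of each feedback loop a factor $e^{-k\au_G}$ sits and thereby dictates the asymmetric placement of these factors in $s_{12}$ versus $s_{21}$ and the two inverse brackets $[\mathds{1}-\chi^1_{--}e^{-k\au_G}\chi^2_{--}e^{-k\au_G}]^{-1}$ versus $[\mathds{1}-e^{-k\au_G}\chi^2_{--}e^{-k\au_G}\chi^1_{--}]^{-1}$, and then one must match the resulting expressions term by term against the definition of $\ast_p$. Finally, once the spanning property of the $\varphi^j_e$ underlying the ansatz is checked off the resonance set, the identity is established for all $k>0$ outside the discrete set $\Xi_1\cup\Xi_2\cup\Theta$, which together with the existence of the scattering matrix guaranteed by Theorem~\ref{CompletnessWO} proves the Proposition.
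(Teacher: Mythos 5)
Your construction --- writing the generalized eigenfunctions of the glued operator $T(A,B)$ as linear combinations of the generalized eigenfunctions of the two pieces, matching the $e^{\pm kx}$ modes across each connecting edge (with the reversal factor $e^{-k\au_G}$), solving the resulting feedback system for $k\notin\Xi_1\cup\Xi_2\cup\Theta$, and reading off the scattering matrix via Theorem~\ref{smatrix} --- is exactly the argument the paper intends: its proof is omitted with the remark that it is completely analogous to \cite{KS2001} and rests precisely on this linear-combination/matching idea, and your elimination together with the push-through identity $(\mathds{1}-XY)^{-1}X=X(\mathds{1}-YX)^{-1}$ reproduces the stated blocks $s_{11},s_{12},s_{21},s_{22}$. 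So the proposal is correct and takes essentially the same route as the paper.
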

The proof is completely analogue to \cite{KS2001} and omitted here. It is based on the fact that the generalized eigenfunctions of the new problem can be obtained as linear combinations of the generalized eigenfunctions of the two original scattering problems. The restriction imposed on the energies for which the gluing formula is valid is due to the fact that, in contrast to the case of self--adjoint Laplacians on graphs, the relation between eigenvalues and resonances has not been clarified yet.

The formula in Proposition~\ref{glue} shows that for positive energies the matrix $V(\au,k)$ damps the scattering waves on $\Ie_G$ down. Let be $k_n>0$ with $$k_n\notin\{ \Xi_1(\chi^1,\chi^2, \au) \cup \Xi_2(\chi^1,\chi^2, \au)\cup \Theta(\chi^1,\chi^2)\}, \quad n\in \N,$$
and $k_n \to \infty$ for $n\to\infty$ and assume that the limits 
\begin{eqnarray}\label{SDiagonalblocks}
\lim_{n\to \infty} \chi^1_{++}(k_n) & \mbox{and}  &  \lim_{n\to \infty} \chi^2_{++}(k_n)
\end{eqnarray}
exist. As a direct consequence of Proposition~\ref{glue} one obtains that then
\begin{eqnarray*}
\lim_{n\to \infty}s_{12}(k_n)=0, &  &  \lim_{n\to \infty}s_{21}(k_n) = 0
\end{eqnarray*}
and 
\begin{eqnarray*}
\lim_{n\to \infty}s_{11}(k_n)=\lim_{n\to \infty} \chi^1_{++}(k_n), &  &  \lim_{n\to \infty}s_{22}(k_n) = \lim_{n\to \infty} \chi^2_{++}(k_n)
\end{eqnarray*}
which shows that the scattering matrix resembles for large energies the direct sum of the scattering matrices of the two building stones. However it is not clear that the limits in \eqref{SDiagonalblocks} do always exist.

\begin{remark}
Gluing together positive as well as negative edges can be done in two steps. First one glues the negative edges together and then one considers this graph and an auxiliary graph consisting of two edges with continuity boundary conditions. This is used to connect two positive edges. More precisely after having constructed $$S(k^2)= \chi_{\Ee,\Ee}^1(k,ik)\ast_p V(\au_G,k)  \chi^2_{\Ee,\Ee}(k,ik)$$ according to Proposition~\ref{glue} one considers the auxiliary graph $\Ge_{aux}=(V,\Ee,\partial)$, $\abs{V}=1$, $\abs{\Ee}=2$ and the Laplacian defined on it by the boundary conditions
\begin{eqnarray*}
A= \begin{bmatrix} 1 & -1 \\ 0 & 0 \end{bmatrix} &\mbox{and} & B= \begin{bmatrix} 0 & 0 \\ 1 & 1 \end{bmatrix}.
\end{eqnarray*}
This operator $-\Delta_{aux}=-\Delta(A,B)$ has the scattering matrix $S_{aux}(k^2)= \begin{bmatrix}
0 & 1 \\ 1 & 0 \end{bmatrix}$. Now one glues together two external edges of $\Ge$ and replaces them by an internal edge of length $a$. The scattering matrix of this new problem is 
\begin{eqnarray*}
S(k^2)\ast_2 \begin{bmatrix} e^{ik a} & 0 \\ 0 & 1 \end{bmatrix} \begin{bmatrix} 0 & 1 \\ 1 & 0 \end{bmatrix}. 
\end{eqnarray*}    
\end{remark}

\begin{example}\label{exscattering}
Consider again Example~\ref{ex2IndefQG}. The graph described there can be obtained also by gluing two star graphs, where each is a copy of the graph given in Example~\ref{ex1IndQG}. These are glued together along the negative edges, which gives a new negative internal edge of a certain length $\au=\{a\}$. The coefficient matrices $\Cm^j=\Cm^j(k,ik)$ are
\begin{eqnarray*}
\Cm^j=\begin{bmatrix}
\chi_{++}^j(k,ik) & \chi_{+-}^j(k,ik) \\
\chi_{-+}^j(k,ik) & \chi_{--}^j(k,ik)
\end{bmatrix}
= \begin{bmatrix}i & (1-i) \\ (1+i) & -i\end{bmatrix}, & \mbox{for } j=1,2,
\end{eqnarray*}
see Example~\ref{exCm}. Using the gluing formula from Proposition~\ref{glue} with new edge length $a$ one obtains 
\begin{align*}
s_{11}(k)=&\chi^1_{++} + \chi^1_{+-} e^{-ka} \chi^2_{--} e^{-ka}[\mathds{1} - \chi^1_{--}e^{-ka}\chi^2_{--}e^{-ka}]^{-1}\chi^1_{-+} \\
=& i \tanh(ka) 
\end{align*}
and
\begin{align*}
s_{21}(k)=& \chi^2_{+-}e^{-ka}[\mathds{1}- \chi^1_{--}e^{-ka}\chi^2_{--}e^{-ka}]^{-1}\chi^1_{-+}\\
=& \frac{1}{\cosh(ka)}.
\end{align*}
This is well defined for any $k>0$. For symmetry reasons one has $s_{22}(k)=s_{11}(k)$ and $s_{21}(k)=s_{12}(k)$. Applying the gluing formula yields the same result as the direct calculation in Example~\ref{ex2IndefQG}. Note that the limits
\begin{eqnarray*}
\lim_{k\to 0} S(k^2) = \begin{bmatrix}0 & 1 \\ 1& 0 \end{bmatrix} &\mbox{and} & \lim_{k\to \infty} S(k^2) = \begin{bmatrix}i & 0 \\ 0 & i \end{bmatrix}
\end{eqnarray*}
exist. For small energies there is full transmission, whereas for large energies there is full reflection along with a phase shift. In addition, the scattering matrix resembles for large energies the direct sum of the scattering matrices of the two building stones, compare Example~\ref{S1}. This is a feature not known for scattering systems involving only semi--bounded operators. 
\end{example}

\subsection{Scattering and cloaking phenomena?}
In \cite{Bouchitte} the light propagation through metamaterials is described using regularizations of an indefinite operator 
\begin{align*}
\tau u= -\div A(\cdot)\grad u, &&  A(x)= \begin{cases}+1, & x\in\Omega_+, \\  -1, & x\in\Omega_-,     \end{cases}
\end{align*}
where
\begin{align*}
&\Omega_+= \{x\in \R^2 \mid \norm{x}\leq R_1\} \cup \{x\in \R^2 \mid \norm{x}\geq R_2\} \  \mbox{and} \\
 &\Omega_-=\{x\in \R^2 \mid R_1 < \norm{x} <R_2   \}
\end{align*}
with $R_2>R_1>0$. The quantity computed there for the measurement of the light propagation is the Dirichlet--to--Neumann--map for certain exterior domains. In the setting of the Sturm--Liouville theory the Dirichlet--to--Neumann--map is given by the Weyl--Titchmarsh $m$--function, which is also closely related to the scattering matrix. A generalization of the Weyl--Titchmarsh $m$--function to partial differential operators on exterior domains has been given recently in \cite{BehrndtRohleder}. As in the one dimensional case the generalization of the $m$--function is again the Dirichlet--to--Neumann--map. One can suspect that the generalization of the $m$--function is also closely related to the scattering matrix and to inverse scattering problems. 

The analysis of the Dirichlet--to--Neumann--map in \cite{Bouchitte} leads G.~Bouchitt{\'e} and B.~ Schweizer to the interpretation that in the considered situation a cloaking phenomenon is taking place. Cloaking simplified means that there is an object in the system that can be made invisible for an observer looking at the system from outside. Of course, one can make any object invisible by putting it into a kind of box and decoupling the observer from the object of interest inside. Cloaking means rather that the measurement of the observer is the same in both cases, when the object is present and when it is not. 

Since at least in dimension $d=1$ the Dirichlet--to--Neumann--map which is the measured quantity in \cite{Bouchitte} is related to scattering properties of the system it is reasonable to consider also scattering problems involving indefinite operators. The gluing formula given in Proposition~\ref{glue} allows the qualitative analysis of the scattering properties of a quasi--one--dimensional system involving compact components, on which the symbol of the operator is not sign--definite. In particular one can observe that even very small negative components have strong influence on the scattering matrix and yield to new features as the one discussed in Example~\ref{exscattering}. However the author has not been not able to identify cloaking phenomena in the system considered. For a further investigation one can think of studying the asymptotic behaviour of the scattering matrix for small and for large energies. It is not self--evident that these limits exist in all cases.

\section{Summary and outline}
An explicitly solvable model problem for second order differential operators with (discontinuous) sign--changing but invertible symbol has been introduced. All self--adjoint extensions have been parametrized using methods from extension theory. The crucial observation of this note is that there is a one--to--one correspondence between the parametrization of the self--adjoint realizations of the model operator $\tau$ which acts as ``plus Laplace'' or ``minus Laplace'' and the parametrization of the self--adjoint realizations of the positive Laplacian. This can be used to distinguish one realization among the set of realizations. This is done in two steps. First one considers the natural self--adjoint realization of the Laplace operator. This is defined by the so called standard or Kirchhoff boundary conditions and it can be parametrized using a unitary map in the deficiency spaces of the minimal Laplacian. Taking now the same unitary map to parametrize the extensions of the model operator $\tau$ gives a distinguished self--adjoint extensions of $T^{\min}$. It turns out that this realization is exactly the natural realization of $\tau$, which is also obtained using indefinite quadratic forms. Natural means that the domain of the operator is the natural domain for the composition $-\tfrac{d}{dx} A(\cdot)\tfrac{d}{dx}$. 

The conjecture is that this holds in more general situations and that this allows to show the self--adjointness of sign--indefinite operators of the type $-\div A(\cdot)\grad$ in their natural domain in $L^2(\Omega)$ with $A(x)$ elliptic for $x\in \Omega_+$ and $-A(x)$ elliptic for $x\in \Omega_-$, where $\Omega= \Omega_+ \cup \Omega_-$. In a very particular radially symmetric situation this has been verified. The scheme presented here at least gives a distinguished self--adjoint realization of such operators and for the model problem with finite deficiency indices the conjecture is indeed true. 

The spectrum of the self--adjoint realizations of the model operator resembles the spectrum of $-\Delta$ and $+\Delta$, where $-\Delta$ denotes the Laplacian on certain subgraphs. Using explicit formulae for the resolvents, the wave operators and the scattering matrix have been calculated. Even if there are only very small components where the symbol of the operator is not sign--definite this has strong influence on the qualitative behaviour of the scattering matrix. Particularly the scattering matrix exhibits new features and does not just resemble the scattering matrix of a direct sum of sign--definite operators. 

It constitutes an open problem whether the sign--indefinite operators that are considered in this paper arise naturally as limits of certain operators on domains or manifolds of dimension greater than one when the ``thickness'' shrinks to zero. For Laplacians on graphs there are such results, see \cite{PostExner05, Grieser} and the references therein. In the mentioned articles it has been shown that some self--adjoint Laplacians on finite metric graph are obtained as the limit of a Laplacian on a ``thick graph'' which is a thin neighbourhood in $\R^n$ of the original graph. For Laplace operators on twisted tubes similar results have been obtained recently, see \cite{David}. In particular it has been shown that the limit of such operators yields certain Schr\"{o}dinger operators on the real line.

\appendix
\section{}\label{AppA}

\begin{proof}[Proof of Proposition~\ref{resolvent}]

In order to prove that the kernel $r(\cdot,\cdot,k)$ defines the resolvent operator $(T(A,B)-k^2)^{-1}$ one has to check:
\begin{itemize}
\item[(i)] $(T^{\max}-k^2) \int_{\Ge} r(x,y,k)\varphi(y)=\varphi(x)$ for all $\varphi\in \He$, 
\item[(ii)] $R(k^2)\varphi=\int_{\Ge} r(x,y,k)\varphi(y)\in \Dom(T(A,B))$ for $\varphi\in\He$ and
\item[(iii)]  $r(x,y,-\overline{k})=r(y,x,k)^*$ holds for $\Im k>0$.
\end{itemize}
The statements (i) and (ii) prove that $R(k^2)$ is the left inverse of $T(A,B)-k^2$. To prove that it is also the right inverse it is sufficient to verify (iii). The proofs of (i) and (ii) are given for $k\in \Qe$ with (iii) the claim carries over to $k\in \Pe$.

Proof of (i): For $\varphi\in \He$ with $\varphi_j\in C_0^{\infty}(I_j,\C)$ for every $j\in \Ee\cup \Ie$, one uses that the Green's function to the problem is known and one has 
\begin{align*}
\left( -\frac{d^2}{dx^2} -k^2\right) \int_{I_j} \frac{i}{2k} e^{ik \abs{x_j-y_j}} \varphi_j(y_j)dy_j =&\varphi_j(x_j), && \mbox{for } k\in \Qe\cup \Pe, \\
-\left( +\frac{d^2}{dx^2} -k^2\right) \int_{I_j} \frac{1}{2k} e^{-k \abs{x_j-y_j}} \varphi_j(y_j)dy_j =&\varphi_j(x_j), && \mbox{for } k\in \Qe \ \mbox{and} \\
-\left( +\frac{d^2}{dx^2} -k^2\right) \int_{I_j} \frac{1}{2k} e^{k \abs{x_j-y_j}} \varphi_j(y_j)dy_j =&\varphi_j(x_j), && \mbox{for } k\in \Pe. 
\end{align*}
Since the kernel $r^0_{n,m}(x,y,k,i\kappa)$ defines a bounded operator in $\He$, by continuous continuation from 
\begin{align*}
\De^{\prime}= \bigoplus_{j\in \Ee \cup \Ie}  C_0^{\infty}(I_j, \C)
\end{align*}
to $\He$ it follows that for $k\in \Qe$ and for all $\varphi \in \He$ 
$$\left(T^{\max}-k^2\right) \int_{\Ge}  r^0(x,y,k,ik)\varphi(y)dy =\varphi(x)$$
holds. Observe furthermore that 
$$\left(T^{\max}-k^2\right) \int_{\Ge}  r^1(x,y,k,ik)\varphi(y)dy =0$$
which completes the proof.

Proof of (ii): Observe that 
\begin{eqnarray*}
\int_{\Ge} r^0(x,y,k,i\kappa)\varphi(y)\in \De &\mbox{and} & \int_{\Ge} r^1(x,y,k,i\kappa)\varphi(y)\in \De, 
\end{eqnarray*}
and hence $$R(k^2)\varphi=\int_{\Ge} r(\cdot,y,k)\varphi(y) \in \De=\Dom(T^{\max}).$$ Let be $k\in \Qe$ and consider the traces for the free Green's function $r^0(\cdot,\cdot,k,i\kappa)$. One has 
\begin{align*}
\underline{\int_{\Ge}r_0(x,y,k,i\kappa)f(y)}=& R_{n,m}^+(k,i\kappa,\au)^{-1}J_{n,m} \hat{f}(k,i\kappa), \\ \underline{\left(\int_{\Ge}r_0(x,y,k,i\kappa)f(y)\right)}^{\prime}=& I_{n,m}(k,i\kappa) R_{n,m}^+(k,i\kappa,\au)^{-1} J_{n,m}\hat{f}(k,i\kappa)
\end{align*}
with 
$$\hat{f}(k,i\kappa):= \int_{\Ge} \Phi_{n,m}(y,k,i\kappa) W_{n,m}(k,i\kappa) f(y) dy.$$
Since the free Green's function decouples the positive from the negative edges the above statement follows already from the corresponding calculation for self--adjoint Lapalce operators, compare \cite[Proof of Lemma 4.2]{VKRS2006}

For the correction term $r^1(\cdot,\cdot,k,i\kappa)$ observe that for an appropriate vector $f$ one has the traces
\begin{align*}
\underline{\begin{bmatrix} e^{ikx} & 0 & 0 \\ 0 & e^{ikx} & e^{-ikx} \end{bmatrix}\begin{bmatrix} f_{\Ee} \\ f_{\Ie_-} \\ f_{\Ie_+} \end{bmatrix}}= \begin{bmatrix} 1 & 0 & 0 \\ 0 & 1 & 1 \\ 0 & e^{ik\au} & e^{-ik\au} \end{bmatrix}\begin{bmatrix} f_{\Ee} \\ f_{\Ie_-} \\ f_{\Ie_+}  \end{bmatrix},
\end{align*}
\begin{align*}
\underline{\begin{bmatrix} e^{ikx} & 0 & 0 \\ 0 & e^{ikx} & e^{-ikx} \end{bmatrix}\begin{bmatrix} f_{\Ee} \\ f_{\Ie_-} \\ f_{\Ie_+} \end{bmatrix}}^{\prime}= \begin{bmatrix} ik & 0 & 0 \\ 0 & ik & -ik \\ 0 & -ik e^{ik\au} & ik e^{-ik\au} \end{bmatrix}\begin{bmatrix} f_{\Ee} \\ f_{\Ie_-} \\ f_{\Ie_+}  \end{bmatrix},
\end{align*}
and therefore
\begin{align*} 
\underline{\Phi_{n,m}(x,k,i\kappa)\hat{f}}= X_{n,m}(k,i\kappa)\hat{f}, && \underline{\left(\Phi_{n,m}(x,k,i\kappa)\hat{f}\right)}^{\prime}= Y_{n,m}(k,i\kappa)\hat{f},
\end{align*}
where $\hat{f}$ is here short for $\hat{f}(k,i\kappa)$, compare also \cite[Proof of Lemma 4.2]{VKRS2006}. This gives the traces of the correction term
\begin{align*}
\underline{\int_{\Ge}r_1(x,y,k,i\kappa)f(y)}= 
 X_{n,m}(k,i\kappa)G_{n,m}(k,i\kappa,\au) \hat{f}
\end{align*}
and
\begin{align*} 
 \underline{\left(\int_{\Ge}r_1(x,y,k,i\kappa)f(y)\right)}^{\prime}=
  Y_{n,m}(k,i\kappa)G_{n,m}(k,i\kappa,\au)  \hat{f}. 
\end{align*}
Together one has
\begin{align*}
&A \underline{\int_{\Ge}r^0(x,y,k,i\kappa)f(y)} + 
B  \underline{\left(\int_{\Ge}r^0(x,y,k,i\kappa)f(y)\right)}^{\prime}\\
+ &A \underline{\int_{\Ge}r^1(x,y,k,i\kappa)f(y)}
+ B\underline{\left(\int_{\Ge}r^1(x,y,k,i\kappa)f(y)\right)}^{\prime} \\ \\
= &A R_{n,m}^+(k,i\kappa,\au)^{-1}J_{n,m} \hat{f}+ B I_{n,m}(k,i\kappa) R_{n,m}^+(k,i\kappa,\au)^{-1} J_{n,m}\hat{f}\\
+ &A X_{n,m}(k,i\kappa)G_{n,m}(k,i\kappa,\au) \hat{f}+ B Y_{n,m}(k,i\kappa)G_{n,m}(k,i\kappa,\au)  \hat{f} \\ =&0,
\end{align*}
where one has used the definition of $G_{n,m}(k,i\kappa,\au)$ in Proposition~\ref{resolvent}. It follows that $$R(k^2)f = \int_{\Ge} r(\cdot,y,k)\varphi(y) \in \Dom(T(A,B)).$$

Proof of (iii): 
As $\frac{i}{2k}e^{ik \abs{x-y}}$ is the Green's function for the problem on $L^2(\R)$, the symmetry holds for $r^0(x,y,k,ik)$ with $k\in \Qe$. It remains to prove the symmetry of the correction term $r^1(x,y,k,ik)$. Substituting formula~\eqref{resonance} into the formula for the correction term for appropriate $k,\kappa$ gives
\begin{align*}
r^1(x,y,k,i\kappa)= 
  &\Phi_{n,m}(x,k,i\kappa) (R_{n,m}^+)^{-1}(k,i\kappa,\au) [\mathds{1} - \Cm(k,i\kappa) T_{n,m}(k,i\kappa) ]^{-1}\circ \\
\circ &\mathfrak{X}(k,i\kappa) (R_{n,m}^+)^{-1}(k, i\kappa) J_{n,m}\Phi_{n,m}(y,k,i\kappa)^T W_{n,m}(k, i\kappa).
\end{align*}
Notice that $W_{n,m}$ commutes with $R^+_{n,m}$ as well as with $\Phi_{n,m}$ and $J_{n,m}$. Consider now \\ $r(x,y,k,ik)$ for $k\in \Qe$. Using for the diagonal matrices $I_{n,m}$ and $J_{n,m}$ the equalities 
\begin{eqnarray*}
I_{n,m}^*J_{n,m}= I_{n,m} & \mbox{and} & I_{n,m}J_{n,m}= I_{n,m}^*
\end{eqnarray*}
one gets 
\begin{align*}
&\left\{[\mathds{1} - \Cm(k,ik) T_{n,m}(k,ik) ]^{-1} \Cm(k,ik)  W_{n,m}(k,ik) J_{n,m}\right\}^*  \\
= &\left\{[\mathds{1} - \Cm(k,ik) T_{n,m}(k,ik) ]^{-1} \Cm(k,ik)  W_{n,m}(k,ik) J_{n,m}\right\}^*  \\
=&  J_{n,m} W_{n,m}(k,ik)^*  \Cm(k,ik)^* [\mathds{1} -  T_{n,m}(k,ik)^* \Cm(k,ik)^* ]^{-1} \\
=& \frac{i}{-2\overline{k}}I_{n,m} J_{n,m}\Cm(k,ik)^* [\mathds{1} -  T_{n,m}(k,ik)^* \Cm(k,ik)^* ]^{-1} \\
=& \frac{i}{-2\overline{k}} 
[\mathds{1} -  (I_{n,m}^* \Cm(k,ik)^*) I_{n,m} (T_{n,m}(k,ik)^* ) ]^{-1} (I_{n,m}^*\Cm(k,ik)^*) 
\end{align*}
Using the formulas given in Lemma~\ref{cmsym} one continues as follows
\begin{align*}
&\frac{i}{-2\overline{k}} 
[\mathds{1} -  (I_{n,m}^* \Cm(k,ik)^*) I_{n,m} (T_{n,m}(k,ik)^* ) ]^{-1} (I_{n,m}^*\Cm(k,ik)^*) \\
=&\frac{i}{-2\overline{k}}[\mathds{1} -  \Cm(-\overline{k},i\overline{k}) T_{n,m}(k,ik)^*   ]^{-1} \Cm(-\overline{k},i\overline{k}) I_{n,m}^* \\
=&[\mathds{1} -  \Cm(-\overline{k},i\overline{k}) T_{n,m}(-\overline{k},i\overline{k})   ]^{-1} \Cm(-\overline{k}, i\overline{k}) W_{n,m}(-\overline{k},i\overline{k})J_{n,m}\\
=&[\mathds{1} -  \Cm(-\overline{k}, i\overline{k}) T_{n,m}(-\overline{k},i\overline{k})   ]^{-1} \Cm(-\overline{k}, i\overline{k}) W_{n,m}(-\overline{k},i\overline{k})J_{n,m}.
\end{align*}
Furthermore one has 
\begin{eqnarray*}
R_{n,m}^+(k,ik,\au)^*= R_{n,m}^+(-\overline{k},i\overline{k},\au) & \mbox{and} & \Phi(x,k,ik)^*= \Phi(x,-\overline{k},i\overline{k}), 
\end{eqnarray*}
compare also \cite[Proof of Lemma 4.2]{VKRS2006}. Putting the pieces together one obtains
\begin{eqnarray*}
r^1(k,ik,y,x)^* = r^1(-\overline{k},i\overline{k},x,y) & \mbox{for } k\in  \Qe 
\end{eqnarray*}
which proves the claim.
\end{proof}

\section{}\label{AppB}
\begin{proof}[Continuation of the proof of Proposition \ref{wo}]

The computation of the wave operators is supplemented. One computes for $k^2=\lambda>0$
\begin{align*}
\lim_{\epsilon \to 0}-i\epsilon R(k^2\pm i\epsilon) dE_0(k)f=  \lim_{\epsilon \to 0} -i\epsilon\int_{\Ge}r(x,y,\sqrt{k^2\pm i\epsilon}) \begin{bmatrix} \cos(k x)\hat{f}_+(k) \\ 0  \end{bmatrix}   dy
\end{align*}
with $\{\cos(k x)\hat{f}_+(k)\}_{j\in \Ee_+}= \cos(k x_j)\hat{f}_j(k)_{j\in \Ee_+}$ and with  $0$ is meant the zero on the rest of the components. For $-\kappa^2=\lambda<0$ one computes
\begin{align*}
\lim_{\epsilon \to 0}-i\epsilon R(-\kappa^2\pm i\epsilon) dE_0(\kappa)f=  \lim_{\epsilon \to 0} -i\epsilon\int_{\Ge}r(x,y,\sqrt{-\kappa^2\pm i\epsilon}) \begin{bmatrix} 0 \\ \cos(\kappa x)\hat{f}_-(\kappa)   \end{bmatrix}  dy
\end{align*}
with $\{\cos(\kappa x)\hat{f}_-(\kappa)\}_{j\in \Ee_-}= \cos(\kappa x_j)\hat{f}_j(\kappa)_{j\in \Ee_-}$ and again $0$ denotes the zero on the rest of the components. 

Fix first the branch of the complex square root 
\begin{eqnarray*}
\sqrt{\cdot}\colon \C \setminus [0,\infty) \rightarrow \C^+ & \mbox{with } \Im \sqrt{\cdot}>0,
\end{eqnarray*}
where $\C^+=\{z\in \C \mid \Im z >0  \}$. Consider the limit values  
\begin{eqnarray*}
k_{\epsilon}^+= \sqrt{k^2 + i\epsilon}, &   &  k_{\epsilon}^-= \sqrt{k^2 - i\epsilon}, \\
\kappa_{\epsilon}^+= \sqrt{-\kappa^2 + i\epsilon}& \mbox{and} &  \kappa_{\epsilon}^-= \sqrt{-\kappa^2 - i\epsilon}.
\end{eqnarray*}
Taking the limit $\epsilon \to 0+$ for $k>0$ and $\kappa >0$, respectively gives
\begin{eqnarray*}
\lim_{\epsilon \to 0+} k_{\epsilon}^+ = k, &   & \lim_{\epsilon \to 0+}  k_{\epsilon}^-= -k, \\
\lim_{\epsilon \to 0+} \kappa_{\epsilon}^+ = i\kappa &  \mbox{and} & \lim_{\epsilon \to 0+}  \kappa_{\epsilon}^-= i\kappa.
\end{eqnarray*}
Observe that for $\epsilon \to 0+$ with fixed $k>0$ and $\kappa>0$ one has asymptotically
\begin{eqnarray*}
\lim_{\epsilon \to 0+} \left(\kappa_{\epsilon}^+ - i\kappa\right) \sim \frac{\epsilon}{2\kappa}, 
& & \lim_{\epsilon \to 0+} \left(\kappa_{\epsilon}^- - i\kappa\right) \sim \frac{\epsilon}{2\kappa},\\
\lim_{\epsilon \to 0+} \left(k_{\epsilon}^+ - k\right) \sim \frac{i\epsilon}{2k}
& \mbox{and} & \lim_{\epsilon \to 0+} \left(k_{\epsilon}^- + k\right) \sim \frac{i\epsilon}{2k}.
\end{eqnarray*}

The following auxiliary calculations are needed,
\begin{align*}
&-i \epsilon \int_{0}^{\infty} e^{ik_{\epsilon}^+ \abs{x-y}} \cos(ky) dy= \\ 
&\frac{-i\epsilon}{2} \int_{0}^{x} e^{ik_{\epsilon}^+ (x-y) + ik y } +  \frac{-i\epsilon}{2} \int_{x}^{\infty} e^{ik_{\epsilon}^+ (y-x) + ik y }  dy \\ 
+ &\frac{-i\epsilon}{2} \int_{0}^{x} e^{ik_{\epsilon}^+ (x-y) - ik y } + \frac{-i\epsilon}{2} \int_{x}^{\infty} e^{ik_{\epsilon}^+ (y-x) - ik y }  dy
= \\
&\frac{-i\epsilon}{2i(k-k_{\epsilon}^+)}  e^{ik_{\epsilon}^+ x} \left(e^{i(k-k_{\epsilon}^+)x } -1 \right) 
+  \frac{-i\epsilon}{2i (k_{\epsilon}^+ +k)} e^{-ik_{\epsilon}^+ x}  \left(-e^{i(k_{\epsilon}^+ +k) x}  \right) \\ 
+&\frac{-i\epsilon}{-2i(k_{\epsilon}^++k)}  e^{ik_{\epsilon}^+ x} \left(e^{-i(k_{\epsilon}^++k)x } -1 \right) 
+\frac{-i\epsilon}{2i(k_{\epsilon}^+-k)}  e^{-ik_{\epsilon}^+ x} \left(-e^{i(k_{\epsilon}^+-k)x }\right).
\end{align*}
In the limit $\epsilon \to 0+$ all summands of the right hand side except the last one vanish. The last term becomes
$-ik e^{-ik x}$. Consequently
\begin{align*}
\lim_{\epsilon\to 0+}-i \epsilon \int_{0}^{\infty} e^{ik_{\epsilon}^+ \abs{x-y}} \cos(ky) dy= -ik e^{-ik x}.
\end{align*}
Analogously one obtains
\begin{align*}
&-i \epsilon \int_{0}^{\infty} e^{ik_{\epsilon}^+ y} \cos(ky) dy= \\
&\frac{-i\epsilon}{2} \int_{0}^{\infty} e^{ik_{\epsilon}^+ y + ik y } +  \frac{-i\epsilon}{2} \int_{0}^{\infty} e^{ik_{\epsilon}^+ y - ik y }  dy = \\
  &\frac{-i\epsilon}{2i (k_{\epsilon}^+ +k)}   \left(-1 \right) +\frac{-i\epsilon}{2i (k_{\epsilon}^+ -k)}   \left(-1 \right). 
\end{align*}
In the limit $\epsilon \to 0+$ only the second addend of the right hand side remains and takes the value $-ik$. Consequently
\begin{align*}
\lim_{\epsilon\to 0+} -i \epsilon \int_{0}^{\infty} e^{ik_{\epsilon}^+ y} \cos(ky) dy= -ik.
\end{align*}
Now the same calculation with $k_{\epsilon}^-$ yields
\begin{align*}
&-i \epsilon \int_{0}^{\infty} e^{ik_{\epsilon}^- \abs{x-y}} \cos(ky) dy= \\ 
&\frac{-i\epsilon}{2} \int_{0}^{x} e^{ik_{\epsilon}^- (x-y) + ik y } +  \frac{-i\epsilon}{2} \int_{x}^{\infty} e^{ik_{\epsilon}^- (y-x) + ik y }  dy \\ 
+ &\frac{-i\epsilon}{2} \int_{0}^{x} e^{ik_{\epsilon}^- (x-y) - ik y } + \frac{-i\epsilon}{2} \int_{x}^{\infty} e^{ik_{\epsilon}^- (y-x) - ik y }  dy
= \\
&\frac{-i\epsilon}{2i(k-k_{\epsilon}^-)}  e^{ik_{\epsilon}^- x} \left(e^{i(k-k_{\epsilon}^-)x } -1 \right) 
+  \frac{-i\epsilon}{2i (k_{\epsilon}^- +k)} e^{-ik_{\epsilon}^- x}  \left(-e^{i(k_{\epsilon}^- +k) x}  \right) \\ 
+&\frac{-i\epsilon}{-2i(k_{\epsilon}^-+k)}  e^{ik_{\epsilon}^- x} \left(e^{-i(k_{\epsilon}^-+k)x } -1 \right) 
+\frac{-i\epsilon}{2i(k_{\epsilon}^--k)}  e^{-ik_{\epsilon}^- x} \left(-e^{i(k_{\epsilon}^- -k)x }\right).
\end{align*}
In the limit all terms on the right hand side except the second term $$\frac{-i\epsilon}{2i (k_{\epsilon}^- +k)} e^{-ik_{\epsilon}^- x}  \left(-e^{i(k_{\epsilon}^- +k) x}  \right)$$ vanish, which becomes
$-ik e^{ik x}$. Consequently 
\begin{align*}
\lim_{\epsilon\to 0+} -i \epsilon \int_{0}^{\infty} e^{ik_{\epsilon}^- \abs{x-y}} \cos(ky) dy= -ik e^{ik x}.
\end{align*}

Analogously one obtains
\begin{align*}
-i \epsilon \int_{0}^{\infty} e^{ik_{\epsilon}^- y} \cos(ky) dy&= 
\frac{-i\epsilon}{2} \int_{0}^{\infty} e^{ik_{\epsilon}^- y + ik y } +  \frac{-i\epsilon}{2} \int_{0}^{\infty} e^{ik_{\epsilon}^- y - ik y }  dy  \\
  &=\frac{-i\epsilon}{2i (k_{\epsilon}^- +k)}   \left(-1 \right) +\frac{-i\epsilon}{2i (k_{\epsilon}^- -k)}   \left(-1 \right). 
\end{align*}
In the limit  $\epsilon \to 0+$ only the second term remains and takes the value $-ik$. Consequently 
\begin{align*}
\lim_{\epsilon\to 0+} -i \epsilon \int_{0}^{\infty} e^{ik_{\epsilon}^- y} \cos(ky)= -ik.
\end{align*}

Taking into account the above auxiliary calculations yields
\begin{align*}
&\lim_{\epsilon \to 0+} -i \epsilon \int_{\Ge}r(x,y,\sqrt{k^2+ i\epsilon}) \begin{bmatrix} \cos(k x)\hat{f}(k) \\ 0  \end{bmatrix}   dy \\
=&\lim_{\epsilon \to 0+} -i \epsilon\int_{\Ge}r^0(x,y,\sqrt{k^2+ i\epsilon},i \sqrt{k^2+ i\epsilon}) \begin{bmatrix} \cos(k x)\hat{f}(k) \\ 0  \end{bmatrix}   dy \\
 +&\lim_{\epsilon \to 0+} -i \epsilon\int_{\Ge}r^1(x,y,\sqrt{k^2+ i\epsilon}, i\sqrt{k^2+ i\epsilon}) \begin{bmatrix} \cos(k x)\hat{f}(k) \\ 0  \end{bmatrix}   dy \\
= &\frac{1}{2}\begin{bmatrix} e^{-ikx}\hat{f}(k) \\ 0  \end{bmatrix}+
\frac{1}{2} \Phi_{n,m}(x,k,ik)G_{n,m}(k,ik) \begin{bmatrix} \hat{f}(k) \\ 0  \end{bmatrix} \\
=&\frac{1}{2} \sum_{l\in\Ee_+} \varphi_l(x,k,ik)\hat{f}_l(k). 
\end{align*}
Analogue
\begin{align*}
&\lim_{\epsilon \to 0+} \int_{\Ge}r(x,y,\sqrt{k^2- i\epsilon}) \begin{bmatrix} \cos(k x)\hat{f}(k) \\ 0  \end{bmatrix}   dy \\
= &\lim_{\epsilon \to 0+} \int_{\Ge}r^0(x,y,\sqrt{k^2- i\epsilon},-i\sqrt{k^2- i\epsilon}) \begin{bmatrix} \cos(k x)\hat{f}(k) \\ 0  \end{bmatrix}   dy \\
 + &\lim_{\epsilon \to 0+} \int_{\Ge}r^1(x,y,\sqrt{k^2- i\epsilon},-i\sqrt{k^2- i\epsilon}) \begin{bmatrix} \cos(k x)\hat{f}(k) \\ 0  \end{bmatrix}   dy \\
= &\frac{1}{2}\begin{bmatrix} e^{ikx}\hat{f}(k) \\ 0  \end{bmatrix}+
\frac{1}{2} \Phi_{n,m}(x,-k,ik)G_{n,m}(-k,ik) \begin{bmatrix} \hat{f}(k) \\ 0  \end{bmatrix}\\
=&\frac{1}{2} \sum_{l\in\Ee_+} \varphi_l(x,-k,ik)\hat{f}_l(k). 
\end{align*}

Using similar calculation gives
\begin{align*}
\lim_{\epsilon \to 0+} -i \epsilon\int_{\Ge}r(x,y,\sqrt{-\kappa^2+ i\epsilon}) \begin{bmatrix} 0 \\ \cos(\kappa x)\hat{f}(\kappa)  \end{bmatrix}   dy 
=\frac{1}{2} \sum_{l\in\Ee_-} \varphi_l(x,ik,-k)\hat{f}_l(k)
\end{align*}
and 
\begin{align*}
\lim_{\epsilon \to 0+} -i \epsilon\int_{\Ge}r(x,y,\sqrt{-\kappa^2- i\epsilon}) \begin{bmatrix} 0 \\ \cos(\kappa x)\hat{f}(\kappa)  \end{bmatrix}   dy 
=\frac{1}{2} \sum_{l\in\Ee_-} \varphi_l(x,ik,k)\hat{f}_l(k).
\end{align*}

Using 
\begin{equation*}
\displaystyle{ W_{\pm}f = \int_{-\infty}^{\infty} \lim_{\varepsilon \to 0+} - i \varepsilon  R(\lambda\mp i\varepsilon) dE_0(\lambda)f  ,}
\end{equation*}
one obtains that the kernel of the wave operators is given in terms of the generalized eigenfunctions.
\end{proof}




\end{document}